\newtcbox{\mymath}[1][]{%
    nobeforeafter, math upper, tcbox raise base,
    enhanced, colframe=blue!30!black,
    colback=blue!30, boxrule=1pt,
    #1}
\newcommand*\rel@kern[1]{\kern#1\dimexpr\macc@kerna}
\newcommand*\widebar[1]{%
  \begingroup
  \def\mathaccent##1##2{%
    \rel@kern{0.8}%
    \overline{\rel@kern{-0.8}\macc@nucleus\rel@kern{0.2}}%
    \rel@kern{-0.2}%
  }%
  \macc@depth\@ne
  \let\math@bgroup\@empty \let\math@egroup\macc@set@skewchar
  \mathsurround\z@ \frozen@everymath{\mathgroup\macc@group\relax}%
  \macc@set@skewchar\relax
  \let\mathaccentV\macc@nested@a
  \macc@nested@a\relax111{#1}%
  \endgroup
}
\newtheorem{problem}{Problem}
\newtheorem{theorem}{Theorem}
\newcommand{\spara}[1]{\paragraph{#1}}
\newcommand{\MSE}{\mathrm{MSE}}
\newcommand{\Polarization}{\mathrm{Polarization}}
\newcommand{\figspace}{\vspace{-1.0em}}
\newcommand{\RndTree}{\ensuremath{\mathsf{RndTree}}\xspace}
\newcommand{\RndCommunities}{\ensuremath{\mathsf{RndCommunities}}\xspace}
\newcommand{\GNP}{\ensuremath{\mathsf{GNP}}\xspace}
\newcommand{\Star}{\ensuremath{\mathsf{Star}}\xspace}
\newcommand{\Grid}{\ensuremath{\mathsf{Grid}}\xspace}
\newcommand{\bx}{\boldsymbol{x}}
\newcommand{\balpha}{\boldsymbol{\alpha}}
\newcommand{\bs}{\boldsymbol{s}}
\newcommand{\com}[1]{\hfill {\color{blue} /\!/ #1}}
\def\figwidth{0.4\linewidth}
\begin{document}


\title{Wiser than the Wisest of Crowds: \\ The Asch Effect and Polarization Revisited}

\author{Dragos Ristache \\ dragosr@bu.edu \\ Boston University \and Fabian Spaeh \\ fspaeh@bu.edu \\ Boston University \and  Charalampos E. Tsourakakis \\ ctsourak@bu.edu \\ Boston University}

\date{June 2024}

\maketitle

\begin{abstract}
In 1907, Sir Francis Galton independently requested 787 villagers to estimate the weight of an ox. Although none of them guessed the exact weight, the average estimate was remarkably accurate. This phenomenon is also known as wisdom of crowds. In a clever experiment, Asch employed actors to demonstrate the human tendency to conform to others' opinions. The question we ask is the following: what would Sir Francis Galton have observed if Asch had interfered by employing actors? Would the wisdom of crowds become even wiser or not? The problem becomes intriguing when considering the inter-connectedness of the villagers. This is the central theme of this work. Specifically, we examine a scenario where $n$ agents are interconnected and influence each other. The average of their innate opinions provides an estimator of a certain quality for an unknown quantity $\theta$. How can one improve or reduce the quality of the original estimator in terms of the mean squared error (MSE) by utilizing Asch's strategy of hiring a few stooges?

We present a new formulation of this problem, assuming that the nodes adjust their opinions according to the Friedkin-Johnsen opinion dynamics with susceptibility parameters~\cite{abebe2018opinion}. We demonstrate that selecting $k$ stooges for both maximizing and minimizing the MSE is NP-hard. Additionally, we demonstrate that our formulation is closely related to either maximizing or minimizing polarization~\cite{musco18} and present an NP-hardness proof. We propose a greedy heuristic that we implement efficiently, enabling it to scale to large networks. We test our algorithm on various synthetic and real-world datasets collected from Twitter against various baselines.
Although the MSE and polarization objectives differ, we find in practice that maximizing polarization often yields solutions that are nearly optimal for minimizing the wisdom of crowds in terms of MSE. Lastly, our analysis of real-world data reveals that even a small number of stooges can significantly influence the conversation surrounding the controversial topic of the war in Ukraine, resulting in a relative increase of the MSE of 207.80\% (maximization) or a decrease of 50.62\% (minimization).

\end{abstract}

\section{Introduction}
\label{sec:intro}
Opinion dynamics is a fascinating area of research that seeks to understand how individual beliefs and attitudes evolve over time within a social network. At its core, this field is driven by the desire to unravel the complex interplay between individual psychology, social influence, and network structure. By studying opinion dynamics, researchers aim to shed light on how opinions spread and converge within communities, how consensus is reached or polarization occurs, and how external factors such as media or political events can impact collective opinion. The insights gained from this research have profound implications for various domains, including politics, marketing, and public policy, as they can help predict and influence societal trends and behaviors. Researchers have developed a wide range of opinion dynamics models~\cite{proskurnikov2018tutorial} that aim to model how humans change their expressed opinions over time and have used them to optimize various objectives~\cite{biondi2023dynamics,zhu2022nearly,sun2023opinion,tang2021susceptible,zhu2021minimizing,gaitonde2020adversarial,musco18,abebe18,chen22,DBLP:journals/corr/abs-2206-08996}.

The concept of the wisdom of crowds highlights the collective intelligence that emerges when diverse groups of individuals come together to make decisions~\cite{surowiecki2005wisdom}. This phenomenon was first observed by Sir Francis Galton, who noted that the average guess (or according to some, the median guess) of 787 villagers regarding the weight $\theta$ of an ox was remarkably close to the actual value, as documented in the literature~\cite{wisdomofthecrowds2020}. At the same time it is also well known that humans influence each other. Solomon Asch conducted a set of seminal experiments in the study of social psychology and opinion dynamics~\cite{asch1955opinions}. These experiments aimed to investigate the extent to which social pressure from a majority group could influence an individual to conform. In a series of controlled experiments, participants were asked to match the length of a line on one card with one of three lines on another card. Unbeknownst to the subject, the other participants in the room were confederates of the experimenter and instructed to give incorrect answers unanimously in certain trials. The results showed that a significant proportion conformed to the majority's incorrect answer, even when the correct answer was obvious. Asch's experiments highlight the powerful influence of group conformity on individual judgment and decision-making, shedding light on the dynamics of opinion formation and change within social groups. At the same time, another phenomenon of interest to our work is polarization~\cite{haidt2022social} that represents a significant challenge in social systems, characterized by the division of opinions into distinct and often opposing camps. In social media, echo chambers refer to situations in which individuals are exposed primarily to opinions and information that reinforce their existing beliefs, leading to a narrowing of their perspectives and an increase in polarization. This phenomenon is often observed in online social networks and media platforms, where algorithms and social dynamics create filter bubbles that isolate users from diverse viewpoints~\cite{baumann2020modeling,chen2022polarizing}.

The goal of our work is to study the wisdom of crowds and polarization from an optimization perspective using a popular opinion dynamics model as the underlying opinion formation mechanism. Specifically, we examine the generalized Friedkin-Johnsen (FJ) model (c.f. Equation~\eqref{eq:fj} and \cite{abebe18}) that accommodates varying levels of susceptibility to persuasion among nodes. The selection of these two phenomena is not arbitrary; in fact, we demonstrate through experiments that according to the generalized Friedkin-Johnsen opinion dynamics model, maximizing polarization results in a significant decrease in the collective wisdom of the network agents, which is comparable to a near-optimal scenario. We also provide theoretical evidence why this is the case. Overall, in this work we make the following contributions:

\begin{itemize}
        \item We propose four new optimization problems that allow for the inclusion of $k$ stooges. The first two aim to maximize and minimize the mean squared error (MSE) of the average opinion value estimator at equilibrium, relative to the unbiased estimator of the average of innate opinions. The final two problems are dedicated to optimizing the polarization of opinions at equilibrium. Following the approach of Abebe et al.~\cite{abebe18}, our optimization variables are the resistance parameters, which encapsulate Asch's scenario of agents acting as stooges, with the flexibility to accommodate $k$ stooges in our model.

    \item We prove that all formulations are NP-hard. 
    
    \item We show that the problem is neither sub- nor supermodular and therefore not amenable to standard methods of convex optimization as other formulations in this line of research. Nonetheless, we provide an efficient greedy heuristic. To enable scalability, we provide a method for fast recomputation of the equilibirum opinions together with lazy evaluations.
    
    \item We run several experiments on synthetic and real-world data against natural baselines. This shows the merit of our heuristic, and provides interesting insights into the
    behavior of MSE and polarization, and their relationship under the FJ opinion dynamics. We observe that even for a small number of stooges, the relative increase or decrease achieved is substantial for both objectives across real-world datasets. For instance, in the case of 50 stooges on the Twitter network discussing real-world opinions about the Ukraine war, the mean squared error (MSE) can experience a relative increase of up to 207.80\% and a decrease of up to 50.62\%. 
    
\end{itemize}

\section{Related work} 
\label{sec:rel}
\spara{Models of Opinion Dynamics}~\cite{proskurnikov2018tutorial} explores how individual beliefs and attitudes evolve over time within a social network~\cite{acemouglu2013opinion}. These models aim to understand how interactions among individuals lead to the formation of collective opinions, consensus~\cite{de}, or polarization~\cite{musco18}. They often consider factors such as peer influence, social pressure~\cite{bindel2015bad}, and individual stubbornness or resistance~\cite{ghaderi2014opinion,abebe18} to change. By studying these dynamics, researchers can gain insights into the mechanisms behind opinion formation and change in various social contexts.

Friedkin and Johnsen~\cite{friedkin1990social} expanded the simple averaging French-DeGroot model~\cite{de,french1956formal} to include individuals' inherent beliefs and biases. Each node $v \in V$ represents an individual with an \emph{innate opinion} $s_u$ and an \emph{expressed opinion}. For each node $v$, the innate opinion $s_v \in [0,1]$ remains constant over time and is private, while the expressed opinion $x_v(t) \in [0,1]$ is public and changes over time $t \in \mathbb{N}_0$ due to social influence. Lower/higher values correspond to less/more favorable opinions on a given topic, and we use the convention that an opinion is neutral if it is equal to $\frac{1}{2}$. Initially, $x_v(0) = s_v$ for all users $v \in V$. At each subsequent time step $t > 0$, all users $v \in V$ update their expressed opinion $x_v(t)$ as the 
average of their innate opinion and the expressed opinions of their neighbors, as $x_v(t+1) = \frac 1 {1 + \deg(v)} (s_v + \sum_{v \in N(v)} x_u(t))$. As the number of rounds $t$ tends to $\infty$, the expressed opinions reach an equilibrium
$x^\star$ 
under mild conditions that are well understood~\cite{proskurnikov2018tutorial}. A popular variant of the FJ model is the generalized FJ model that allows for each node to have a varying degree of susceptibility to persuasion~\cite{cialdini2001science}. This model has been used by several papers~\cite{abebe18,ghaderi2014opinion,das2014modeling} and we discuss it in full detail in Section~\ref{sec:prelim}. Numerous other models of opinion dynamics exist, and interested readers can refer to the comprehensive survey by Proskurnikov and Tempo~\cite{proskurnikov2018tutorial} for an in-depth exploration.

\spara{Optimizing Objectives within the Friedkin-Johnsen Model} Although the concept of influence maximization has a rich history in discrete models, beginning with the groundbreaking work of Kempe, Kleinberg, and Tardos~\cite{kkt} (see also~\cite{richardson02,sadeh20}), the application of influence maximization in continuous opinion dynamics models has been largely overlooked, with the emphasis primarily on developing the models themselves. The seminal work of Gionis, Terzi, and Tsaparas inaugurated a new line of research~\cite{gionis13}. They tackle the challenge of optimizing the aggregate sum of opinions at equilibrium by selecting $k$ nodes and consistently fixing their expressed opinion to 1. Although their approach employs the conventional Friedkin-Johnsen (FJ) model as outlined in the previous paragraph, it inherently incorporates the concept of stubbornness, as the opinions of the chosen $k$ nodes remain unchanged.  Musco, Musco, and Tsourakakis~\cite{musco18} considered optimizing an objective that balances between disagreement and polarization at equilibrium. The disagreement of an edge is the squared difference of the equilibrium opinions of its endpoints, i.e., $d_{uv}(\bx^\star) = (x_u^\star - x_v^\star)^2.$ The total disagreement is defined as the sum of pairwise disagreements over all edges, i.e., $D_{G,\bx^\star} = \sum_{(u,v) \in E} d_{uv}(\bx^\star).$ Since then numerous other formulations along with algorithmic solutions have been derived~\cite{biondi2023dynamics,zhu2022nearly,sun2023opinion,tang2021susceptible,zhu2021minimizing}. For example, Gaitonde, Kleinberg, and Tardos~\cite{gaitonde2020adversarial}, along with Chen and Rácz et al. \cite{chen22,DBLP:journals/corr/abs-2206-08996}, explored budgeted adversarial interventions on inherent opinions. Their studies unveiled significant connections between spectral graph theory and opinion dynamics. Closest to our work lies the work of Abebe et al.~\cite{abebe18} who pioneered adversarial interventions at the susceptibility level using the generalized FJ model. They consider an unbudgeted optimization problem that is neither convex nor concave and designed a local search algorithm that reveals a remarkable structure on the local optima of the objective functions. Chan and Shan
proved the NP-hardness of the same problem subject to budget constraints~\cite{chan2021hardness}. 


\spara{Wisdom of Crowds (WoC)} is a concept popularized by James Surowiecki~\cite{surowiecki2005wisdom}, suggesting that under certain conditions, the aggregated judgment of a diverse group of individuals can be surprisingly accurate, even outperforming expert opinions. The phenomenon of the WoC has been recognized since the time of Galton~\cite{wisdomofthecrowds2020} and is considered the starting point for formal recognition of the WoC in the modern era despite earlier references in Aristotle's ``Politics," discussing the idea of collective judgment. Research has shown that the spread of misinformation online can significantly influence the WoC \cite{delVicario2016misinformation}. Additionally, exposure to ideologically diverse news and opinions on social platforms like Facebook has been studied to understand its impact on collective decision-making \cite{bakshy2015exposure}. Furthermore, experimental evidence suggests that social conventions can reach tipping points, beyond which collective behaviors can change dramatically \cite{centola2018experimental}. Several studies have examined how interventions impact the WoC, both from a theoretical perspective~\cite{golub2010naive,das2013debiasing,eger2016opinion,becker2017network,buechel2015opinion} and through empirical analysis using real-world data~\cite{becker2017network,madirolas2015improving}. Our work is most closely related to the recent study by Tian et al.~\cite{tian2023social}, which also examines the impact of social influence on Wisdom of Crowds (WoC) but employs a distinct statistical model with different optimization variables.

\section{Preliminaries} 
\label{sec:prelim}

\spara{Generalized Friedkin-Johnsen (FJ) Model.}
We consider the following generalized FJ model. There exists a set $V$ of \emph{agents}, where each agent $v \in V$ is associated with an \emph{innate opinion} $s_v \in [0,1]$. Each agent has a different level of propensity for conforming, captured by the {\it resistance} parameter $\alpha_v \in [0,1]$, which is also known as the stubbornness of node $v$~\cite{abebe18}.
Higher resistance values signify agents who are less susceptible to changing their opinion. The agents interact with one another in discrete time steps.
Consider an undirected graph $G = (V,E,W)$ 
represented by the column stochastic matrix $W=[w_{uv}]_{u,v\in V}$ with $w_{uv} \ge 0$.
We shall refer to $W$ as the {\it random walk matrix} or the {\it influence structure}, as $w_{uv}$ 
corresponds to how much node $u$ influences node $v$. 
The opinion dynamics evolve in discrete time $t=0,1,2,\ldots$ according to the following model~\cite{friedkin1990social,abebe2018opinion,das2014modeling}: 
\begin{equation}
\label{eq:fj}
\boxed{ x_v(0) = s_v, \quad
\textstyle
x_v(t+1) = \alpha_v s_v + (1-\alpha_v) \sum_{u} w_{uv} x_u(t).}
\end{equation}
for all $v \in V$.
It is known that this process converges
to a unique equilibrium \cite{abebe18} given some well understood conditions \cite{proskurnikov2018tutorial,ghaderi2014opinion},
and we therefore assume the existence
of a unique equilibrium throughout our work.
We denote the opinion of $v$
at equilibrium as
$x^\star_v = \lim_{t \to \infty} x_v(t)$.
We describe an instance of the
generalized FJ model via
the tuple $(W, \balpha, \bs)$
for $\balpha = (\alpha_v)_{v \in V}$
and $\bs = (s_v)_{v \in S}$.

\spara{Absorbing Random Walk Interpretation.} 
Analogously to \cite{gionis13}, we can interpret
the opinion formation process of Equation~\eqref{eq:fj}
via an absorbing random walk.
Let us use the random variable $X(t) \in V$ to
denote the state of the random walk at step $t$.
We start the random walk in state $X(0)$.
In each step $t$, we randomly decide
to stop the random walk and remain in state
$X(t)$
with probability $\alpha_{X(t)}$.
In this case, we write $X(\infty) = X(t)$
and stop the random process.
If we do not end the random walk, we transition
to another state $u$ with probability $w_{u, X(t)}$.
After the transition, we repeat the process
of deciding to end the random walk or continue
transitioning. We can use this interpretation to
characterize the
equilibrium opinion of each state $v \in V$ as
$x^\star_v = \mathbb E[s_{X(\infty)} \mid X(0) = v]$.
%

\section{Problem Definitions} 
\label{sec:dfn}

Let $\hat \theta = \frac 1 n \sum_{v \in V} s_v$
be the average innate opinion. Inspired by Galton's findings~\cite{wisdomofthecrowds2020} we assume that it is an unbiased estimator of the true quantity $\theta$. Let $\hat \theta^\star = \frac 1 n \sum_{v \in V} x_v^\star$
be the average opinion at equilibrium under the FJ dynamics. 
The \emph{mean squared error} (MSE) and the \emph{polarization}~\cite{musco18} for the equilibrium $\bx^\star$ are defined respectively as 
\begin{align}
    \label{eq:mse} \MSE &= \frac 1 n \sum_{v \in V} \big(x^\star_v - \hat \theta\big)^2,  \\
    \label{eq:polarization} \mathrm{Polarization} &= \frac 1 n \sum_{v \in V} \big(x^\star_v - \hat \theta^\star\big)^2 .
\end{align}
Consider an instance of opinion dynamics $(W, \balpha, \bs)$ under the generalized FJ model (Equation~\eqref{eq:fj}). We can set a node $u$ to be a stooge by setting its resistance value $\alpha_u$ to 0 or 1.
It is worth noting that provably, this may not be an optimal choice to optimize
either the MSE or the polarization, but it gives us a clear interpretation
of the stooge's role.
In our work, we consider these problems:


\begin{tcolorbox}
\begin{problem}[Optimizing MSE]
\label{prob:mse}  
Given $(W, \balpha, \bs)$ and an integer $k$, choose a set of $k$ stooges from the node set and their resistance values so that the MSE (Equation~\ref{eq:mse}) {\it at equilibrium} is maximized or minimized.
\end{problem}
\end{tcolorbox}

\begin{tcolorbox}
\begin{problem}[Optimizing Polarization]
\label{prob:polar}  
Given $(W, \balpha, \bs)$ and an integer $k$, choose a set of $k$ stooges from the node set and their resistance values so that polarization (Equation~\ref{eq:polarization}) {\it at equilibrium} is maximized or minimized.
\end{problem}
\end{tcolorbox}

\noindent
In other words, Problems~\ref{prob:mse} and~\ref{prob:polar} aim to identify a set of nodes in the network that will become stooges either for the common good (minimization) or for adversarial reasons (maximization) with respect to the altered equilibrium.
Upon a closer look on Equations~\eqref{eq:mse} and~\eqref{eq:polarization}, we can see how similar the
two measures are; the former is centered around the original estimator $\hat \theta$
while the latter around the equilibrium average $\hat\theta^\star$.
We will establish this similarity further for our hardness results
and in our experimental evaluation.
Recall also that we assume that the generalized FJ dynamics reach an equilibrium
both before and after selecting stooges,
as the problem of determining the existence of an equilibrium is already well-established in the
literature~\cite{proskurnikov2018tutorial,ghaderi2014opinion}.

\section{Hardness} 
\label{sec:hardness}


All four optimization formulations from Problems~\ref{prob:mse} and ~\ref{prob:polar} are NP-hard. 
Furthermore, we show in Section~\ref{sec:proposed} that they 
are neither sub- nor supermodular.
We first show the hardness of maximizing the polarization; the remaining proofs
follow with small adaptations.
In our problem, we pick a subset $S \subseteq V$ of
$|S| \le k$ stooges for which
we can change the resistance values $\alpha_u$
arbitrarily.
We use $\mathrm{Polarization}^\dagger$ to denote
the polarization of the equilibrium opinions
after changing the resistance
values of vertices in $S$.
We formulate a decision version of this problem:
Given an instance $(W, \balpha, \bs)$, an integer $k$ and a
threshold $\tau$, is there
a subset $S \subseteq V$ of stooges with $|S| \le k$
such that $\mathrm{Polarization}^\dagger \ge \tau$.
Analogous decision versions can be formulated
for the other three problems.
%

\begin{theorem}
\label{thm:pol-hardness}
Maximizing and minimizing the polarization (Prob.~\ref{prob:polar}) is NP-hard.
\end{theorem}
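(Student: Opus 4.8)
The plan is to prove NP-hardness by reduction from a well-known NP-complete problem—I would try \NPcomplete{} \textsc{Vertex Cover} (or, if the algebra proves cleaner, \textsc{Maximum Independent Set} or \textsc{Max-Cut}). Since we have the freedom to set the resistance $\alpha_u$ of each chosen stooge arbitrarily, the natural gadget is one where setting $\alpha_u = 1$ ``pins'' node $u$ to its innate opinion $s_u$ (it becomes fully stubborn), while the remaining nodes equilibrate according to Equation~\eqref{eq:fj}. The absorbing random walk interpretation from Section~\ref{sec:prelim} is the key tool: fixing $\alpha_u = 1$ turns $u$ into an absorbing state, so the equilibrium opinion $x_v^\star = \mathbb E[s_{X(\infty)} \mid X(0) = v]$ of every other node becomes a convex combination of the pinned innate opinions determined by hitting probabilities. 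This gives a clean, combinatorially controllable handle on how the choice of $S$ reshapes the $x_v^\star$, and hence the polarization $\frac{1}{n}\sum_v (x_v^\star - \hat\theta^\star)^2$.

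First I would design a graph gadget in which the polarization at equilibrium is a monotone (and ideally affine or otherwise analytically tractable) function of a combinatorial quantity of the selected set $S$—for instance, the number of covered edges or the size of a cut induced by pinning the stooges. Concretely, I would attach to each vertex of the \textsc{Vertex Cover} instance an opinion value of $0$ or $1$ arranged so that a node $v$ sits at (or near) the neutral value $\hat\theta^\star$ exactly when its neighborhood is \emph{not} appropriately covered, and gets driven toward an extreme value when it is. Choosing a threshold $\tau$ then lets me assert that $\mathrm{Polarization}^\dagger \ge \tau$ is achievable with $|S| \le k$ stooges if and only if the original instance has a vertex cover of size $\le k$. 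The two directions of the equivalence are argued by: (i) given a good cover, exhibit the corresponding stooge assignment and lower-bound the polarization via the induced spread of equilibrium opinions; and (ii) given stooges achieving $\tau$, argue that the only way to reach the threshold is to separate the opinion mass as a valid cover would, so extract a cover of size $\le k$.

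The main obstacle I anticipate is the global, non-local nature of the equilibrium: changing one $\alpha_u$ perturbs \emph{every} $x_v^\star$ through the matrix inverse $(I - (I-A)W^\top)^{-1}$, so the objective does not decompose cleanly over vertices or edges the way a local gadget would want. To tame this, I would engineer the gadget so that the influence matrix $W$ is nearly block-diagonal or has bounded mixing—e.g., using weakly-coupled clusters or long paths/stars whose hitting probabilities I can compute in closed form—so that pinning a stooge has an essentially localized, predictable effect on the polarization, with cross-terms bounded well below the threshold gap. Once the hardness of maximization is established, the remaining three results (minimizing polarization, and maximizing/minimizing the MSE) follow ``with small adaptations'' as the excerpt promises: minimization uses the complementary threshold ($\le \tau$), and the MSE variants exploit the structural near-identity of Equations~\eqref{eq:mse} and~\eqref{eq:polarization}—the only difference being the fixed center $\hat\theta$ versus the moving center $\hat\theta^\star$—so the same gadget can be rebalanced (e.g. by symmetrizing the innate opinions so that $\hat\theta = \hat\theta^\star$) to transfer the reduction.
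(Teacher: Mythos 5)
Your proposal follows the same high-level route as the paper's proof --- a reduction from vertex cover in which the chosen stooges are pinned with $\alpha_u = 1$ and the absorbing random walk interpretation is used to read off equilibrium opinions, with a threshold $\tau$ separating the yes- and no-instances --- but it stops short of the construction that actually makes the argument work, and the main difficulty you correctly flag (the moving center $\hat\theta^\star$ and the global coupling of the equilibrium) is precisely the one you leave unresolved.

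Two concrete ingredients are missing. First, the paper neutralizes $\hat\theta^\star$ by padding the instance with a large empty graph $\widebar K_\ell$ of fully stubborn nodes ($\alpha_u = 1$, $s_u = 0$) together with one absorbing sink $\tilde u$ of opinion $0$ attached to every vertex of $H$. This forces $\hat\theta^\star \le \epsilon$ for \emph{every} choice of stooges, so the polarization collapses to $\frac 1 {|V|}\sum_u (x_u^\star)^2$ up to $O(\epsilon)$ terms and the moving-center problem disappears entirely; the same padding makes $\hat\theta < \epsilon$, which is why the MSE variant follows verbatim without your proposed symmetrization. Your alternative of ``weakly-coupled clusters or long paths/stars'' does not obviously achieve this, since any gadget in which the stooges materially move the opinions also moves the average unless that average is dominated by an untouched mass of nodes. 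Second, the reduction is from vertex cover on \emph{regular} graphs, and regularity is doing real work: it makes every non-stooge vertex of a covered instance take the common value $\frac{d}{d+1}$ (absorbed at $\tilde u$ with probability $\frac{1}{d+1}$, otherwise at a cover vertex of opinion $1$), which is what renders $\tau$ computable in closed form, and it yields the quantitative gap in the reverse direction --- if some edge is uncovered, one endpoint survives two steps and its opinion drops by at least $\left(\frac{1}{d+1}\right)^2$, pushing the polarization strictly below $\tau$ once $\epsilon$ is small enough. Without this explicit two-step argument, direction (ii) of your equivalence (``the only way to reach the threshold is to separate opinion mass as a cover would'') remains an assertion rather than a proof.
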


\begin{proof}
Let us first consider the maximization.
We extend the proof of \cite{gionis13} to our setting,
as we also reduce the decision version of
vertex cover on regular graphs to our problem:
Given an instance of vertex cover
consisting of a regular graph $H$ with vertex degree
$d$ and an integer $k$, we
construct an instance to our problem by modifying
$H$ as follows.
We introduce a special vertex ${\tilde u}$
which is totally absorbing (i.e. $\alpha_{\tilde u} = 1$)
and has innate opinion $s_{\tilde u} = 0$.
We further add a large empty
graph $\widebar K_{\ell}$
where every node $u \in \widebar K_\ell$
has innate opinion $s_u = 0$
and resistance $\alpha_u = 1$.
In the original graph $H$, we set
$\alpha_{u} = 0$,
$s_u = 1$, and
$w_{\tilde u u} = \frac 1 {d+1}$
for all nodes $u \in V(H)$.
We further set
$w_{uv} = \frac 1 {d+1}$ for
all $(u, v) \in E(H)$ and $w_{uv} = 0$, otherwise.
As such, our resulting instance
consists of the nodes $V = V(H) \cup K_\ell \cup \{ \tilde u \}$.
Furthermore, we use a threshold $\tau$
defined below,
as a function of $n$, $k$, $\ell$, and $d$.
We can show that $H$ has a vertex cover of
size $k$ if and only if there is a set of
at most $k$ stooges
such that $\mathrm{Polarization}^\dagger \ge \tau$.

$(\Rightarrow)$ Assume first that $S \subseteq V$ is a vertex cover of
$H$ with $|S| \le k$. In this case, we can pick
the vertices $S$ and set $\alpha_u \gets 1$ such that
$x_u = 1$ for all $u \in S$.
We can verify that this is indeed
a solution to our decision problem
by computing the equilibrium opinions
using the absorbing random walk
interpretation for opinion dynamics.
For any vertex $v \in V$,
we consider a random walk on the
modified graph $G$ starting in $v$.
If $v \in S$ then the random walk is
absorbed immediately after a single step
in $v$ itself, which
has innate opinion $s_v = 1$ and thus $x^\star_v = 1$.
If $v \notin S$ then the random
walk may get absorbed after the first step in
$\tilde u$ with probability
$\frac 1 {d+1}$.
If instead, the random walk goes
over an edge in the original graph $H$
(with probability $\frac d {d+1}$),
we must have reached a vertex in $S$ since
$S$ forms a vertex cover of $H$.
Since every vertex $u \in S$ is absorbing,
$x^\star_v = \frac d {d+1}$.
We compute the average equilibrium opinion as
\begin{align*}
    \hat \theta^\star = \frac 1 {|V|} \left(\sum_{u \in S} x^\star_u + \sum_{u \in V(H) \setminus S} x^\star_u +
        \sum_{u \in \widebar K_\ell} x^\star_u + x^\star_{\tilde u}\right)
    = \frac{k + (n-k) \frac d {d+1} + 0}{n + \ell + 1} \le \epsilon
\end{align*}
where $\epsilon$ can
be made arbitrarily small by choosing a large enough $\ell$.
This means that the polarization is at least
\begin{multline*}
    \mathrm{Polarization}^\dagger = \frac 1 {|V|} \sum_{u \in V} \big(x^\star_u - \hat \theta^\star\big)^2 \ge
    \frac 1 {|V|} \sum_{u \in V(H)} ((x_u^\star)^2 - 3 \epsilon)  =
    \frac 1 {|V|} \left( k + (n-k) \left(\frac{d}{d+1}\right)^2 - 3 n \epsilon \right)
    \eqqcolon \tau
\end{multline*}
and we define $\tau$ to be exactly at this lower bound.

$(\Leftarrow)$
Conversely, assume that $H$ has no vertex cover
of size $k$.
Let now $S$ be any set of stooges.
We still have that $x^\star_u \le 1$
for all $u \in S$ and
and $x^\star_u \le \frac d {d+1}$ for all $u\in V(H) \setminus S$.
However, because there is no vertex cover of
size $k$, there must exist a vertex $v \in V(H)$
from which we can take two steps without getting
absorbed in a node $u \in S$ with opinion $x^\star_u = 1$.
Thus, $x^\star_v \le \frac d {d+1} - ( \frac 1 {d+1} )^2$.
Since the total opinion could not have increased
compared to before,
we still have $\hat \theta^\star \le \epsilon$.
We compute the polarization as
\begin{align*}
    \mathrm{Polarization}^\dagger  &=
    \frac 1 {|V|} \sum_{u \in V} \big(x^\star_u - \hat\theta^\star\big)^2 \\[-3pt] &\le
    \frac 1 {|V|} \Bigg( \sum_{u \in V(H)} ((x^\star_u)^2 + \epsilon^2) + (\ell+1) \epsilon^2 \Bigg) \\ &=
    \frac 1 {|V|} \sum_{u \in V(H)} (x^\star_u)^2 + \epsilon^2 \\ &\le
    \frac 1 {|V|} \Bigg(k + (n-k-1) \left( \frac d {d+1} \right)^2 +
        \left( \frac d {d+1} - \left( \frac 1 {d+1} \right)^2 \right)^2 \Bigg) + \epsilon^2
    <\tau 
\end{align*}
which is true for sufficiently small $\epsilon$,
which we can achieve by choosing a large $\ell$.

To show NP-hardness
for minimization,
we only need to change
the innate opinions of the vertices
$u \in V(H)$ to $s_u = 0$ and
$s_{\tilde u} = 1$.
The MSE is large because
of the disparity of opinions
between vertices in the empty graph
$\widebar K_\ell$ and
vertices in $H$.
A vertex cover of $H$ now allows
us to effectively reduce the opinions
of vertices in $H$ and
we use this to define our
threshold.
Analogously to the above, we can also
show that if there is
no vertex cover in $H$, we
cannot reduce the opinions
below the threshold
for sufficiently large $\ell$.
\end{proof}

\noindent Since the number of stooges 
corresponds exactly to the number of
vertices in the vertex cover, we also
inherit the inapproximability of vertex
cover~\cite{khot08}.
An identical result holds for the MSE:

\begin{theorem}
\label{thm:mse-hardness}
Maximizing and minimizing the MSE (Problem~\ref{prob:mse}) is NP-hard.
\end{theorem}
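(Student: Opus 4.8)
The plan is to mirror the structure of the polarization proof (Theorem~\ref{thm:pol-hardness}) almost verbatim, since the MSE and polarization objectives differ only in whether we center around $\hat\theta$ or $\hat\theta^\star$. I would again reduce from vertex cover on $d$-regular graphs and reuse the same gadget: take the instance $H$, attach a totally absorbing sink $\tilde u$ with $s_{\tilde u}=0$, a large empty graph $\widebar K_\ell$ of absorbing nodes with innate opinion $0$, and set $\alpha_u=0$, $s_u=1$, $w_{\tilde u u}=\frac{1}{d+1}$, and $w_{uv}=\frac{1}{d+1}$ on the edges of $H$. The key observation that makes the two objectives behave identically on this construction is that the large padding graph $\widebar K_\ell$ forces $\hat\theta^\star$ (and also $\hat\theta$) to be $\le\epsilon$, arbitrarily small, so the centered term $\hat\theta$ in the MSE plays the same negligible role as $\hat\theta^\star$ does in the polarization. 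Concretely, $\hat\theta = \frac{1}{n+\ell+1}\sum_v s_v$ is dominated by the $\ell$ zero-opinion padding nodes and can be driven below $\epsilon$.

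For the maximization direction I would carry out exactly the same two steps. First, the forward implication: if $S$ is a vertex cover of size $\le k$, setting $\alpha_u\gets 1$ on $S$ gives $x^\star_u=1$ for $u\in S$ and $x^\star_v=\frac{d}{d+1}$ for $v\in V(H)\setminus S$ (the walk either gets absorbed at $\tilde u$ with probability $\frac{1}{d+1}$, hitting opinion $0$, or steps into the cover $S$ and is absorbed at opinion $1$). Since $\hat\theta\le\epsilon$, the MSE $\frac{1}{|V|}\sum_u (x^\star_u-\hat\theta)^2$ is at least $\frac{1}{|V|}\big(k+(n-k)(\frac{d}{d+1})^2-3n\epsilon\big)$, which I take as the threshold $\tau$. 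Second, the converse: if there is no vertex cover of size $k$, then some vertex $v$ can take two steps without hitting an opinion-$1$ stooge, so $x^\star_v\le\frac{d}{d+1}-(\frac{1}{d+1})^2$, and summing the squared deviations gives an MSE strictly below $\tau$ for large enough $\ell$. The algebra is identical to the polarization case because $\hat\theta$ and $\hat\theta^\star$ are both $\le\epsilon$ here.

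For the minimization direction I would follow the hint at the end of the polarization proof: swap the innate opinions so that $s_u=0$ for $u\in V(H)$ and $s_{\tilde u}=1$, keeping the padding graph at opinion $0$. Now the MSE (relative to $\hat\theta$, which is again $\le\epsilon$) is large precisely because of the opinion-$1$ mass that leaks into $H$ from $\tilde u$, and selecting a vertex cover lets us absorb those walks at opinion $0$ before they accumulate, reducing the deviations below a threshold $\tau$; absence of a vertex cover leaves at least one vertex whose opinion stays bounded away from $0$, keeping the MSE above $\tau$. I would then remark that since the MSE centers on the fixed quantity $\hat\theta$ rather than the moving $\hat\theta^\star$, the reduction is in fact slightly cleaner, as $\hat\theta$ does not depend on the choice of $S$ at all.

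The main obstacle I anticipate is not conceptual but bookkeeping: I must verify that centering on the \emph{fixed} $\hat\theta$ instead of the solution-dependent $\hat\theta^\star$ does not break the threshold separation. In the polarization proof one implicitly uses $|\hat\theta^\star|\le\epsilon$ inside the cross-terms $(x^\star_u-\hat\theta^\star)^2$; for the MSE I would instead note that $\hat\theta$ is a constant determined by the construction and bounded by $\epsilon$, so the identical cross-term and $\epsilon^2$ estimates go through. The delicate point is confirming that in the maximization case the gap between the cover and no-cover objectives, namely the $(\frac{1}{d+1})^2$ correction on a single vertex, survives after subtracting the (now fixed) $\hat\theta$; this is where I would be most careful, but since $\hat\theta$ is the same on both sides of the decision, the separation is preserved and the inapproximability inherited from vertex cover carries over unchanged.
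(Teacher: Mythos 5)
Your proposal is correct and follows essentially the same route as the paper: reuse the vertex-cover gadget from the polarization proof, observe that the padding graph $\widebar K_\ell$ forces $\hat\theta \le \epsilon$ so the same threshold argument goes through, and adapt the minimization case by swapping the innate opinions of $V(H)$ and $\tilde u$. Your closing remark that the MSE case is cleaner because $\hat\theta$ is fixed independently of the stooge set is exactly the point the paper makes at the start of its (much terser) proof.
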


\medskip

\begin{proof}
    Establishing the NP-hardness of the MSE is
    easier compared to the polarization 
    the equilibrium $x^\star$ is centered around $\hat{\theta}$ (see Equation~\eqref{eq:mse})
    rather than $\hat{\theta}^\star$. 
    Let us first consider the maximization of the MSE.
    We proceed as in the proof of Theorem~\ref{thm:pol-hardness}.
    Note that the average
    opinion of the innate opinions is
    \[
        \hat \theta = \frac 1 {|V|} \left(
        \sum_{u \in V(H)} s_u +
        \sum_{u \in \widebar K_\ell} s_u +
        s_{\tilde u}
        \right)
        = \frac 1 {n + \ell + 1} < \epsilon
    \]
    where $\epsilon$ serves the
    same purpose as in
    the proof of Theorem~\ref{thm:pol-hardness}.
    We can thus again make $\epsilon$
    arbitrarily small by choosing a large
    value $\ell$, and
    carry out the same proof.
    The adaption to the minimization works analogously
    to the adaption made for the proof of
    Theorem~\ref{thm:pol-hardness}.
\end{proof}

\section{Proposed methods} 
\label{sec:proposed}

The work of Gionis et al.~\cite{gionis13},
which launched a research stream of optimization under the FJ opinion dynamics,
uses that their objective can be formulated as a
monotone submodular function. 
A set function $f \colon 2^V \to \mathbb R^+$ is called submodular
if it has diminishing returns, i.e. $f(A \cup \{v\}) - f(A) \ge f(B \cup \{v\}) - f(B)$
for all sets $A \subseteq B \subseteq V$ and $v \in V$.
Monotone submodular functions can be approximately
maximized to a $(1-1/e)$-factor subject to a cardinality constraint \cite{fisher1978},
and minimized in polynomial time~\cite{fujishige2005submodular}.
Particularly, Gionis et al. leverage the property of diminishing returns
in their objective to obtain
an $(1-1/e)$-approximate solution.
We demonstrate that our problems lacks submodularity, and is thus is less structured
and not amenable to standard techniques of convex optimization;
it is also straight-forward to establish the absence of
supermodularity, which we defer to the Appendix.
Here, we detail the lack of submodularity
for the polarization (Problem~\ref{prob:polar}).
Problem~\ref{prob:mse} also lacks sub- and supermodularity, which
can be easily shown via a simplification
of our arguments.

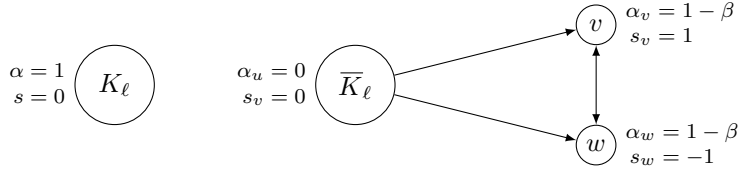
\begin{figure}[t]
    \centering
    \scalebox{1.0}{
    \begin{tikzpicture}[scale=1.6]
        \node[draw,circle,minimum size=30pt,inner sep=0pt,label={left,align=right:\footnotesize $\begin{aligned} \alpha &= 1 \\[-3pt] s &= 0 \end{aligned}$}] at (-2,0) {$K_\ell$};
        \node[draw,circle,minimum size=30pt,inner sep=0pt,label={left,align=right:\footnotesize $\begin{aligned} \alpha_u &= 0 \\[-3pt] s_v &= 0 \end{aligned}$}] (u) at (0,0) {$\widebar K_\ell$};
        \node[draw,circle,minimum size=15pt,inner sep=0pt,label={right,align=left:\footnotesize $\begin{aligned} \alpha_v &= 1-\beta \\[-3pt] s_v &= 1 \end{aligned}$}] (v) at (2,0.5) {$v$};
        \node[draw,circle,minimum size=15pt,inner sep=0pt,label={right,align=left:\footnotesize $\begin{aligned} \alpha_w &= 1-\beta \\[-3pt] s_w &= -1  \end{aligned}$}] (w) at (2,-0.5) {$w$};
        \draw[-latex] (u) to (v);
        \draw[-latex] (u) to (w);
        \draw[latex-latex] (v) to (w);
    \end{tikzpicture}}
    \vspace{-10pt}
    \caption{An example graph that shows that our objective is
    not submodular. We use $\overline K_\ell$ to denote the
    empty graph on $\ell$ nodes. Resistances and innate opinions
    of all nodes are as stated in the figure (for convenience,
    we allow negative innate opinions). We use a sufficiently
    large $\ell$ such that the contribution
    of $v$ and $w$ to the polarization vanishes.}
    \label{fig:non-submod}
\end{figure}%

\spara{Absence of Submodularity.}
Let us consider the example
network described in Figure~\ref{fig:non-submod}.
To simplify calculations,
we assume that $\ell$ is large enough
so that we may ignore the contributions of $v$ and $w$
to the polarization.
The example is based on the following idea:
Selecting $v$ as a stooge and changing
its resistance to $\alpha_v = 1$ raises the polarization
as opinions in the empty graph increase,
while they remain $0$ in the complete graph.
However, selecting $w$ as an additional stooge
and changing its resistance to $\alpha_w = 0$
raises the opinions in the empty graph even more
and the additional increase in polarization
is therefore larger.
\begin{itemize}
\item
\emph{Adding no stooges:}
In the original graph, we have
$x^\star_u = 0$ for all $u \in \widebar K_\ell$
since it is equally likely to assume the innate
opinion of $v$ and $w$ in a random walk starting
from $u$. Therefore, $\hat\theta^\star=0$ as well as
$\Polarization^\star = 0$.

\item
\emph{Adding $v$ as stooge:}
We select $v$ as a stooge and change its
resistance to $\alpha_v = 1$. This changes
the equilibrium opinions to
$x^\dagger_v = 1$,
$x^\dagger_w = 2\beta - 1$,
and therefore also
$x^\dagger_u = \frac 1 2 (x^\dagger_v + x^\dagger_w) = \beta$
for all $u \in \widebar K_\ell$.
Thus, $\hat\theta^\dagger = \frac 1 2 x^\dagger_u$
and
$\Polarization^\dagger = \frac 1 2 (x^\dagger_u - \hat\theta^\dagger)^2 + \frac 1 2 \hat\theta^{\dagger 2}
= \frac 1 4 \beta^2$.

\item
\emph{Adding $v$ and $w$ as stooges:}
We additionally select $w$ as a stooge. Note
that changing its resistance to $0$ would
result in a decrease in polarization to $0$, so we
change its resistance to $\alpha_w = 0$.
We obtain
$x^\ddagger_v = x^\ddagger_w = 1$,
as well as
$x^\ddagger_u = 1$ for all $u \in \widebar K_\ell$.
Therefore, $\hat\theta^\ddagger = \frac 1 2$ and
$\Polarization^\ddagger = \frac 1 4$.
\end{itemize}
We therefore have
$\Polarization^\dagger - \Polarization^\star = \frac 1 4 \beta^2$
which is smaller than
the increase
$\Polarization^\ddagger - \Polarization^\dagger = \frac 1 4 (1 - \beta^2)$
for $\beta < \sqrt{1/2}$.
This proves that our objective does not
show diminishing returns.
Note also that the ratio between 
the two increases is unbounded since
we can select $\beta$ arbitrarily small, which defies
applications of approximate submodularity. 
This alludes to the hardness of our
problem as it shows
non-convexity,
and is related to \cite{chen22}.

\spara{Absence of Supermodularity.}
\label{subsec:supermod} 

A set function $f \colon 2^V \to \mathbb R$
is supermodular if $-f$ is submodular.
%
We show that the  Problems~\ref{prob:mse}
and \ref{prob:polar}
are not supermodular. 
We can easily see this on a lollipop
graph consisting of a complete graph $K_n$ connected
to a path $P_m$. Assume that the resistance of
every vertex $u \in K_n$ is non-zero,
and the innate opinions are all $s_u = 0$.
The resistance of every vertex $v \in P_m$
is $\alpha_v = 0$.
Let $\tilde v \in P_m$ be the vertex that
is connected to the complete graph and
$\hat v \in P_m$ be another arbitrary vertex.
The idea is to select $\hat v$ and $\tilde v$ as
potential stooges in our example. As such, we 
use innate opinions $s_{\tilde v} = s_{\hat v} = 1$
and $s_v = 0$ for all
other vertices $v \in P_m \setminus \{\tilde v, \hat v\}$.
By this definition,
the equilibrium opinion of each vertex
$w \in K_n \cup P_m$ is $x^\star_w = 0$.
We therefore have that
$\mathrm{Polarization} = 0$ and
$\MSE = 0$ as
$\tilde v$ and $\hat v$ are the
only vertices with non-zero
innate opinion and their contribution
therefore diminishes for large enough
$n$ and $m$.

Now, let us consider the scenario
where we choose a vertex $v \in \{\tilde v, \hat v\}$ as a
stooge and make it fully resistant.
This means all vertices $w \in P_m$
that are farther away from the complete graph
than $v$ now have
equilibrium opinion $x_w^\dagger = 1$.
The equilibrium opinions for all other
vertices also increase and we thus
obtain an increase in MSE and polarization.
Let us designate
the vertex $\tilde v \in P_m$
connected to the complete graph
as a stooge.
Now, adding $\hat v$
as a second stooge does
not result in an any increase in MSE.
This shows that in this scenario,
the problem does exhibit strictly
diminishing returns and is therefore not 
supermodular.




\begin{algorithm}
\caption{Greedily Selecting Stooges}\label{alg:algo1}
\begin{algorithmic}[1]
\Function{ApproximateLazyGreedy}{$G, \phi, \epsilon$}
    \State Let $(W, \balpha, \bs) = G$
    \State $S \gets \emptyset$
    \com{initialize stooges}
    \State $\delta_{v,\beta} \gets \infty$ for all $v \in V$ and $\beta \in \{0, 1\}$
    \com{initialize marginal gains}
    \State $x \gets$ \Call{ApproximateEquilibrium}{$G, V, s$}
    \While{$|S| \le k$}
        \State $\delta_{\textrm{prev}} \gets 0$
        \State $\delta_{\textrm{max}} \gets 0$
        \State $(v_{\textrm{max}}, \beta_{\textrm{max}}) \gets (\bot, \bot)$
        \For{$(v,\beta) \in (V \setminus S) \times \{0, 1\}$ in order of decreasing $\delta_{v,\beta}$}
            \If {$\delta_{\textrm{prev}} > \phi \cdot \delta_{v,\beta}$}
                \com{marginal gain too small}
                \State {\bf break}
            \EndIf
            \State Let $\alpha^{(v,\beta)}_u = \alpha_u$ for all $u \in V \setminus \{v\}$ and $\alpha^{(v, \beta)}_{v} = \beta$
            \State Let $G^{(v, \beta)} = (V, W, \alpha^{(v, \beta)})$
            \com{opinion dynamic with $v$ as stooge}
            \State $x^{(v, \beta)} \gets$ \Call{ApproximateEquilibrium}{$G^{(v, \beta)}, \{v\}, x, \epsilon$}
            \State $\delta_{v, \beta} \gets \MSE(x^{(v, \beta)}) - \MSE(x)$
            \If {$\delta_{v, \beta} > \delta_{\textrm{max}}$}
                \State $\delta_{\textrm{max}} \gets \delta_{v, \beta}$
                \State $(v_{\textrm{max}}, \beta_{\textrm{max}}) \gets (v, \beta)$
            \EndIf
            \State $\delta_{\mathrm{prev}} \gets \delta_{v, \beta}$
        \EndFor
        \State $S \gets S \cup \{v_{\max}\}$
        \com{add $v_{\max}$ as stooge and update opinion dynamic}
        \State $\alpha \gets \alpha^{(v_{\max}, \beta_{\max})}$
        \State $x \gets x^{(v_{\max}, \beta_{\max})}$
    \EndWhile
    \State \Return $S$
\EndFunction
\vspace{0.5em}
\Function{ApproximateEquilibrium}{$G, U, x, \epsilon$}
    \State Let $(W, \balpha, \bs) = G$
    \While {$U \not= \emptyset$}
        \com{as long as there are active nodes}
        \State $x' \gets x$
        \For{$v \in U$}
            \State $x'_v \gets \alpha_v s_v + (1 - \alpha_v) \sum_{u \in V} w_{uv} x_u$
            \If{$|x'_v - x_v| < \epsilon$}
                \State $U \gets U \setminus \{v\}$
                \com{deactivate $v$}
            \Else
                \State $U \gets U \cup \{ u \in V : w_{uv} > \epsilon \}$
                \com{activate influential neighbors of $v$}
            \EndIf
        \EndFor
        \State $x \gets x'$
    \EndWhile
    \State \Return $x$
\EndFunction
\end{algorithmic}
\end{algorithm}

\spara{Greedy Heuristic.} Despite the lack of submodularity, 
we design a greedy algorithm to optimize the MSE and polarization.
We describe it in terms of maximizing the MSE in detail as Algorithm~\ref{alg:algo1}.
However, the algorithm can be easily adapted
to minimizing the MSE or optimizing the polarization.
We split the selection of stooges in two parts: computing
an approximation to the equilibrium opinions, and an
approximate version of the lazy greedy algorithm.
The former is implemented by repeated application of
Equation~\ref{eq:fj} until convergence. Additionally,
we deactivate nodes whose opinion is changing less
than a small threshold $\epsilon \ge 0$ and stop updating
their opinions. This avoids unnecessary computation
where the equilibrium opinions of large parts of the
graph are already reached.
We implement the greedy selection by considering
all pairs of vertices and extreme-value resistances $\beta \in \{0, 1\}$.
As in the well-known lazy greedy algorithm, we keep
track of the marginal gain of the MSE (or polarization) for each pair.
For a submodular objective,
the marginal gains decrease in each
subsequent iteration of the
outer while-loop due to diminishing returns.
After adding a stooge, we can now order
the remaining pairs in decreasing order
of their prior marginal gains before the addition.
We can stop recomputing the marginal gains
as soon as all prior marginal gains fall
below the current maximum, since we know they
can only decrease and will therefore never
exceed the current maximum.
Since our function is however not submodular and
we cannot guarantee diminishing returns, we
do not know whether the marginal gains are
decreasing. We account for it
by introducing some multiplicative slack $\phi \ge 1$ and
only stop recomputing marginal gains once the slack
times an updated marginal gain fall below the
current maximum.

\section{Experimental Evaluation} 
\label{sec:exp}

We evaluate
Algorithm~\ref{alg:algo1} on
Problems~\ref{prob:mse} and \ref{prob:polar}
against several intuitive baselines.
We show further results
in Appendix~\ref{sec:app-exp}.
We used
Python~3 and ran our methods on a 2.9 GHz
Intel Xeon Gold 622R processor
with 384GB RAM.
Our code is available online\footnote{\url{https://github.com/MKLOL/asch-effect-polarization}}

\begin{figure}
    \centering
    \includegraphics[width=\figwidth]{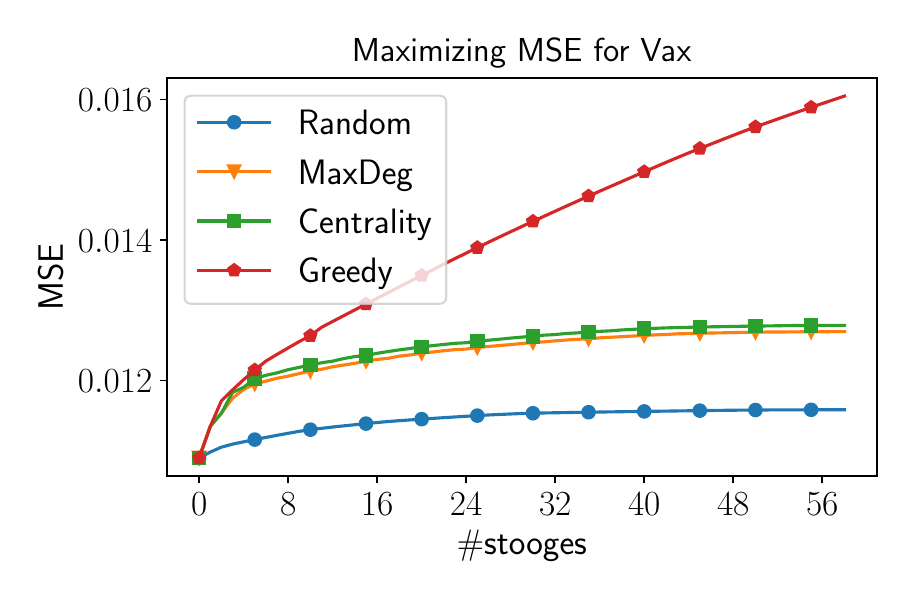}~
    \includegraphics[width=\figwidth]{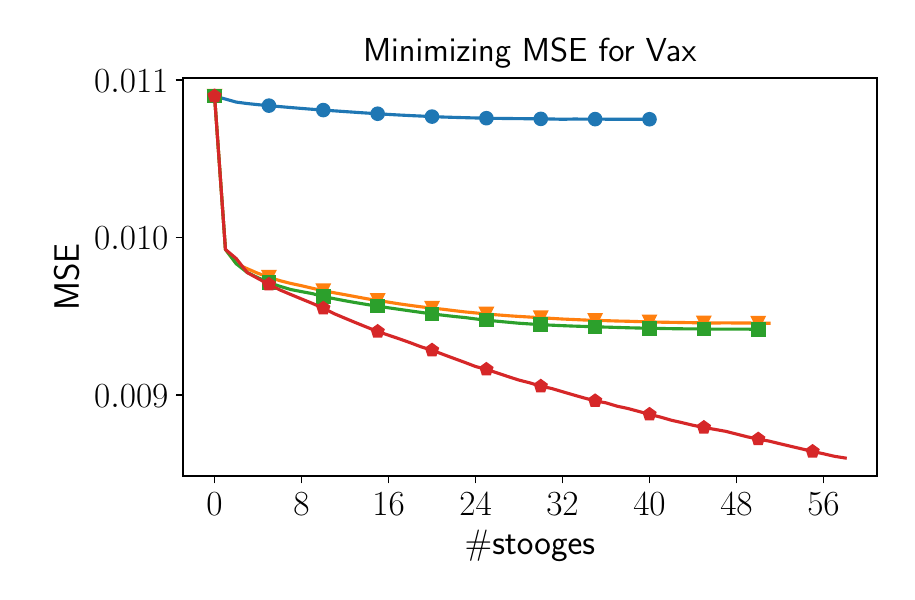}
    
    \vspace{-6pt}
    
    \includegraphics[width=\figwidth]{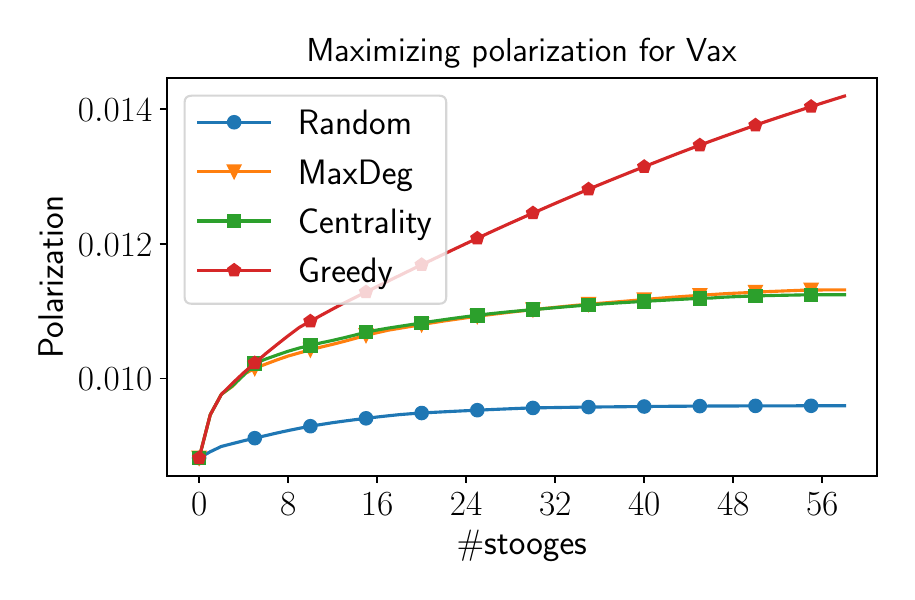}~
    \includegraphics[width=\figwidth]{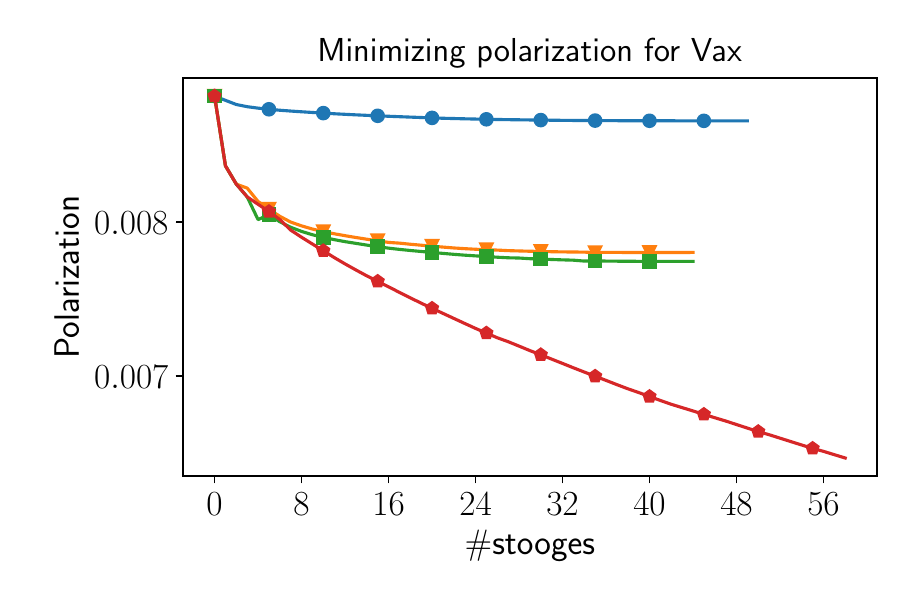}
    \figspace
    \caption{Maximization (left) and minimization (right) of the MSE (top) and polarization (bottom) on the \textsf{Vax} instance.}
    \label{fig:vax}
    \label{fig:vax-pol}
\end{figure}

\subsection{Setup}

\spara{Synthetic Instances.}
We consider different random graphs for our
synthetic experiments.
We use a $\mathsf{GNP}(n, p)$ model where each possible edge
on a graph of size $n$
is included independently with probability $p$.
We also consider a more refined stochastic block model
which we refer to as \RndCommunities.  
It has a total of $n=300$ nodes of which $200$ form
a big community and the remaining $100$ nodes form $10$ small
communities of size $10$, each.
Small communities are only connected to the big community, not to each other.
Each intra-community edges exists independently with probability $0.5$
and each inter-community edge with probability $0.3$. We choose this model as it simulates the existence of several disconnected small communities connected to a large, expander-like core as described in ~\cite{leskovec2008statistical}.
We also consider random trees $\RndTree(n)$
where we pick a tree uniformly at random from the set
of all trees on $n$ vertices by choosing a random Prüfer sequence~\cite{prufer1918neuer} and converting it into a tree (see also~\cite{broder1989generating}). 
We also run experiments on a star graph with $150$
leaves (denoted $\Star$) and a $10 \times 5$ grid
graph (denoted $\Grid$).
Unless otherwise noted,
we sample
innate opinions independently from
$\mathcal N(\mu=\frac 1 2, \sigma^2=\frac 1 2)$
which we clip for convenience to $s_v \in [0,1]$.
Throughout, we use resistance values of $\alpha_v = \frac 1 2$
for all nodes.

\spara{Real-World Instances.}
We use a new Twitter dataset~\cite{ukrainevax} compiled using twAwler \cite{pratikakis2018twawler} before Twitter changed its data collection policy.
We create two instances \textsf{Vax}
and \textsf{War}, consisting of
a network of users which posted
about the Ukraine war and vaccinations.
We include an undirected edge between
two users if there is at least one follow or a mention
relationship between them.
The \textsf{Vax} and \textsf{War} datasets
are on the same graph of $3\,409$ nodes and $11\,508$ edges,
consisting of users 
who have expressed opinions on both topics.
The opinions are obtained by parsing $12\,257$ tweets posted by these users.
The sentiment of the tweets was analyzed using ChatGPT-3.5.
We provide the prompts we used in Appendix~\ref{sec:app-exp}. 
The sentiment scores ranged from 0 to 10 (and were subsequently normalized to the interval [0,1]), where 0 indicates complete opposition to vaccination, 10 represents full support, and 5 denotes neutrality.
For this dataset, we assume that the number of tweets
act as a proxy on how opinionated a user is.
Specifically, we give higher resistance values to users with many tweets,
and we detail the exact formula in
Appendix~\ref{sec:app-exp-desc}.
%

We use further instances generated
from Twitter by 
Garimella et al. \cite{garimella18}. 
Each instance captures users which
posted with a specific hashtag, and
we infer an extreme innate opinion
$s_v \in \{0, 1\}$
from the sentiment expressed in that post.
Users form a graph where the edges
express the retweet or follow relationship
between the users. Note that the resulting
graph is also undirected even though these
relationships are directed.
We use resistances 
$\alpha_v = \frac 1 2$ for all nodes.

\begin{figure}
    \centering
    \includegraphics[width=\figwidth]{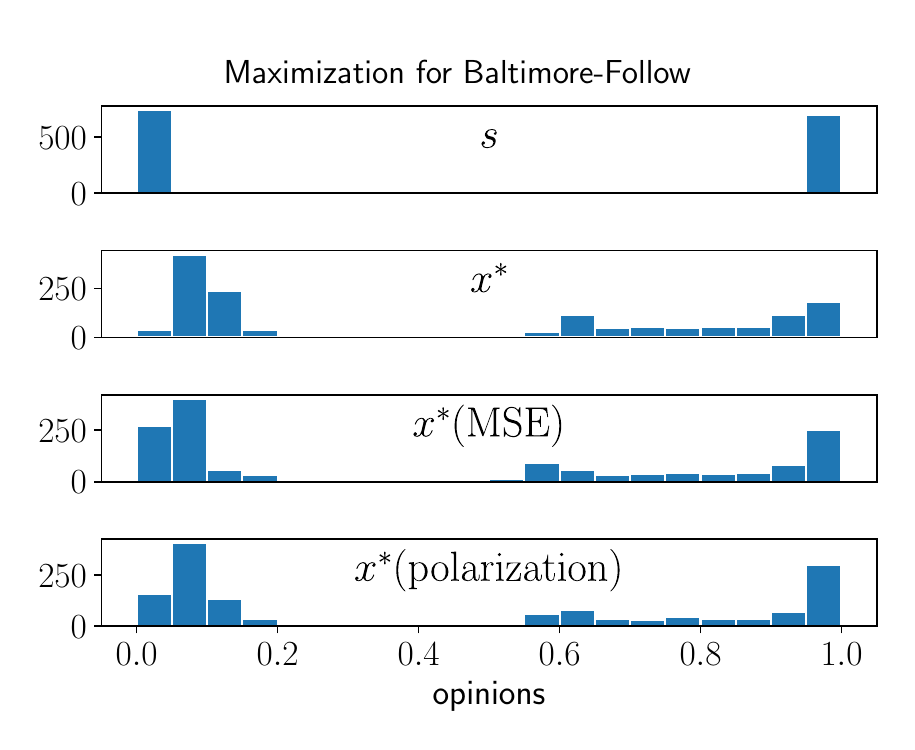}~
    \includegraphics[width=\figwidth]{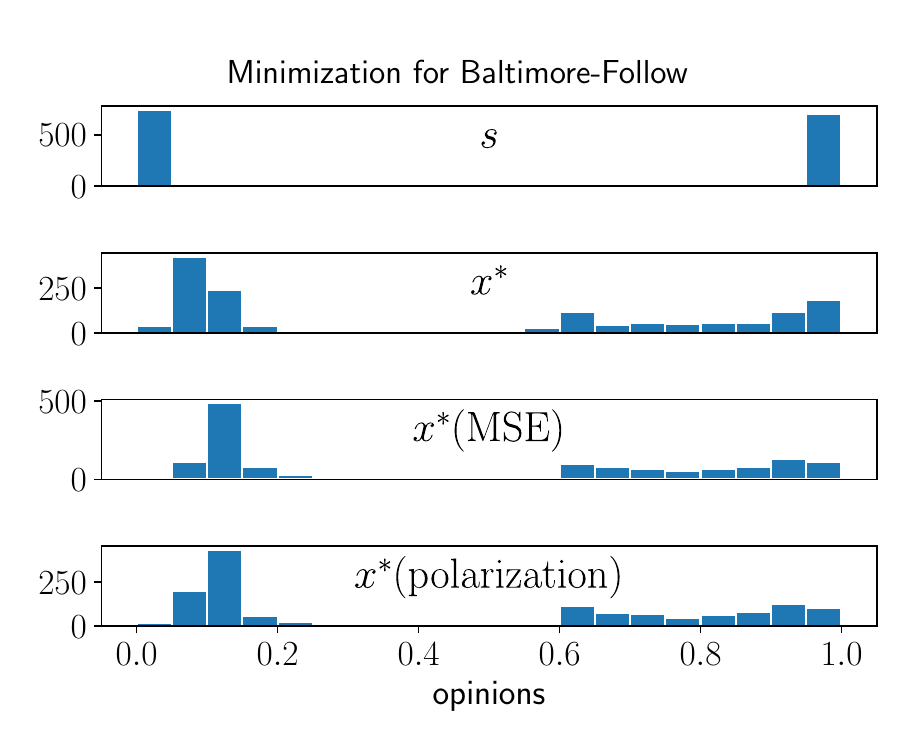}
    \figspace
    \caption{Innate opinions ($s$), equilibrium opinions before
    ($x^\star$) and after adding stooges for optimizing the MSE ($x^\star(\mathrm{MSE})$)
    and the polarization ($x^\star(\mathrm{Polarization})$).
    We show maximization (left) and minimization
    (right) on the \textsf{Baltimore-Follow}
    dataset.}
    \label{fig:baltimore-opinions}
\end{figure}


\spara{Algorithms.}
We use the greedy heuristic described in Algorithm~\ref{alg:algo1}
with parameters $\epsilon=10^{-5}$ and $\phi=1.1$,
which we refer to as $\mathsf{Greedy}$.
For baselines, we consider selecting
stooges randomly ($\mathsf{Random}$),
according to descending maximum degree ($\mathsf{MaxDegree}$),
or descending betweenness-centrality ($\mathsf{Centrality}$).
For each of these baselines, we also follow a greedy approach
to decide the modified resistance values:
From the set of selected stooges $S$,
we pick the stooge $u \in S$ such that its change
in resistance to either $\alpha_u = 0$
or $\alpha_u = 1$
maximizes (or minimizes) the objective.
We continue this for all remaining stooges.
All algorithms use the fast approximate calculation
of the equilibrium opinions as stated in
Algorithm~\ref{alg:algo1} with $\epsilon=10^{-5}$.
We stop an algorithm early when additional
stooges do not lead to any further
improvement.
As an additional baseline, we use
a brute-force approach ($\mathsf{Brute\ Force}$) which
enumerates all possible subsets
of stooges.


\spara{Evaluation Metrics.}
For our experiments, we report
the MSE and Polarization as
defined in Equations~\ref{eq:mse}
and \ref{eq:polarization}.
To compare two sets of stooges
$A, B \subseteq V$,
we use the Jaccard similarity index
$J(A,B) = \frac{|A \cap B|}{|A \cup B|}$.

\begin{figure}
    \centering
    \includegraphics[width=\figwidth]{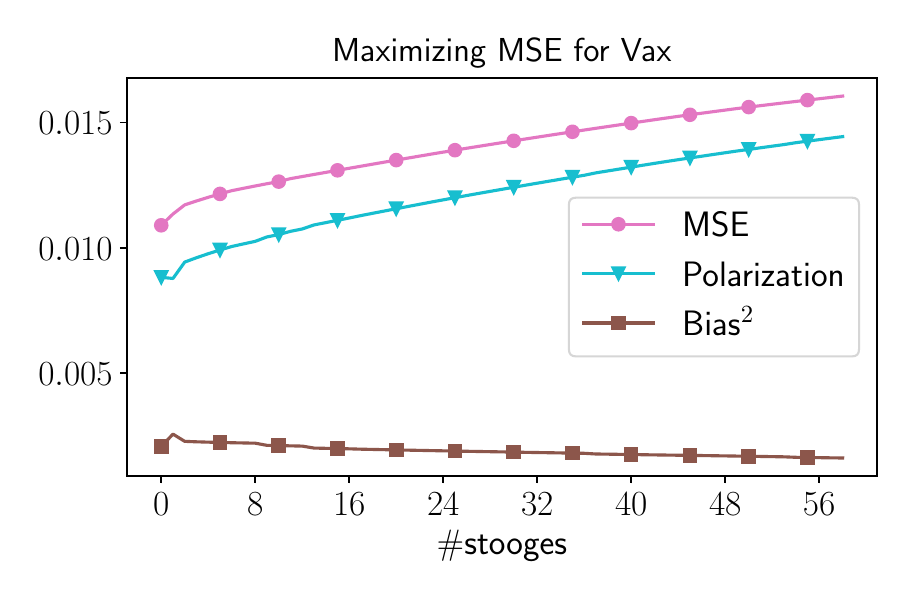}~
    \includegraphics[width=\figwidth]{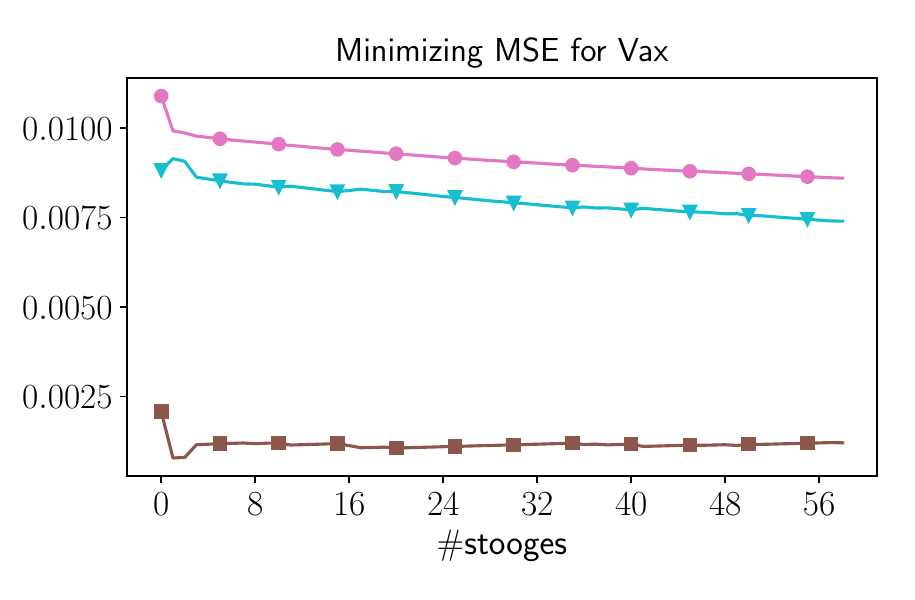}
    \figspace
    \caption{Maximization (left) and minimization (right) of the MSE on the \textsf{Vax} dataset.
    We show $\mathrm{Bias}^2 = \big(\hat \theta - \frac 1 {|V|}\sum_u x^\star_u\big)^2$,
    variance (polarization), and the MSE.}
    \label{fig:decomp}
\end{figure}

\subsection{Real-World Experiments}

We now evaluate our algorithm and baselines
on real-world datasets obtained
from Twitter (cf. Appendix~\ref{sec:app-exp} for the full results).
We further note that results on synthetic
data follow similar trends as on
real-world data, but the advantage
of our greedy approach over baselines
is even more pronounced.

We show
the result of the maximization and minimization objectives
on the \textsf{Vax} instance
in Figure~\ref{fig:vax}.
Our greedy approach
consistently outperforms the
other baselines.
Selecting stooges based
on centrality or the maximum degree only
proves sensible for a few stooges.
Figure~\ref{fig:baltimore-opinions} shows the
opinion distribution before and after adding stooges,
on the \textsf{Baltimore-Follow} and \textsf{Vax}
instances using \textsf{Greedy}.
Notice that the overall MSE and polarization for \textsf{Vax} is
small due to the non-polarizing effect of the FJ model~\cite{dandekar2013biased} and also due to the large number
of nodes in the network compared to the stooges.  However, the relative increase we obtain is substantial ($\ge 40\%$ for \textsf{Vax}) as indicated by the plots. 
We can also see how maximization
of the MSE and Polarization increases extreme
opinions that are far
from the true mean $\theta$,
and minimization pushes
opinions closer together.

Figure~\ref{fig:decomp}
shows the decomposition of the MSE
into bias and variance, for
an increasing number of stooges
selected by \textsf{Greedy}.
The bias remains low
throughout which indicates
that both the MSE and Polarization
objectives are similar:

\begin{figure}
    \centering
    \includegraphics[width=\figwidth]{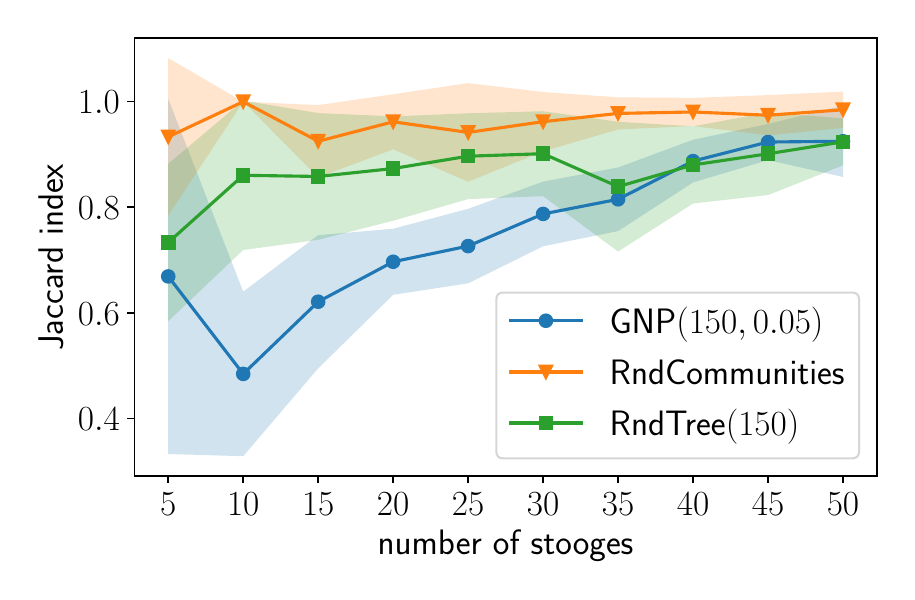}~
    \includegraphics[width=\figwidth]{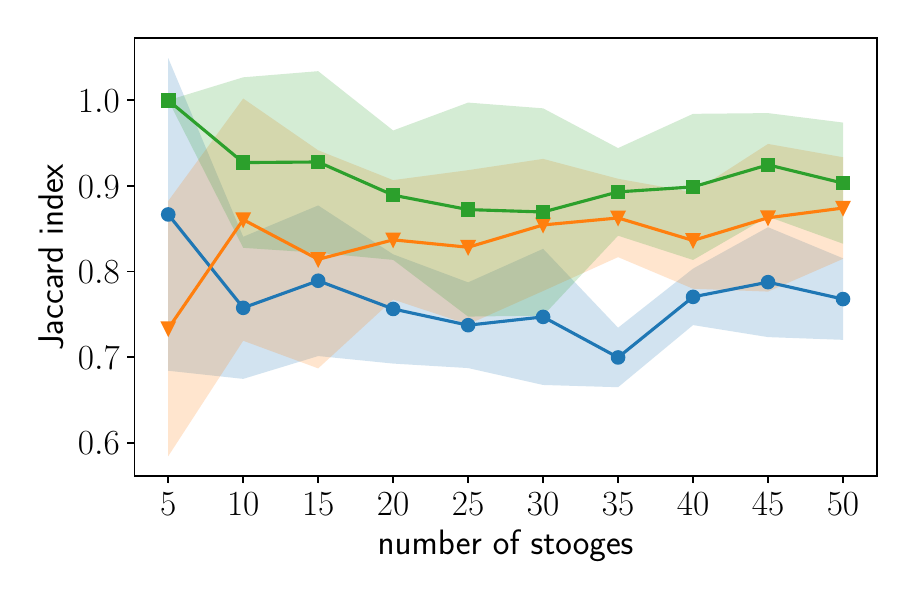}
    \figspace
    \caption{Comparing the set of stooges chosen for optimizing the
    MSE and Polarization. We report the Jaccard similarity on both sets for
    an increasing number of stooges.
    The left plot shows the maximization and the right plot the minimization
    for different random graphs. We report mean and standard deviation across five runs.}
    \label{fig:isect-pol-mse}
\end{figure}

\spara{Comparing MSE and Polarization.}
In order to explore the differences
between optimizing the MSE (Problem~\ref{prob:mse})
and the Polarization (Problem~\ref{prob:polar}),
we showcase the difference in the set of stooges
selected by Algorithm~\ref{alg:algo1} for both
objectives (Figure~\ref{fig:isect-pol-mse}).
We can see that similar stooges are selected
for both objectives. Furthermore, using
the set of stooges selected for the other
objective (i.e. using the stooges selected
for optimizing the MSE for the Polarization and vice versa)
results in a loss in objective of
at most $1\%$ for maximization and $6\%$
for minimization for the
random graphs in Figure~\ref{fig:isect-pol-mse}.

\spara{Scalability.}
Figure~\ref{fig:scalability-n} shows the runtime
of our approach and the three baselines.
Our greedy approach clearly outperforms
the different baselines as it lazily avoids
recomputation. We note, however, that this kind
of optimization would also be possible for
the baseline approaches. $\mathsf{Centrality}$
is the most expensive among the baseline approaches
as it requires identifying the $k$ vertices with
maximum centrality, which can be prohibitive for
large values of $n$.

\begin{figure}
    \centering
    \includegraphics[width=\figwidth]{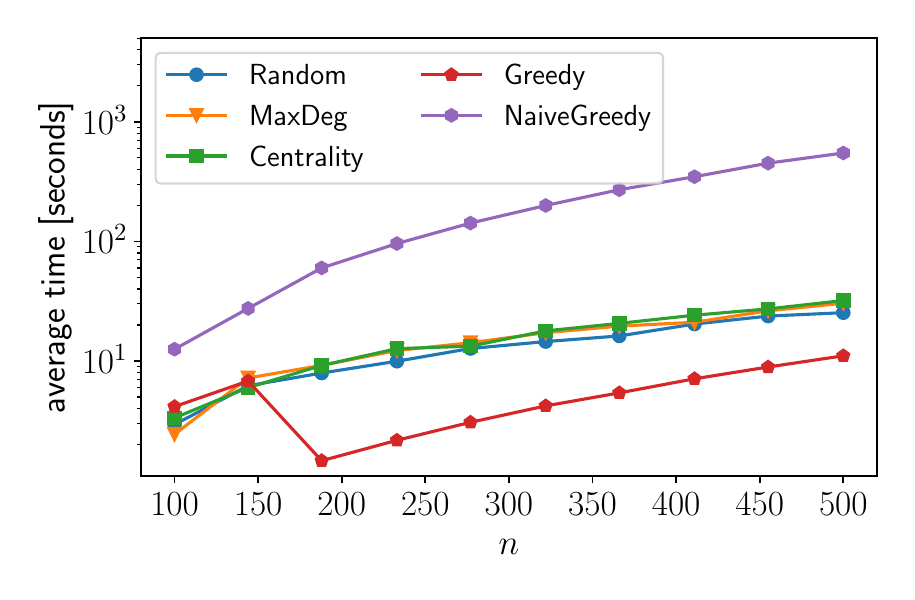}
    \vspace{-5mm}
    \caption{Runtime on $\GNP(n, 0.05)$ for an increasing number of vertices $n$, for selecting $50$ stooges over 5 runs. We omit the standard deviation as runtimes are well concentrated.}
    \label{fig:scalability-n}
\end{figure}


\section{Conclusion}
\label{sec:concl}

We investigate the influence of social conformity using the Friedkin-Johnsen model on the wisdom of crowds by extending the classic experiment of Sir Francis Galton with a twist inspired by Asch's conformity experiments. In particular, we explore the impact of introducing stooges (i.e., biased individuals) on the group's collective estimation accuracy. We demonstrate NP-hardness for both maximizing and minimizing the mean squared error (MSE) of the group's estimate. Our study reveals a connection between optimizing MSE and polarization in opinion dynamics. Despite the computational challenges, we propose a greedy heuristic that works well on synthetic and real-world datasets, suggesting practical implications for understanding and potentially manipulating collective decision-making processes in connected societies. A major open question is the design of approximation algorithms for our objectives.

\bibliographystyle{splncs04}
\bibliography{ref}

\newpage\clearpage

\appendix

\label{sec:appendix}

\section{Additional Experiments}
\label{sec:app-exp-desc}
\label{sec:app-exp}

\subsection{Description of Experimental Setup}

\spara{Resistances and Innate Opinions for \textsf{Vax} and \textsf{War}.}
Here, we detail the formula used to set
the resistance of a user for the real
world datasets \textsf{Vax} and \textsf{War},
which are both compiled from Twitter data.
Let $\mathrm{tw}_u$ be the number of tweets of user $u$.
We use the following formula to set the resistance values
for each node $u$:
\begin{equation}
   \alpha_u \sim 
\begin{cases} 
U(0.4, 0.6) & \text{if } 1 \leq \mathrm{tw}_u \leq 5 \\
U(0.5, 0.7) & \text{if } 5<  \mathrm{tw_u} \leq 10 \\
U(0.6, 0.8) & \text{if } 10< \mathrm{tw_u} \leq 20 \\
U(0.7, 0.9) & \text{if } \mathrm{tw_u}  > 20 .
\end{cases} 
\end{equation}

\spara{ChatGPT-3.5 Prompts.} We used the following prompts to extract an opinion
for each tweet.

\begin{enumerate}
    \item Return only integer rate for the following tweet. The rate represents its opinion towards COVID vaccination with an integer between 0 to 10, with 10 being very positive and supportive of vaccination, 0 being very negative and skeptical about it, and 5 being completely neutral. COVID vaccination is also called vax.
    \item  Return an integer in the range 0 to 10 (without additional comments, no letters at the sentence) for the following tweet. The rate represents its opinion towards a hypothetical war between Ukraine and Russia with an integer between 0 to 10, with 10 being totally supportive of Ukraine, 0 being totally supportive of Russia, and 5 being completely neutral.
\end{enumerate}

\subsection{Experimental Results on Synthetic Data}

\begin{figure}
    \centering
    \includegraphics[width=\figwidth]{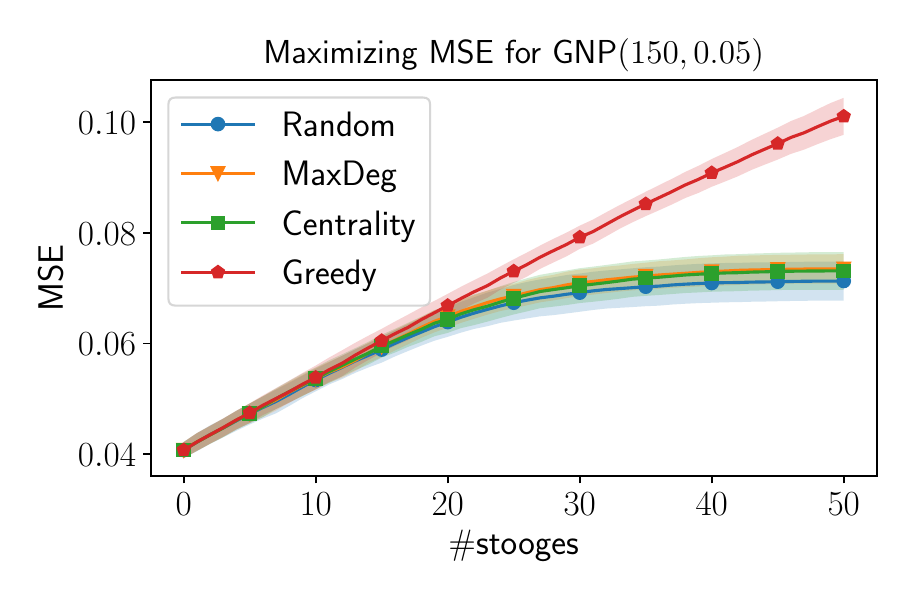}~
    \includegraphics[width=\figwidth]{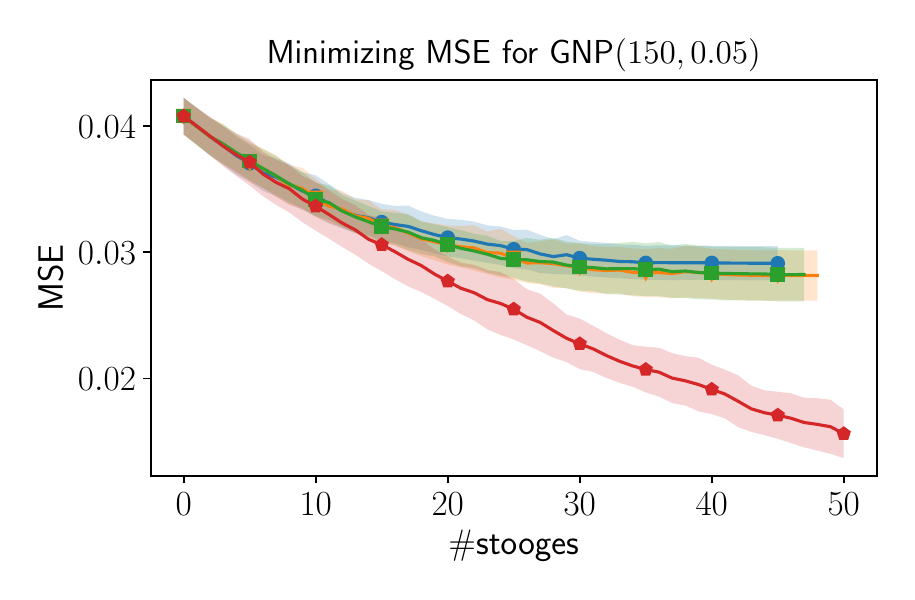}
    
    \includegraphics[width=\figwidth]{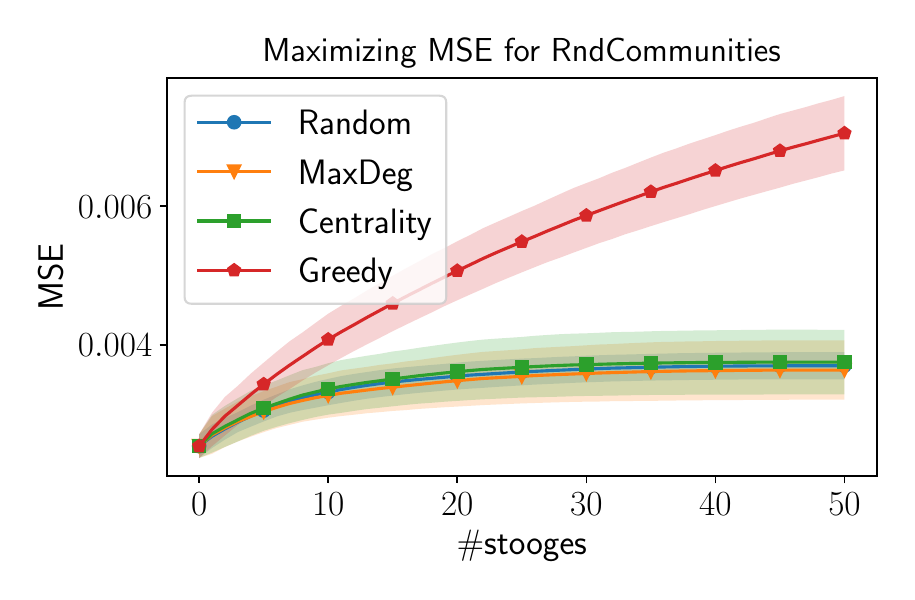}~
    \includegraphics[width=\figwidth]{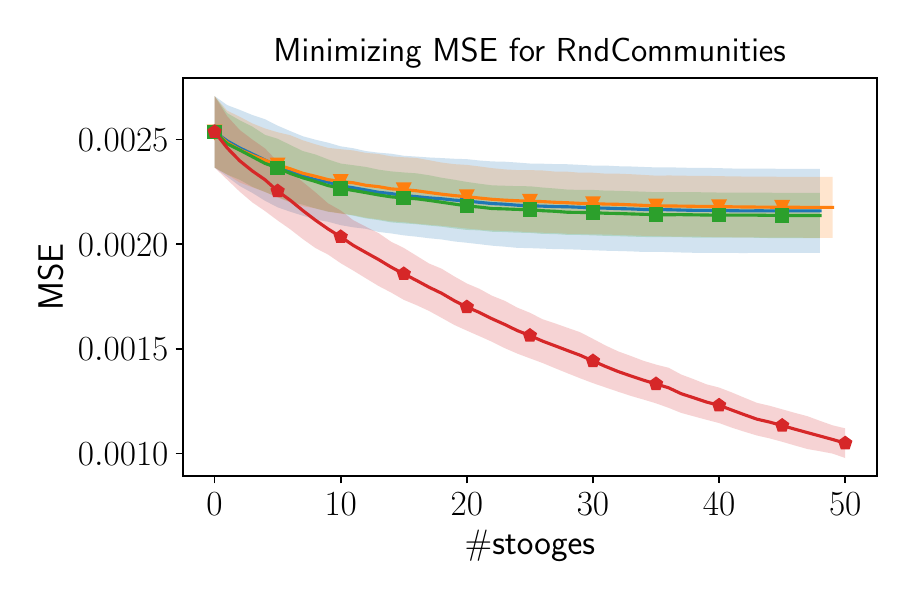}
    
    \includegraphics[width=\figwidth]{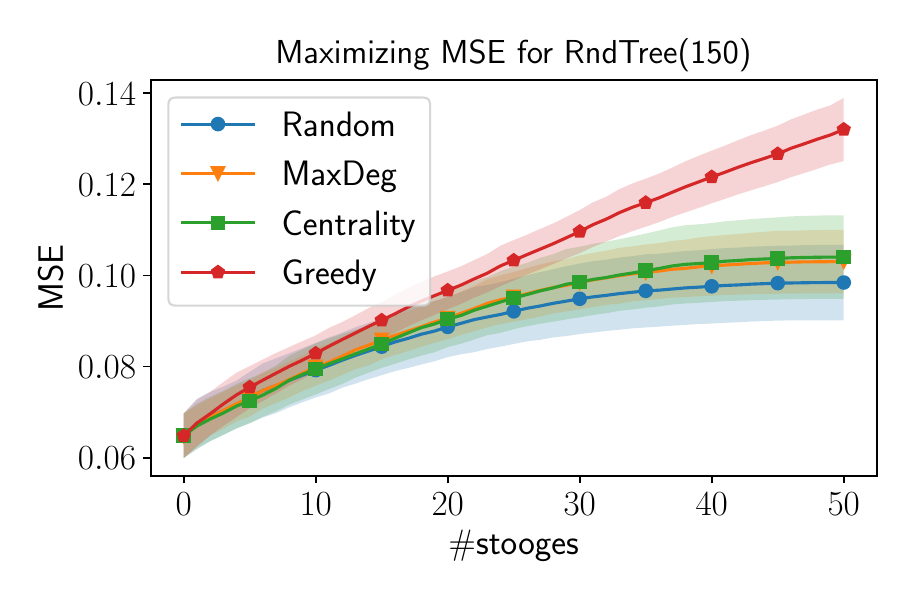}~
    \includegraphics[width=\figwidth]{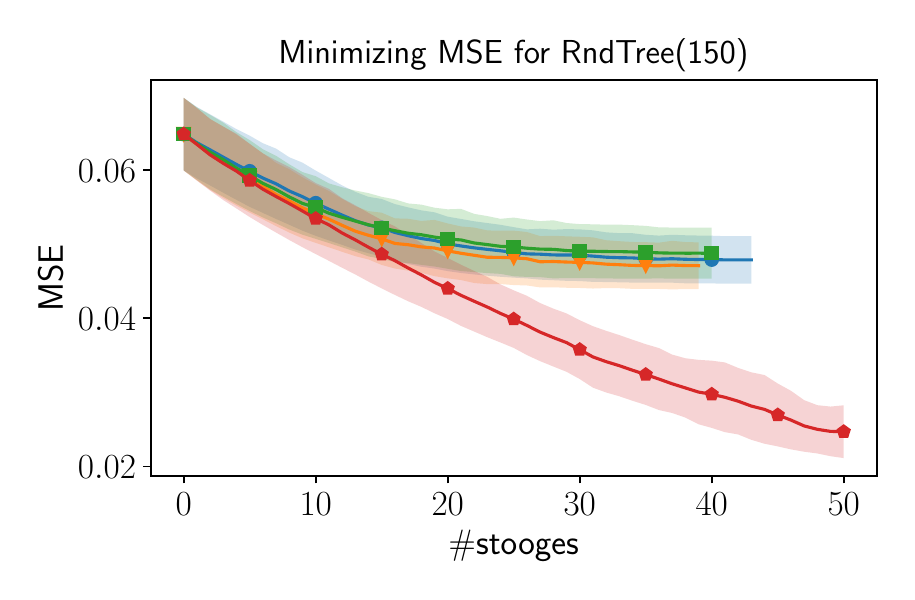}

    \includegraphics[width=\figwidth]{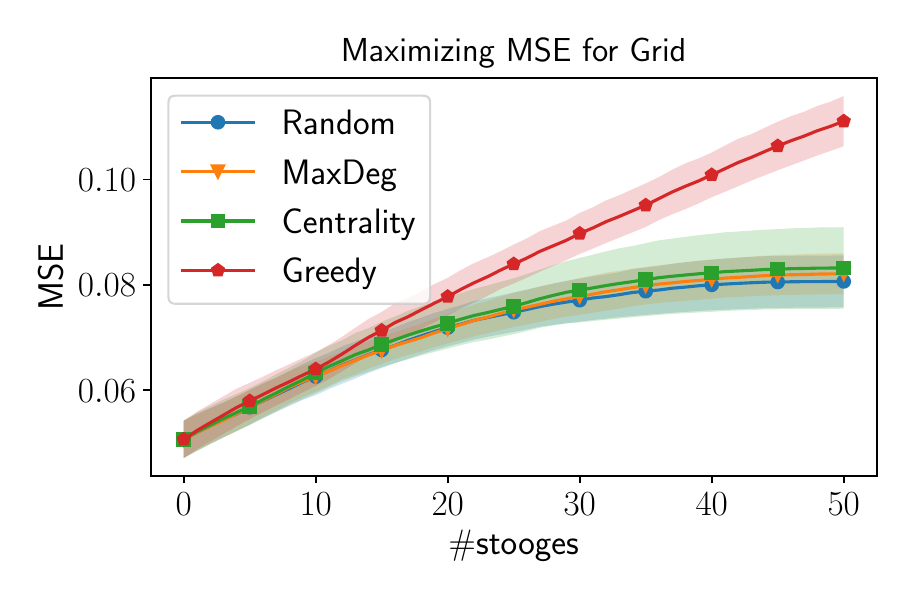}~
    \includegraphics[width=\figwidth]{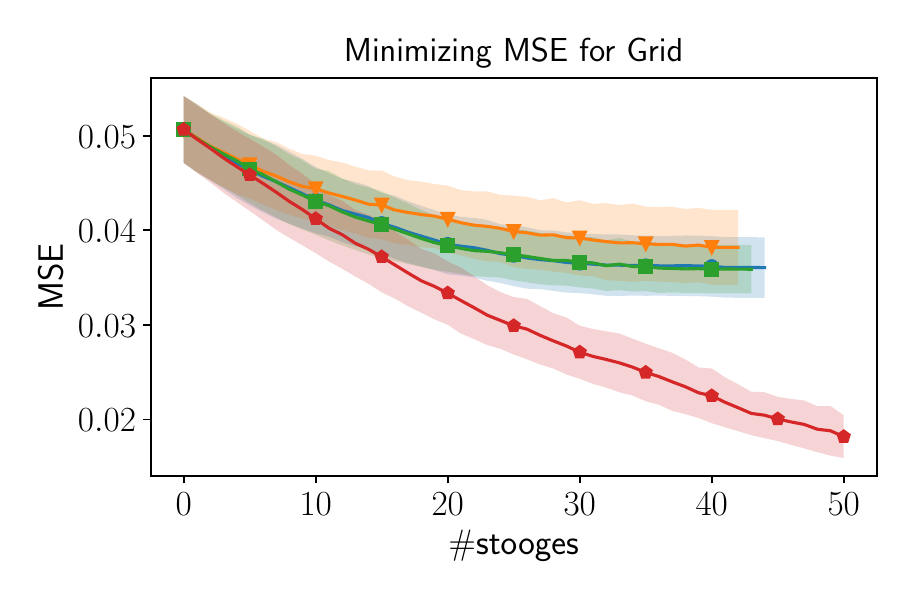}
    
    \caption{Maximization (left) and minimization (right) of the MSE on
    synthetic instances.
    We show the MSE for an increasing number of stooges, and report mean and
    standard deviation across five runs. Note that both the graph structure
    and the innate opinions are randomly generated for synthetic random graphs.}
    \label{fig:synthetic-mse}
\end{figure}

\begin{figure}
    \centering
    \includegraphics[width=\figwidth]{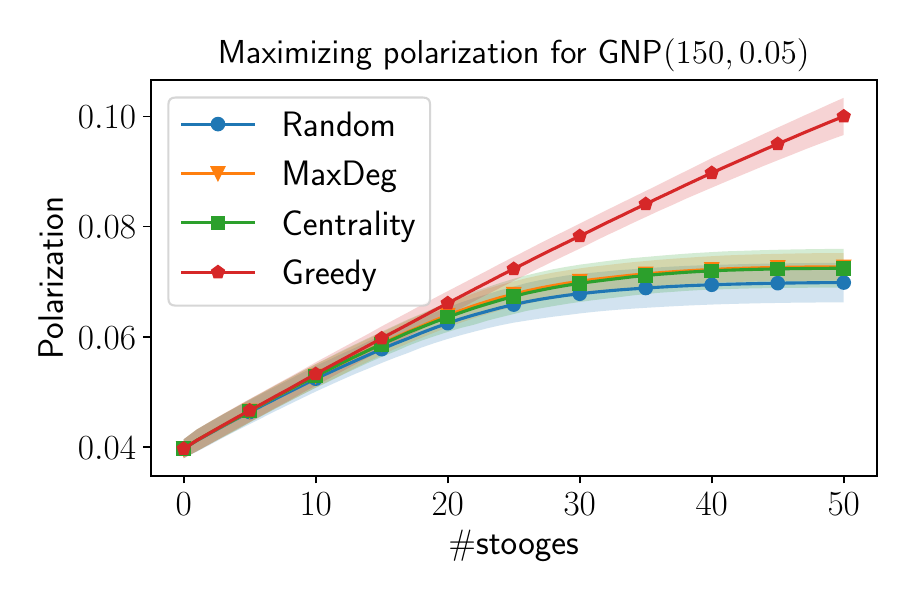}~
    \includegraphics[width=\figwidth]{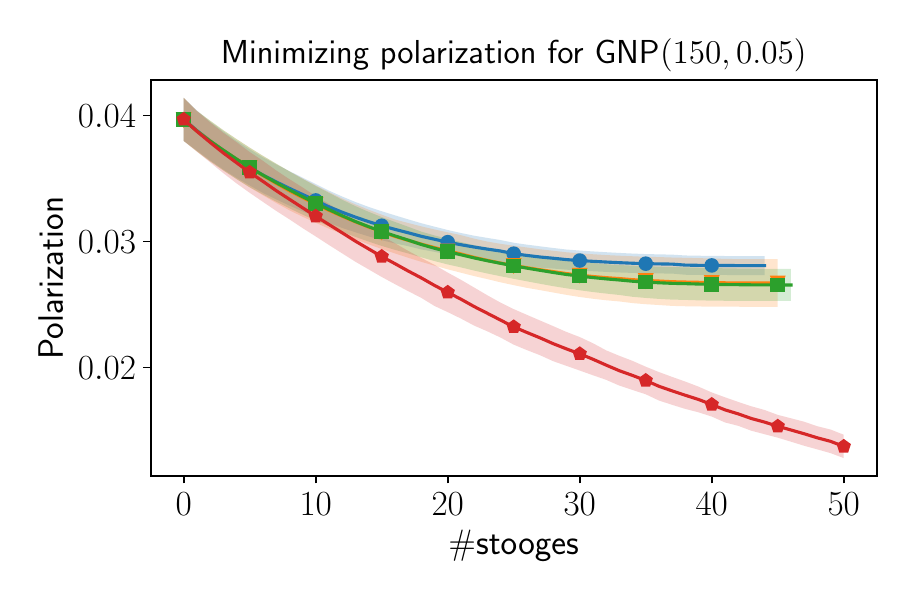}
    
    \includegraphics[width=\figwidth]{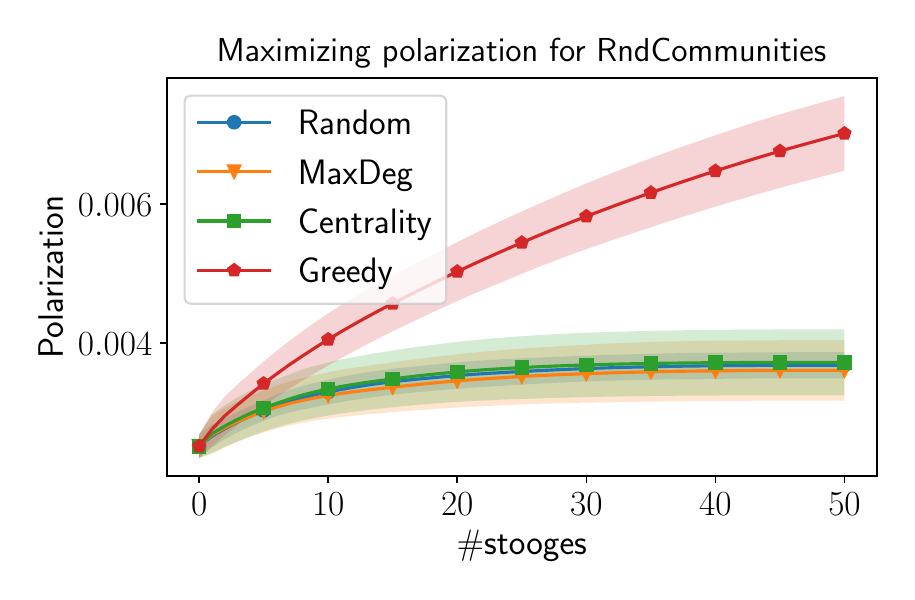}~
    \includegraphics[width=\figwidth]{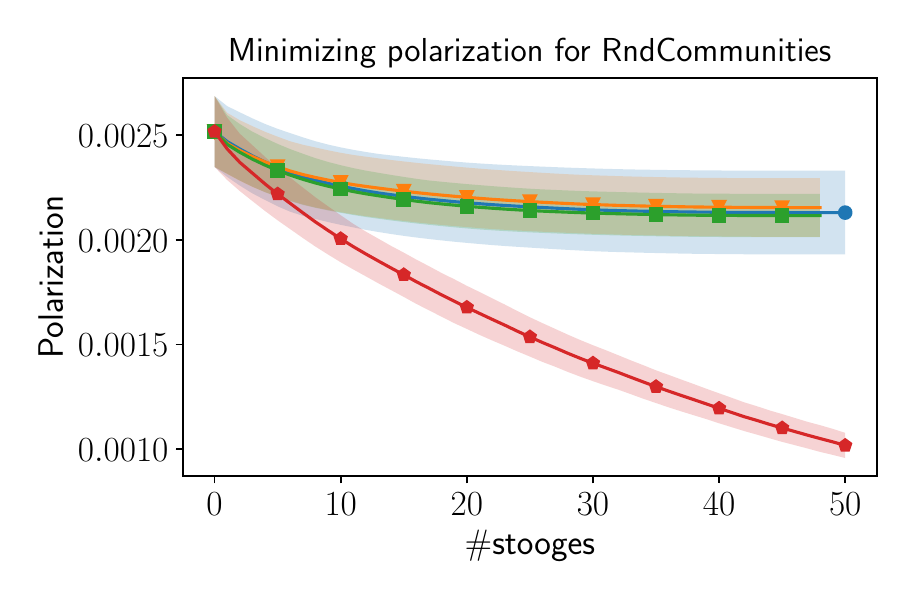}
    
    \includegraphics[width=\figwidth]{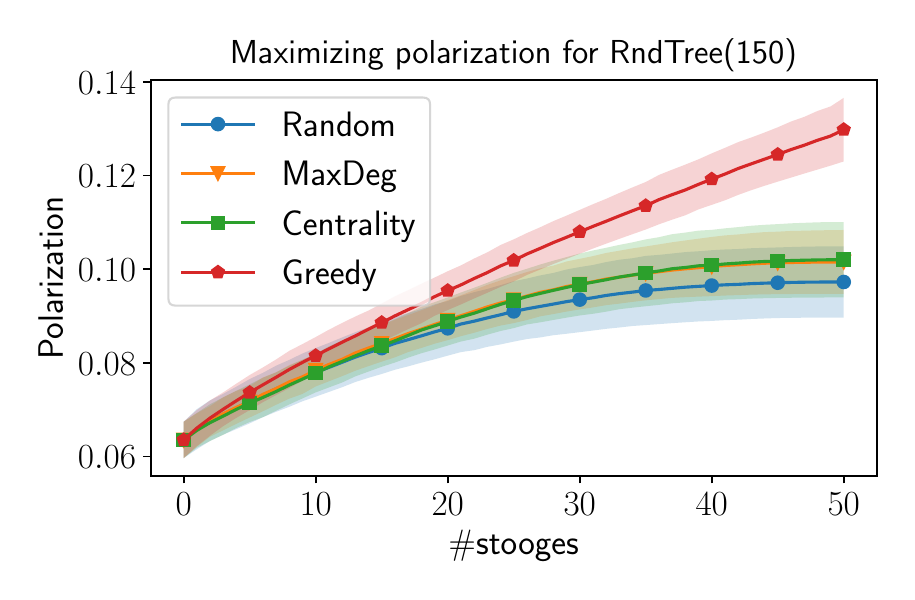}~
    \includegraphics[width=\figwidth]{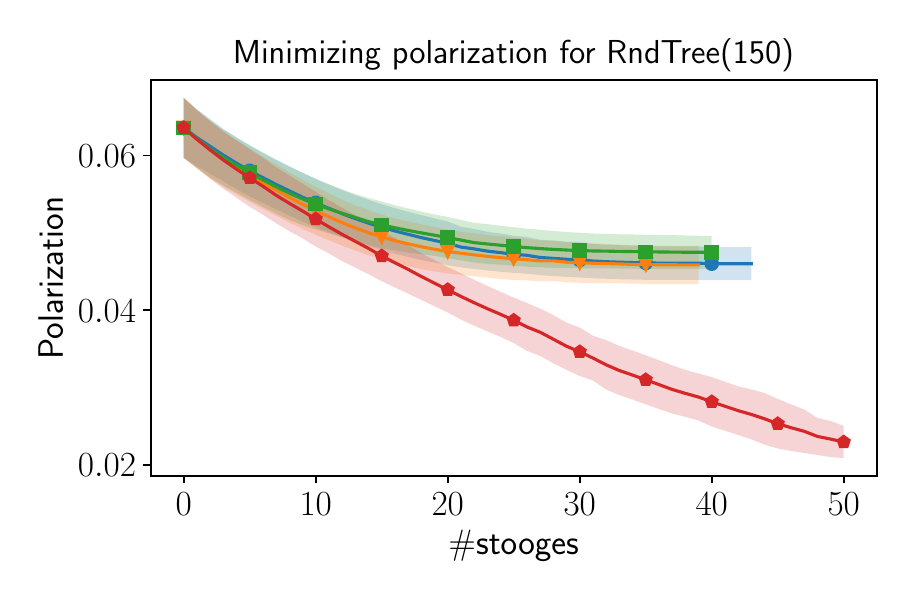}

    \includegraphics[width=\figwidth]{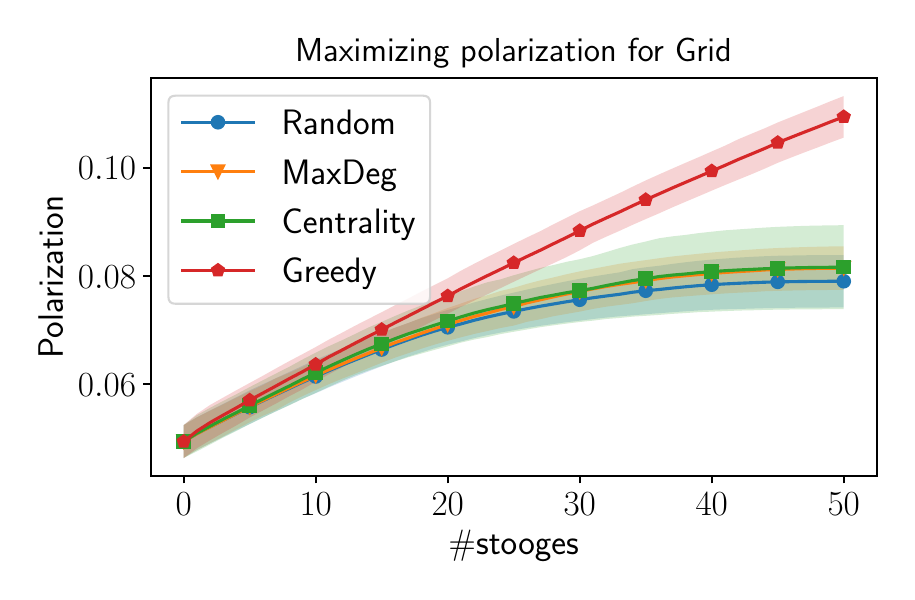}~
    \includegraphics[width=\figwidth]{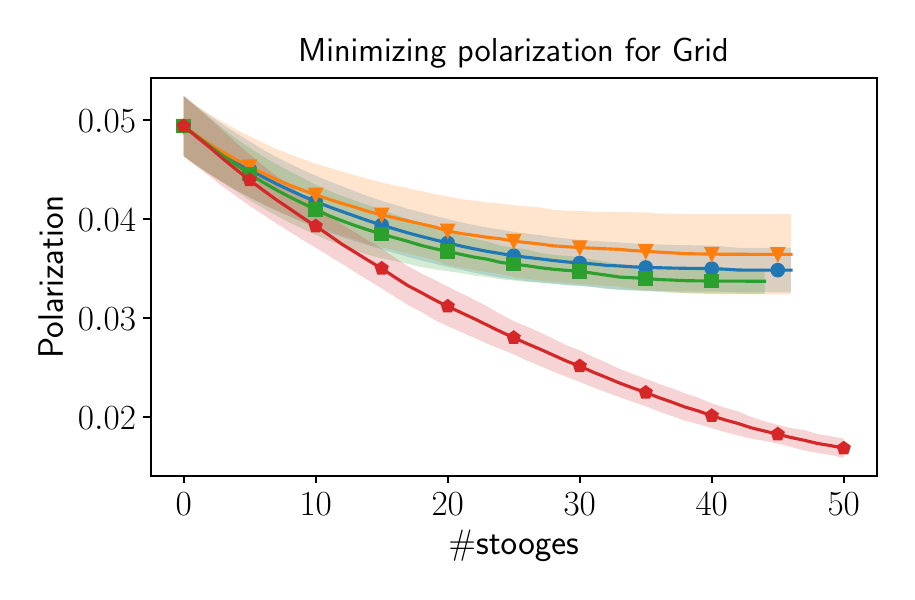}
    \caption{Maximization (left) and minimization (right) of the polarization,
    analogously to Figure~\ref{fig:synthetic-pol}.}
    \label{fig:synthetic-pol}
\end{figure}

\begin{figure}
    \centering
    \includegraphics[width=0.9\linewidth]{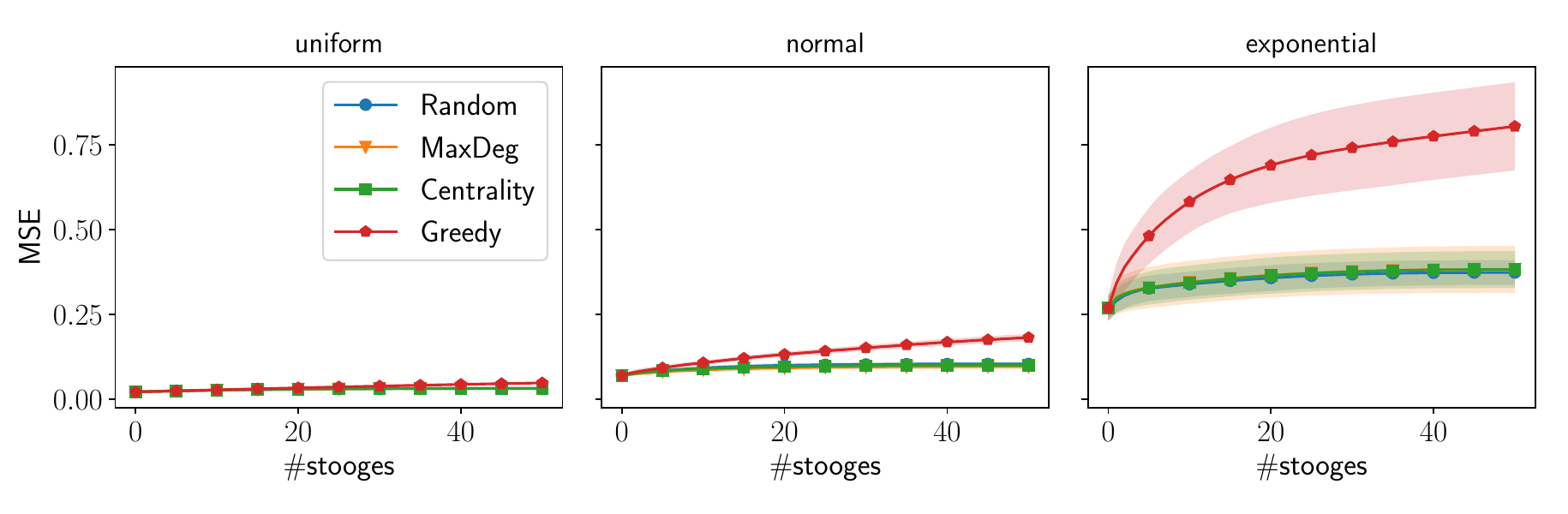}
    \figspace
    \caption{Maximizing the MSE on \RndCommunities
    when innate opinions are sampled from a uniform distribution $s_v \sim \mathcal U([0, 1])$ (left), normal distribution $s_v \sim \mathcal N(0.5, \sigma^2)$ (center), and exponential distribution $s_v \sim \mathrm {Exp}(\sigma^{-1})$ (right) where we choose
    $\sigma^2=\frac 1 {12}$ such that the all variances coincide with the
    variance of the uniform distribution.}
    \label{fig:distributions}
\end{figure}

\begin{figure}
    \centering
    
    \includegraphics[width=\figwidth]{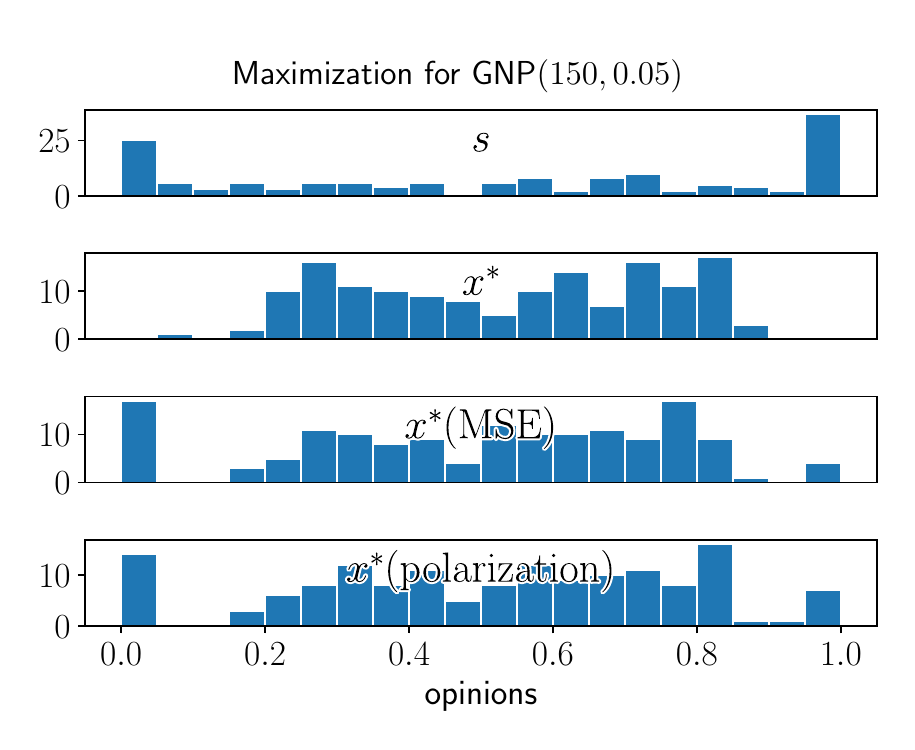}~
    \includegraphics[width=\figwidth]{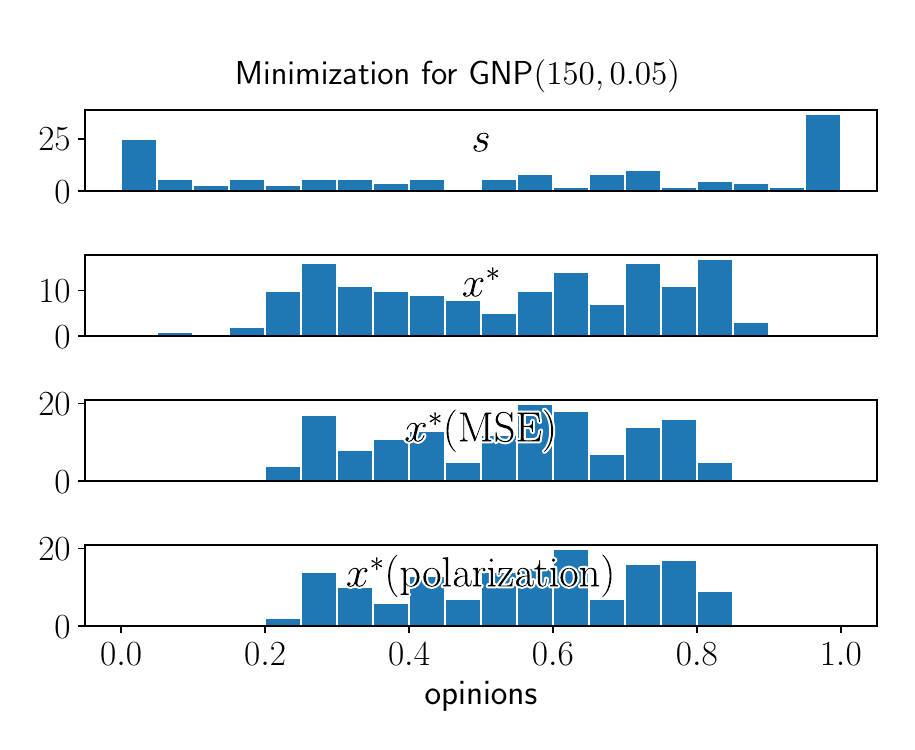}
    
    \vspace{-8pt}

    \includegraphics[width=\figwidth]{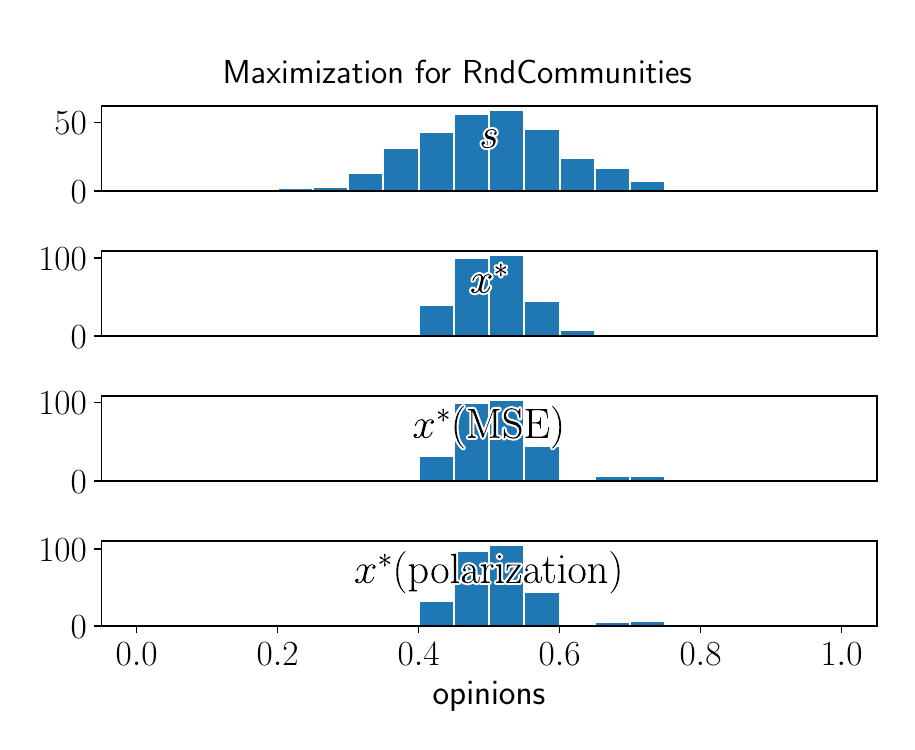}~
    \includegraphics[width=\figwidth]{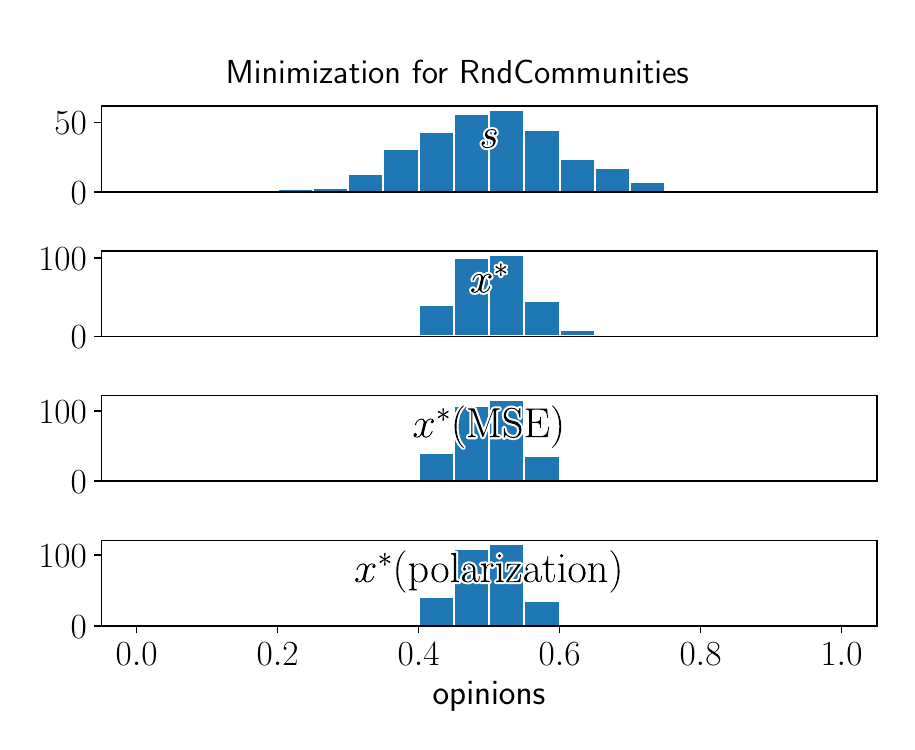}
    
    \vspace{-8pt}
    
    \includegraphics[width=\figwidth]{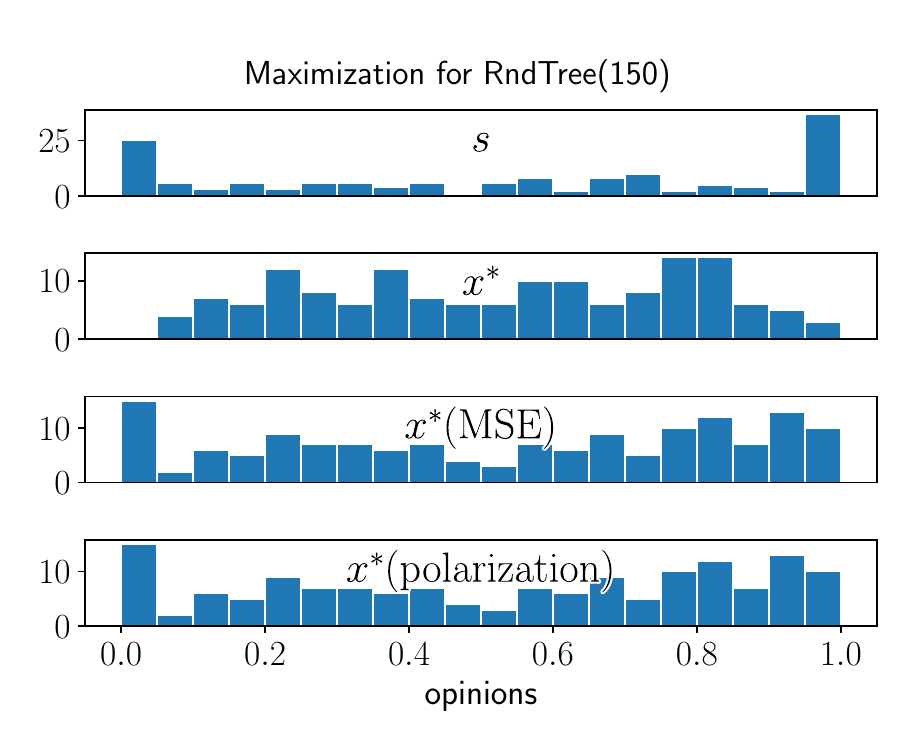}~
    \includegraphics[width=\figwidth]{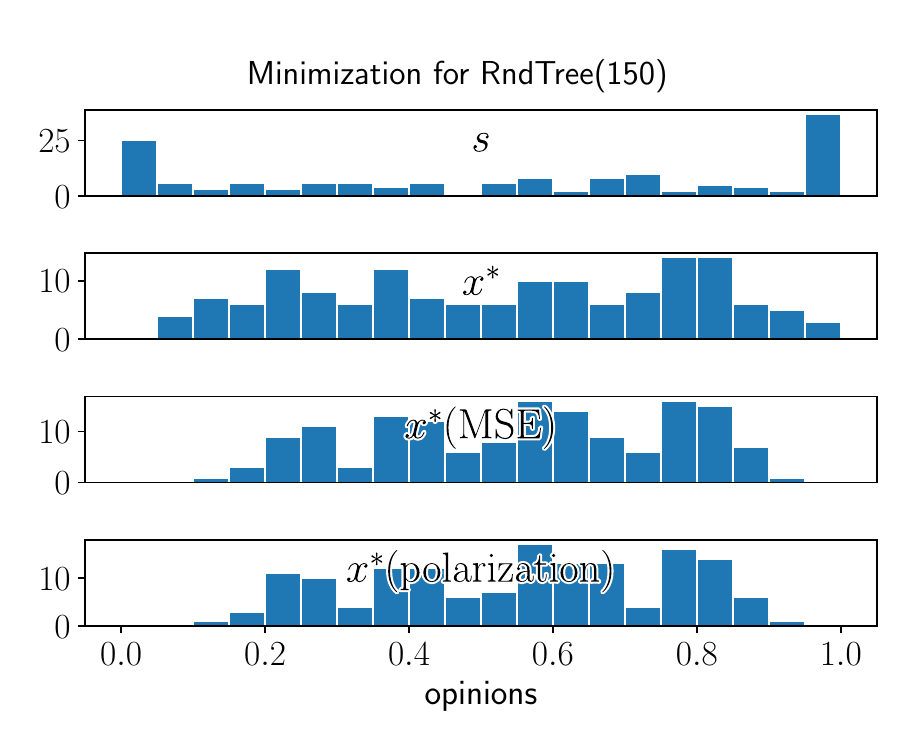}
    
    \vspace{-8pt}
    
    \includegraphics[width=\figwidth]{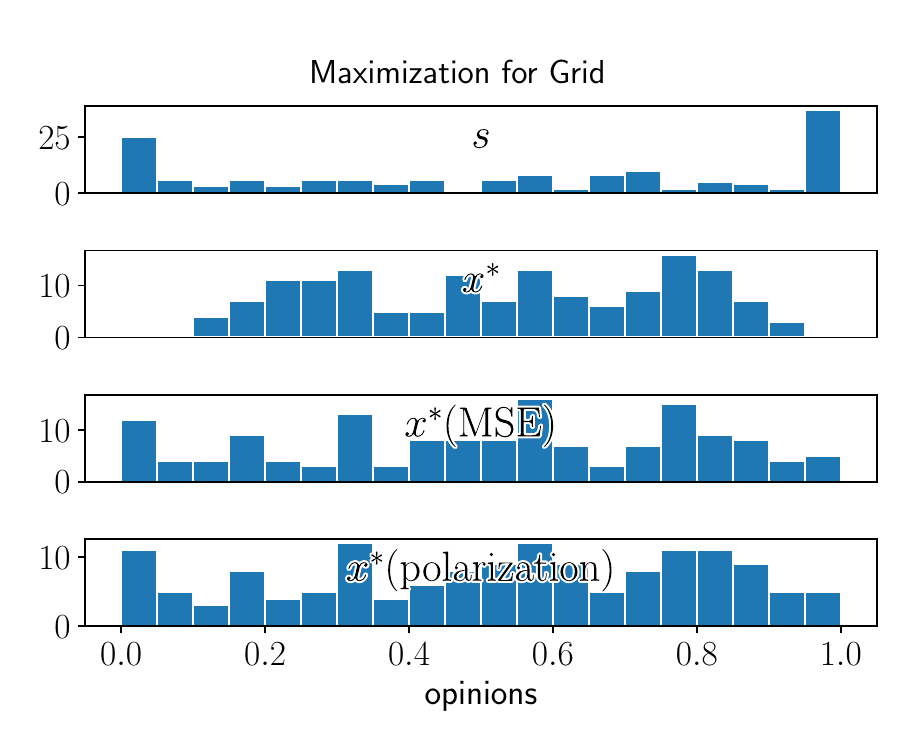}~
    \includegraphics[width=\figwidth]{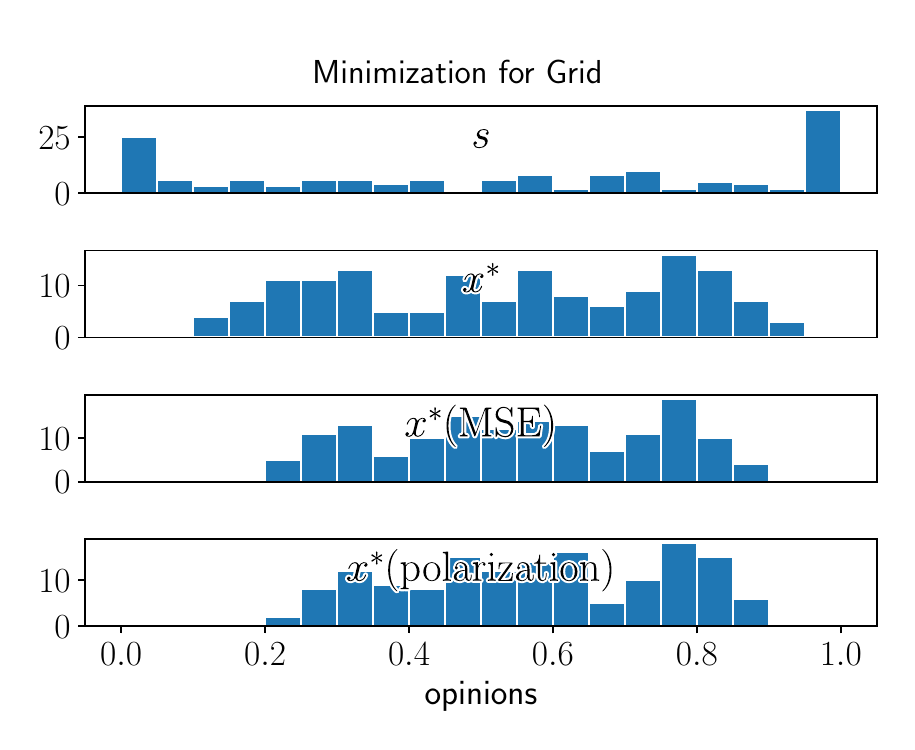}
    
    \vspace{-8pt}
    
    \caption{Distribution of the innate opinions ($s$), equilibrium opinions before
    ($x^*$) and after adding stooges for optimizing the MSE ($x^*(\mathrm{MSE})$)
    and the Polarization ($x^*(\mathrm{Polarization})$).
    We show maximization (left) and minimization (right) on a severa
    synthetic instances.}
    \label{fig:synthetic-opinions}
\end{figure}

\begin{figure}
    \centering
    \includegraphics[width=\figwidth]{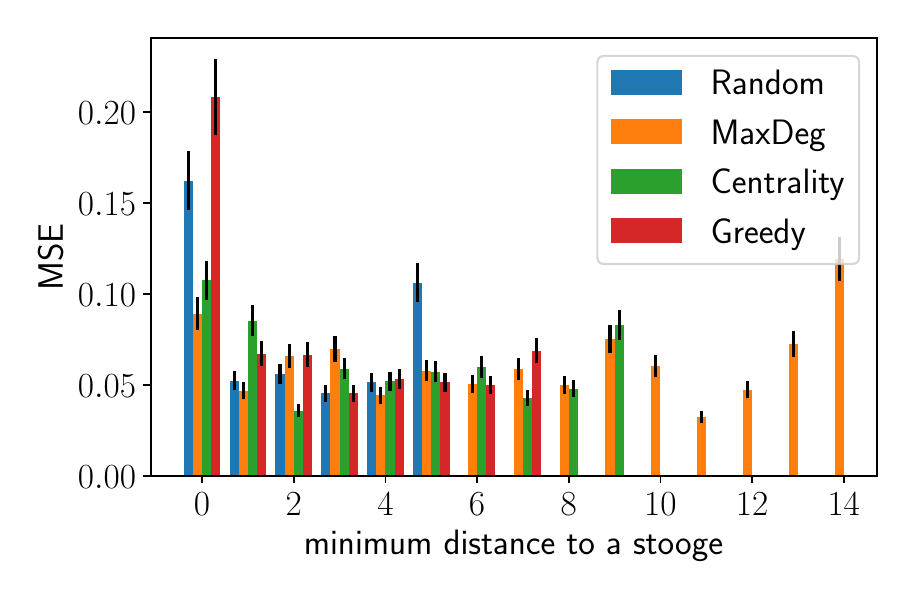}~
    \includegraphics[width=\figwidth]{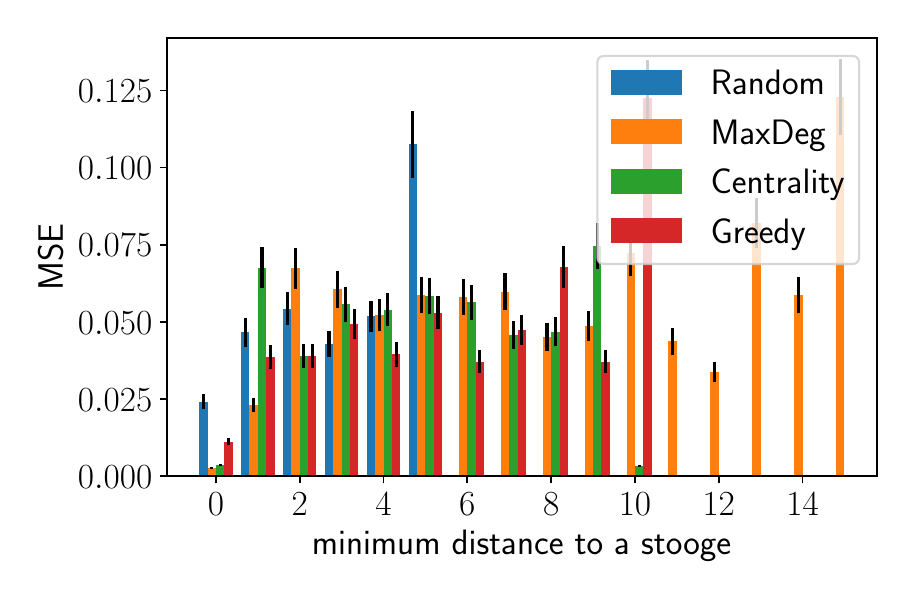}
    \figspace
    \caption{MSE of the set of nodes nodes at a given distance to the stooges, for maximization (left) and minimization (right), on a $\Grid$ instance.
    Error bars show a 10th of a standard deviation.}
    \label{fig:dists-plot}
\end{figure}

\spara{Optimization of MSE and Polarization.}
Figures~\ref{fig:synthetic-mse} and \ref{fig:synthetic-pol}
show MSE and polarization
for an increasing number
of stooges under Algorithm~\ref{alg:algo1}
($\mathsf{Greedy}$) and baselines.
Our algorithm consistently outperforms the baselines, even
the stooge selection of the baseline algorithms
does work well for a small number of stooges.
Interestingly, the $\mathsf{Random}$ baseline performs almost
as well as $\mathsf{MaxDeg}$ and $\mathsf{Centrality}$ on our
synthetic instances.
In Figure~\ref{fig:distributions}, we show the behavior of
our algorithms when the innate opinions are sampled
from different distributions.
Even though the variance of the innate opinions
are kept equal, we can see that more extreme
opinions lead to a higher MSE in the equilibrium
opinions and make the network more susceptible
for an increase in MSE by adding stooges
via \textsf{Greedy}.

Figure~\ref{fig:synthetic-opinions} shows the
innate and equilibrium before
and after adding stooges.
This shows how the maximization and minimization objectives clearly
differ in bringing opinions together or farther apart.
Optimizing for the MSE and the polarization results
in approximately the same distribution over equilibrium
opinions.

\spara{Stooge Location.}
We run experiments to showcase where
our approach and baselines selects stooges.
We run \textsf{Greedy} to select
10 stooges.
In Figure~\ref{fig:dists-plot}, we determine
the minimum distance of a vertex to any stooge,
and group the vertices according to this distance.
We show the MSE among each of these groups.
We observe that for maximization, our greedy approach
selects a set of stooges (distance 0) that itself
experiences a high MSE. This is reversed for minimization.
Furthermore, the MSE remains relatively constant for
groups of other distances, but this is not the
case for the baseline methods.

\begin{table}
    \centering
    \caption{We compute the set of stooges for maximization
    and minimization of the MSE, and report the Jaccard similarity
    between both sets across five runs.}
    \label{tab:isect}
    \medskip
    \begin{tabular}{rccccc}
        \toprule
         Graph type &
            $\GNP(150, 0.05)$ &
            $\RndCommunities$ &
            $\RndTree$(150) &
            $\Grid$ &
            $\Star$ \\
        \midrule
         Jaccard sim. &
            $0.34 \pm 0.07$ &
            $0.83 \pm 0.10$ &
            $0.13 \pm 0.05$ &
            $0.27 \pm 0.05$ &
            $0.03$ \\
        \bottomrule
    \end{tabular}
\end{table}

\spara{Comparing Minimization and Maximization.}
Table~\ref{tab:isect} shows the Jaccard similarity
between the set of stooges selected for the
maximization and minimization objectives.
The size of the intersection varies
with the network structure, as is apparent from
the difference in intersection sizes.
The maximal intersection is achieved
for $\RndCommunities$, indicating that
community structures contain a set
of influential nodes responsible for
controlling the wisdom of crowds.

\begin{figure}
    \centering
    \includegraphics[width=\figwidth]{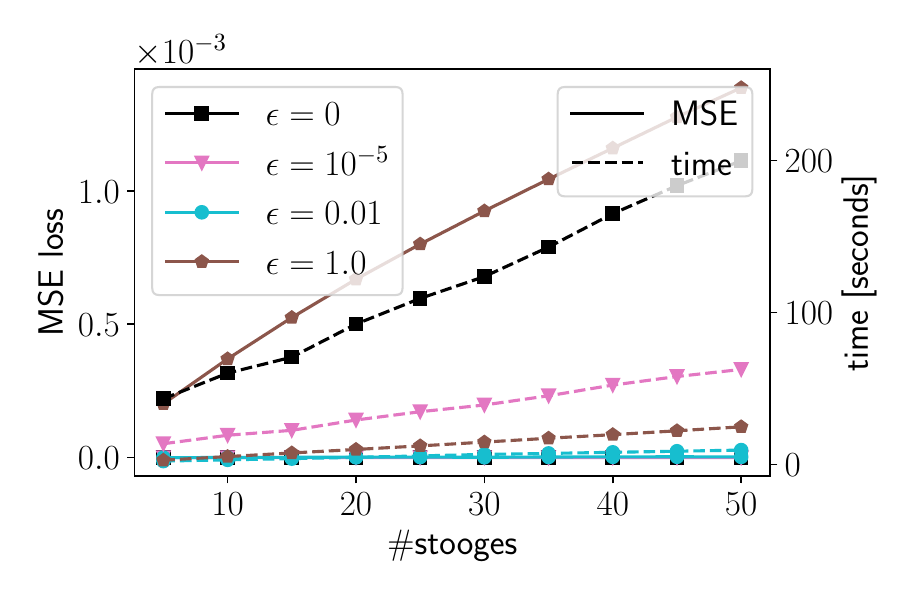}~
    \includegraphics[width=\figwidth]{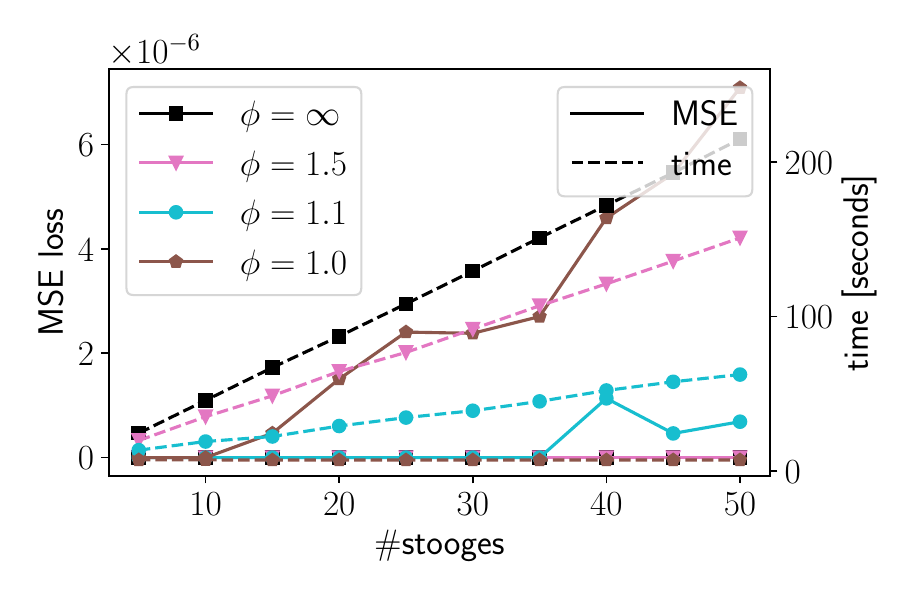}
    \figspace
    \caption{Performance of Algorithm~\ref{alg:algo1}
    for different choices of $\epsilon$ and $\phi$
    for an increasing number of stooges, on $\RndTree(1000)$ instances.
    For the left
    plot, we fix $\phi=1.1$ and vary $\epsilon$. For the right
    plot, we fix $\epsilon=10^{-5}$ and vary $\phi$.
    We show the loss in MSE compared to the
    optimal parameter choices $\epsilon=0$ (left)
    and $\phi=\infty$ (right). }
    \label{fig:algo-eps-phi}
\end{figure}

\begin{figure}
    \centering
    \includegraphics[width=\figwidth]{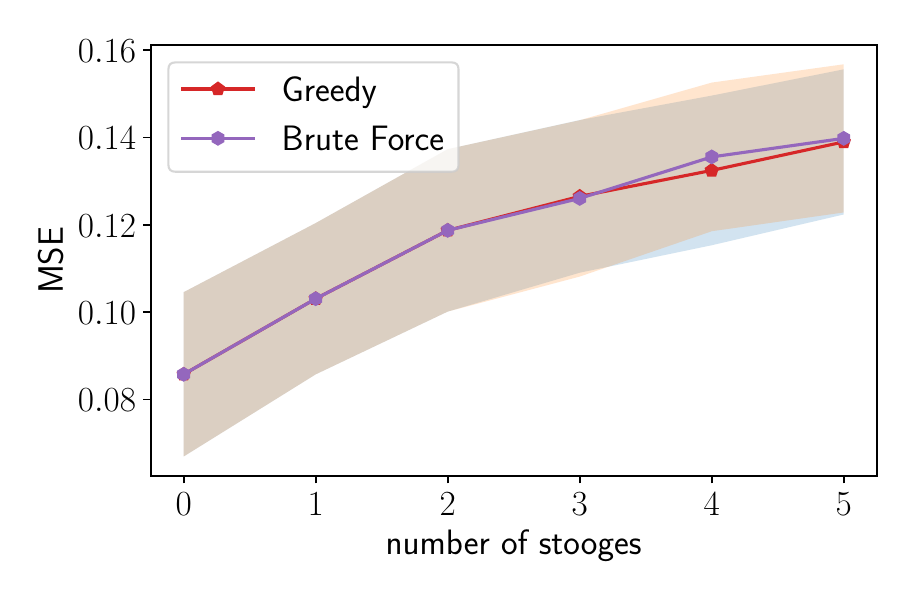}
    \caption{Comparing \textsf{Greedy} with a brute force search
    on $\GNP(150, 0.05)$ instances.}
    \label{fig:brute-force}
\end{figure}

\spara{Parameters of Algorithm~\ref{alg:algo1}.}
We now want to experimentally
motivate the parameter choices
of \textsf{Greedy}.
Figure~\ref{fig:algo-eps-phi} contains
the MSE and runtimes achieved
by Algorithm~\ref{alg:algo1}
for different parameter choices.
We vary the accuracy $\epsilon$ of
the equilibrium opinions and the
slack $\phi$ for the lazy evaluation.
We observe that with
$\epsilon=10^{-5}$ and $\phi=1.1$,
we are able to extract almost as much
values as for a better accuracy
$\epsilon$ or larger slack $\phi$,
while using only a fraction of the
time.
We further investigate the approximation
guarantee obtained via these parameter
choices by comparing with a brute force
approach in
Figure~\ref{fig:brute-force}.
Our greedy approach is able
to extract almost all of the value
of the brute force search.

\newpage

\subsection{Additional Experimental Results on Real-World Data}

\begin{table}[b]
    \centering
    \caption{We report the relative increase and
    decrease for optimizting MSE and polarization
    on real-world instances for $50$ stooges.}
    \medskip
    \label{tab:rel-change}
    \setlength{\tabcolsep}{4pt}
    \begin{tabular}{r|rr|rr}
        \toprule
        & \multicolumn{2}{|l|}{MSE} & \multicolumn{2}{l}{Polarization} \\
        Dataset & Maximization & Minimization & Maximization & Minimization \\
        \midrule
        \textsf{war}                   & $207.80\%$ & $50.43\%$ & $206.16\%$ & $50.62\%$ \\
        \textsf{vax}                   & $43.25\%$ & $19.96\%$ & $28.22\%$ & $2.32\%$ \\
        \textsf{vaxnovax-retweet}      & $0.84\%$ & $1.13\%$ & $0.84\%$ & $1.13\%$ \\
        \textsf{leadersdebate-follow}  & $6.90\%$ & $6.32\%$ & $6.89\%$ & $6.10\%$ \\
        \textsf{leadersdebate-retweet} & $5.82\%$ & $7.93\%$ & $5.82\%$ & $7.93\%$ \\
        \textsf{russia-march-follow}   & $5.40\%$ & $5.22\%$ & $5.40\%$ & $5.24\%$ \\
        \textsf{russia-march-retweet}  & $2.58\%$ & $3.26\%$ & $2.58\%$ & $3.26\%$ \\
        \textsf{baltimore-follow}      & $12.96\%$ & $10.99\%$ & $12.39\%$ & $10.73\%$ \\
        \textsf{baltimore-retweet}     & $3.18\%$ & $5.03\%$ & $3.18\%$ & $5.03\%$ \\
        \textsf{beefban-follow}        & $11.31\%$ & $9.61\%$ & $11.20\%$ & $9.47\%$ \\
        \textsf{beefban-retweet}       & $3.98\%$ & $5.00\%$ & $3.98\%$ & $5.00\%$ \\
        \textsf{gunsense-follow}       & $6.48\%$ & $4.64\%$ & $6.09\%$ & $4.64\%$ \\
        \textsf{gunsense-retweet}      & $4.03\%$ & $5.82\%$ & $4.03\%$ & $5.82\%$ \\
        \bottomrule
    \end{tabular}
\end{table}

We now showcase
experimental results on
the real-world instance
\textsf{War} and the
instances of Garimella
et al.~\cite{garimella18}.
Analogous to Figures~\ref{fig:vax}, and
\ref{fig:vax-pol}
we show the optimization
of the MSE and the polarization.
%
Table~\ref{tab:rel-change} summarizes the
relative increase and decrease for both objectives
on 50 stooges.

\begin{figure}
    \centering
    
    \includegraphics[width=\figwidth]{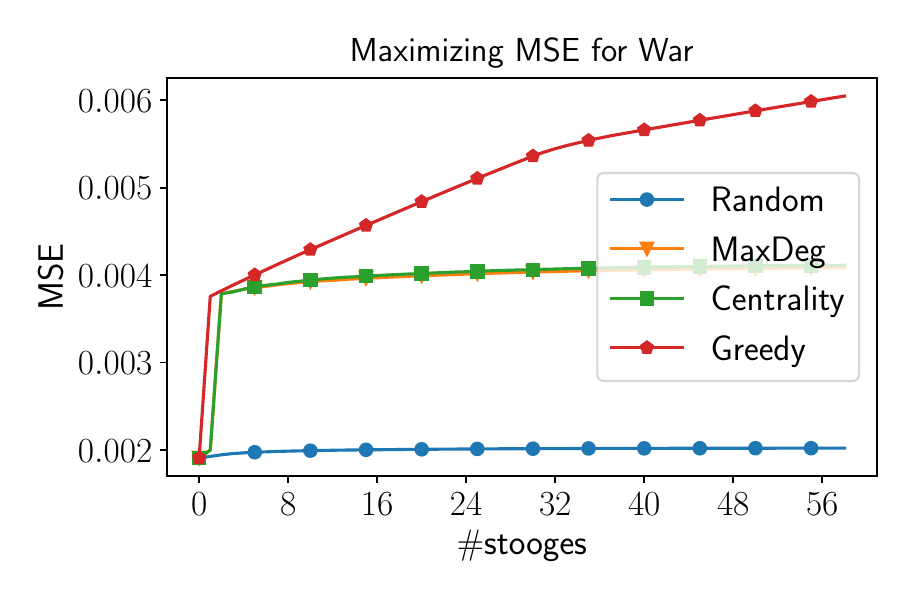}~
    \includegraphics[width=\figwidth]{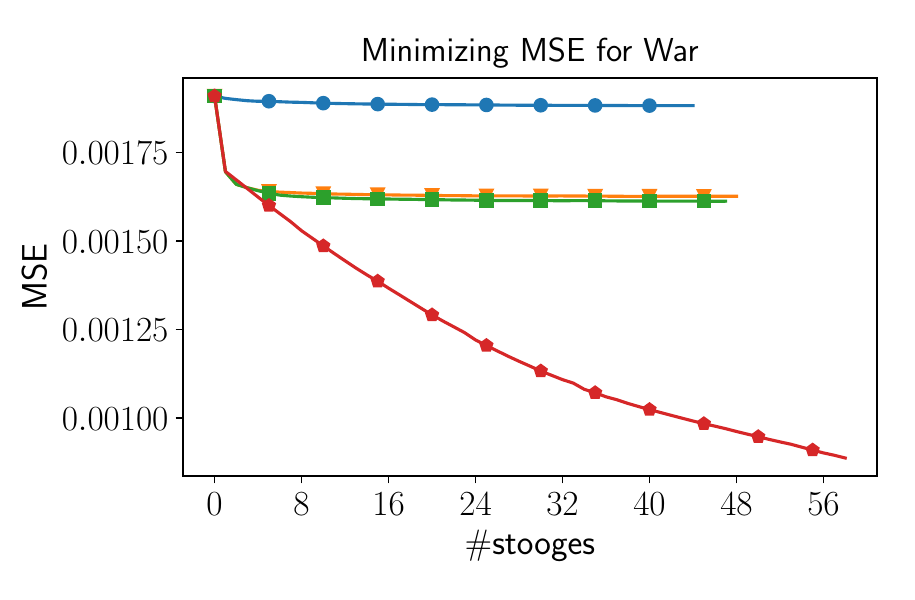}
    
    \includegraphics[width=\figwidth]{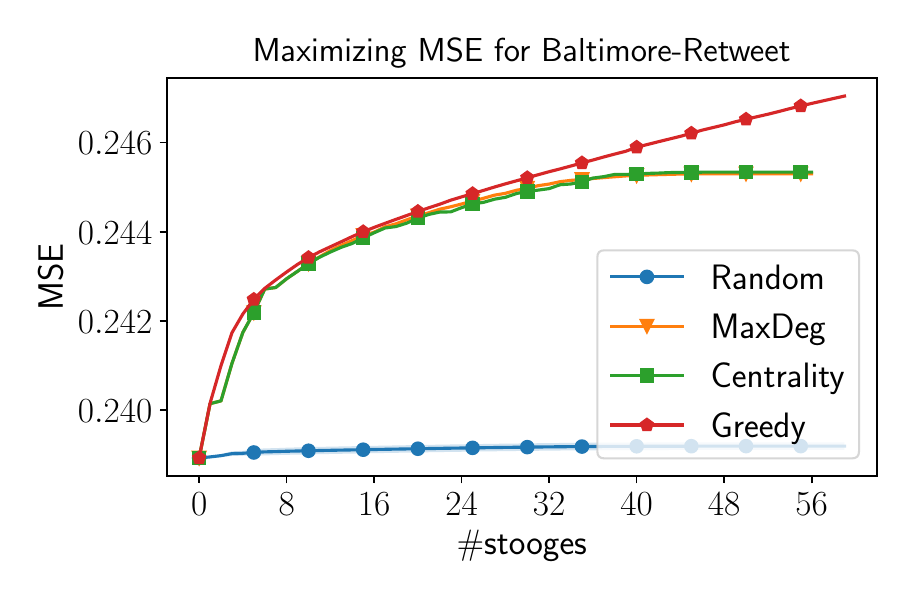}~
    \includegraphics[width=\figwidth]{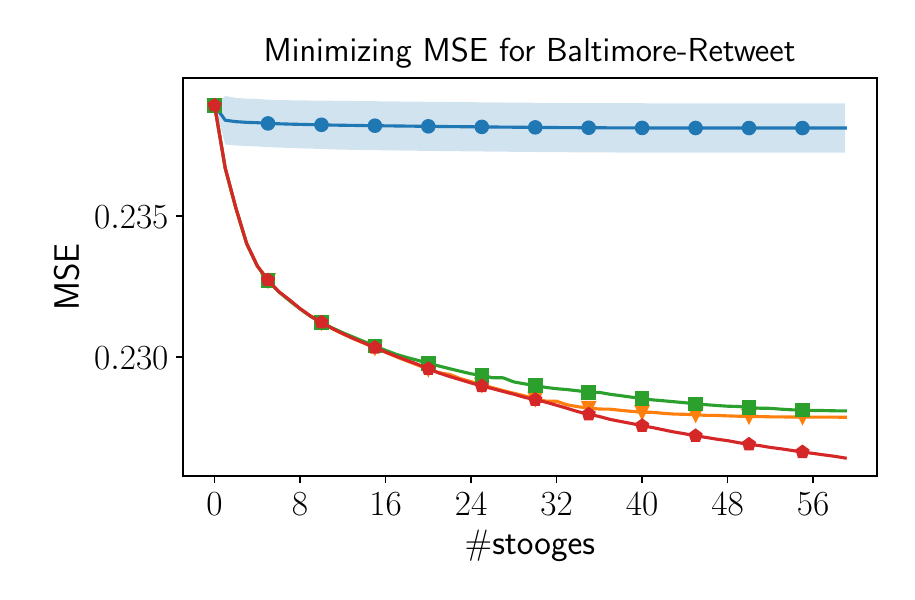}
    
    \includegraphics[width=\figwidth]{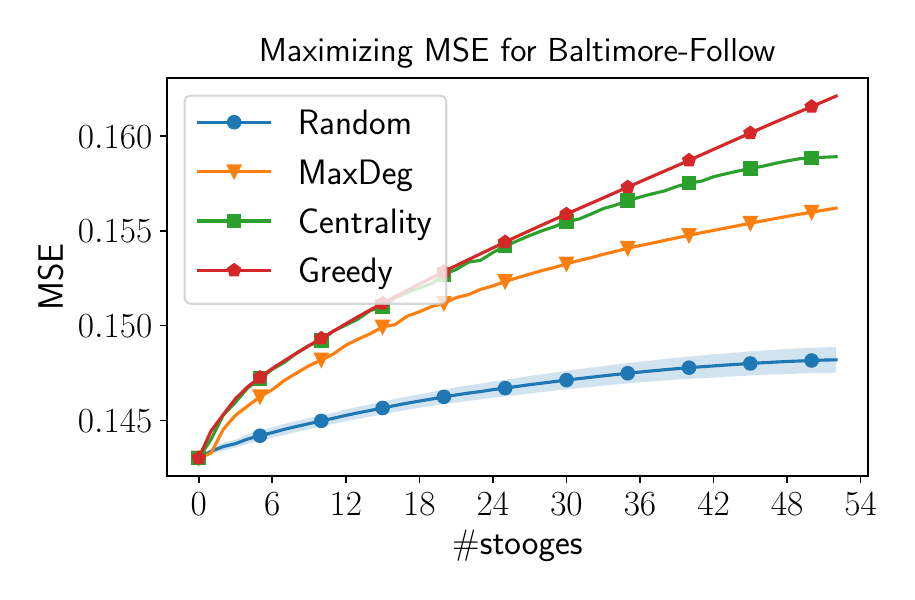}~
    \includegraphics[width=\figwidth]{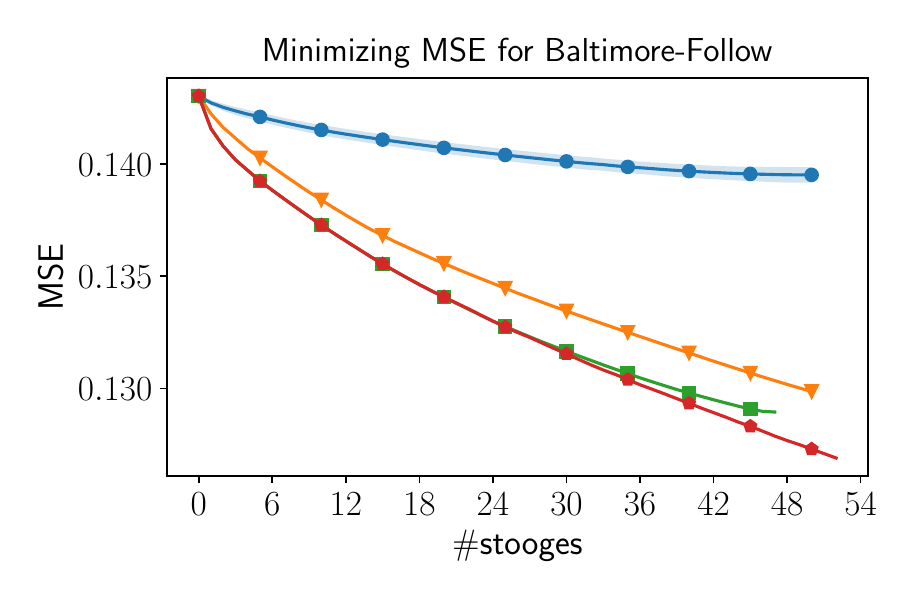}
    
    \caption{Maximization (left) and minimization (right) of the MSE on various
    real-world datasets, analogous to Figure~\ref{fig:synthetic-mse}}
    \label{fig:real-world-1-mse}
\end{figure}

\begin{figure}
    \centering
    
    \includegraphics[width=\figwidth]{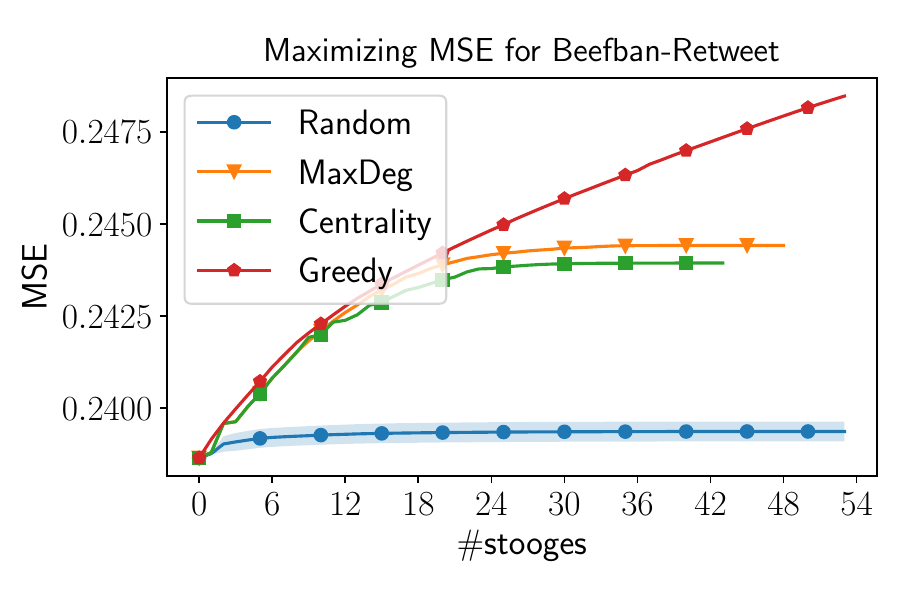}~
    \includegraphics[width=\figwidth]{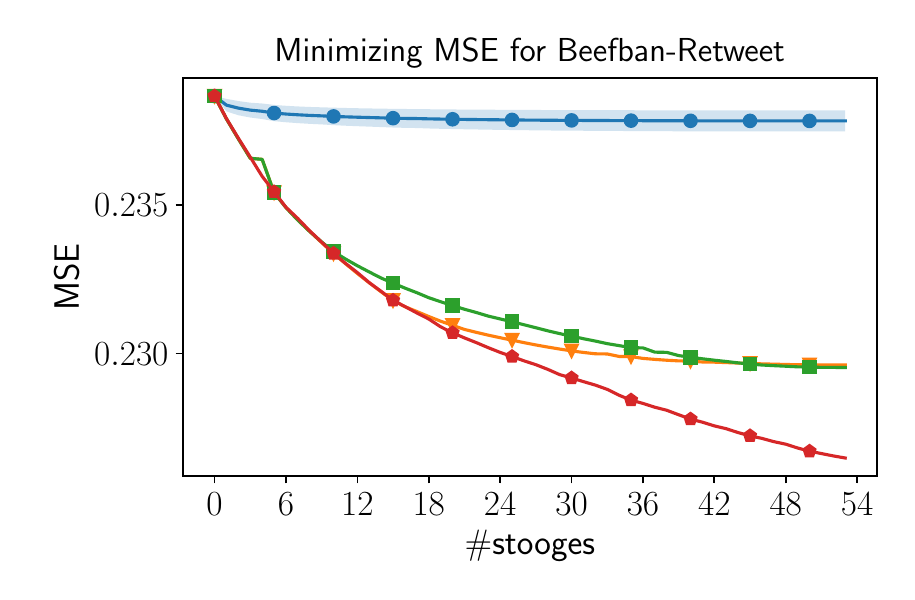}
    
    \includegraphics[width=\figwidth]{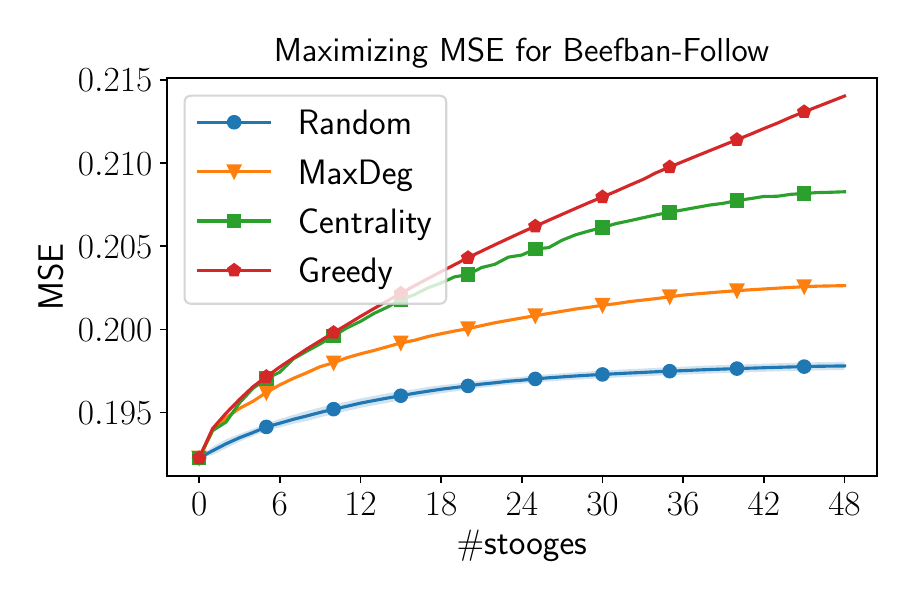}~
    \includegraphics[width=\figwidth]{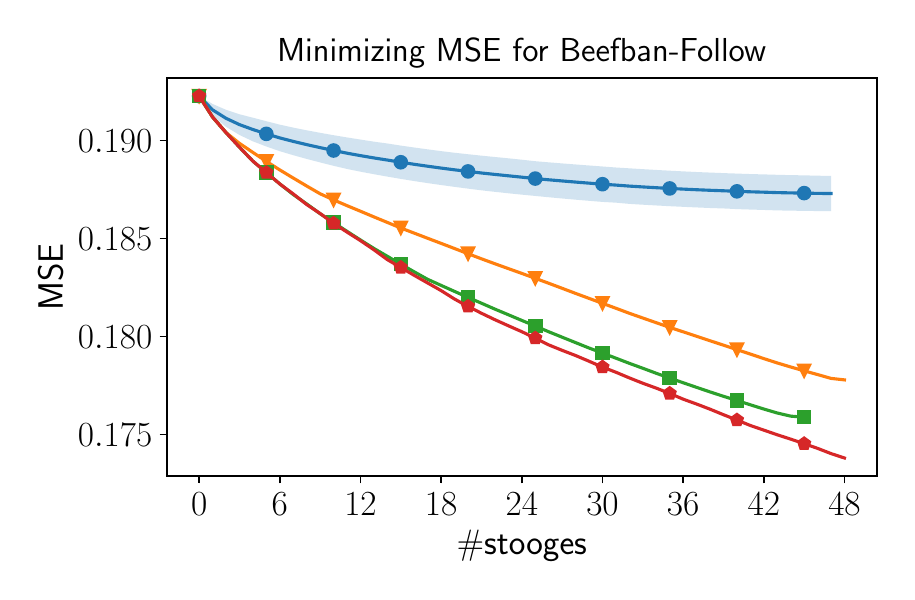}
    
    \includegraphics[width=\figwidth]{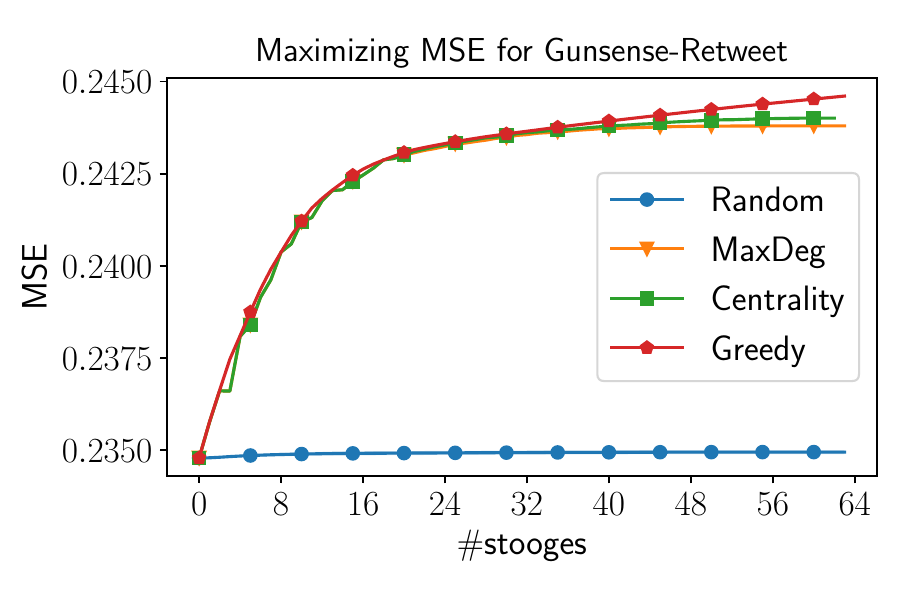}~
    \includegraphics[width=\figwidth]{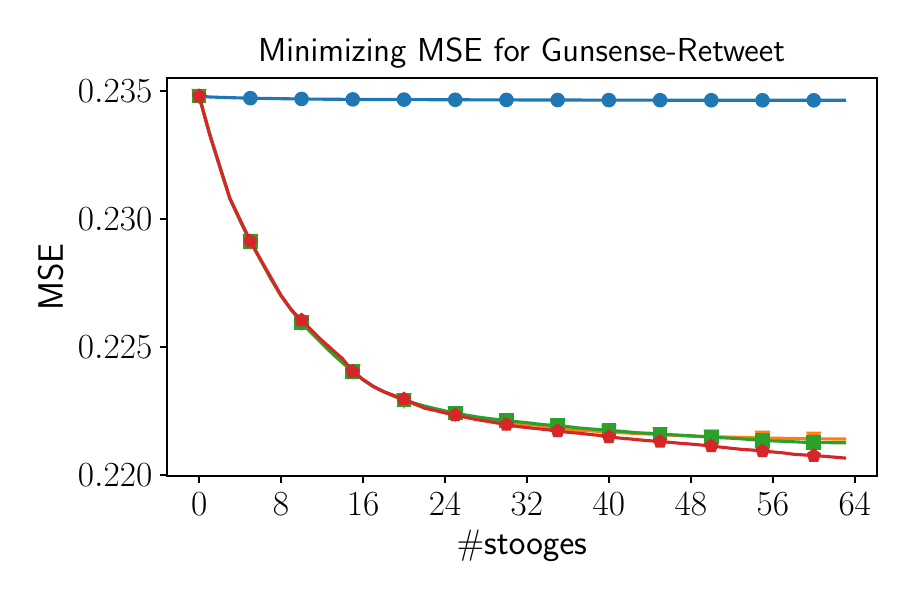}

    \includegraphics[width=\figwidth]{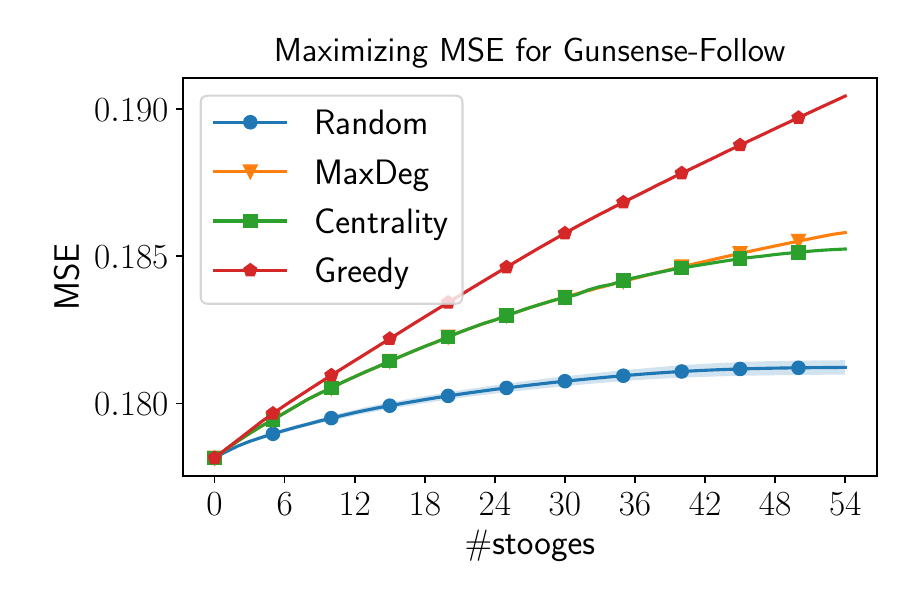}~
    \includegraphics[width=\figwidth]{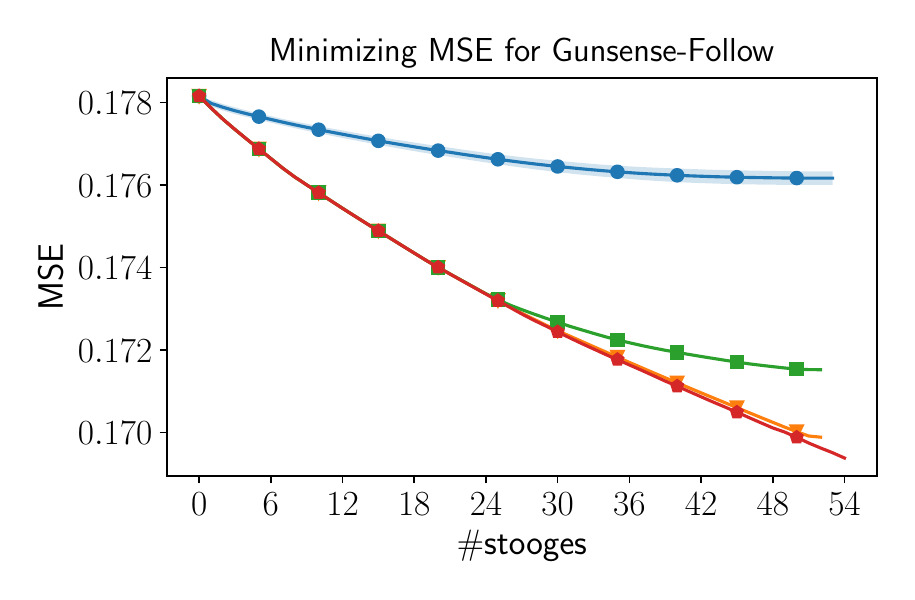}
    
    \caption{Maximization (left) and minimization (right) of the MSE on various
    real-world datasets, analogous to Figure~\ref{fig:synthetic-mse}}
    \label{fig:real-world-2-mse}
\end{figure}

\begin{figure}
    \centering

    \includegraphics[width=\figwidth]{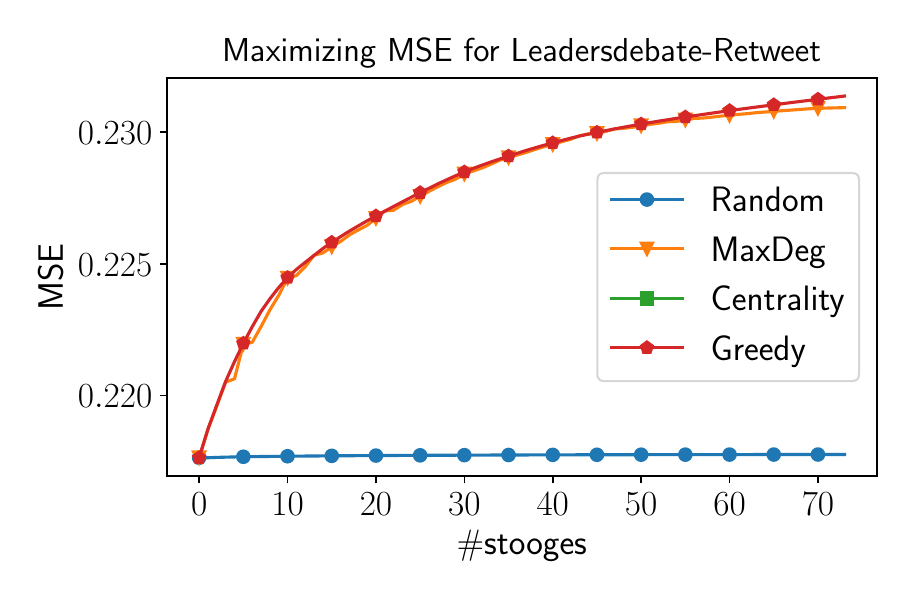}~
    \includegraphics[width=\figwidth]{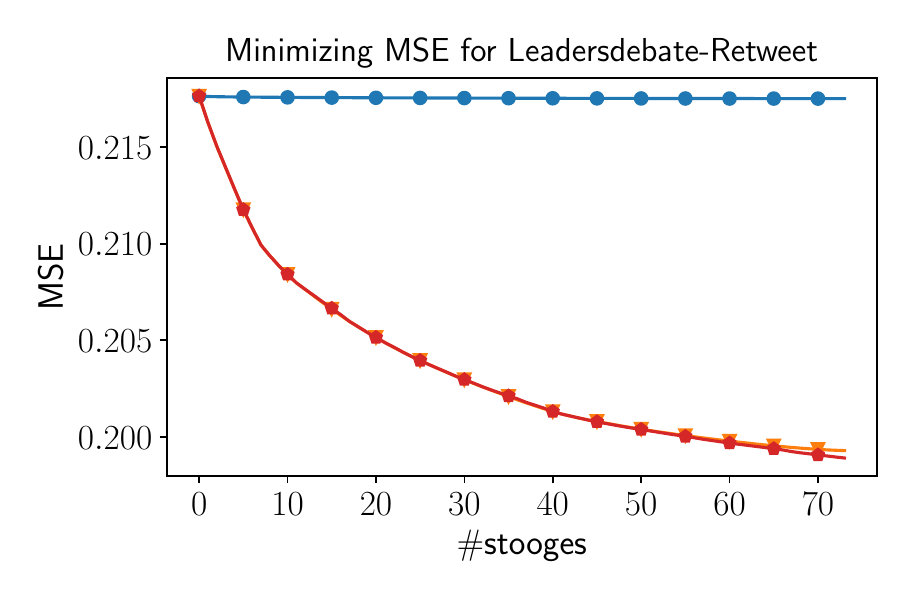}
    
    \includegraphics[width=\figwidth]{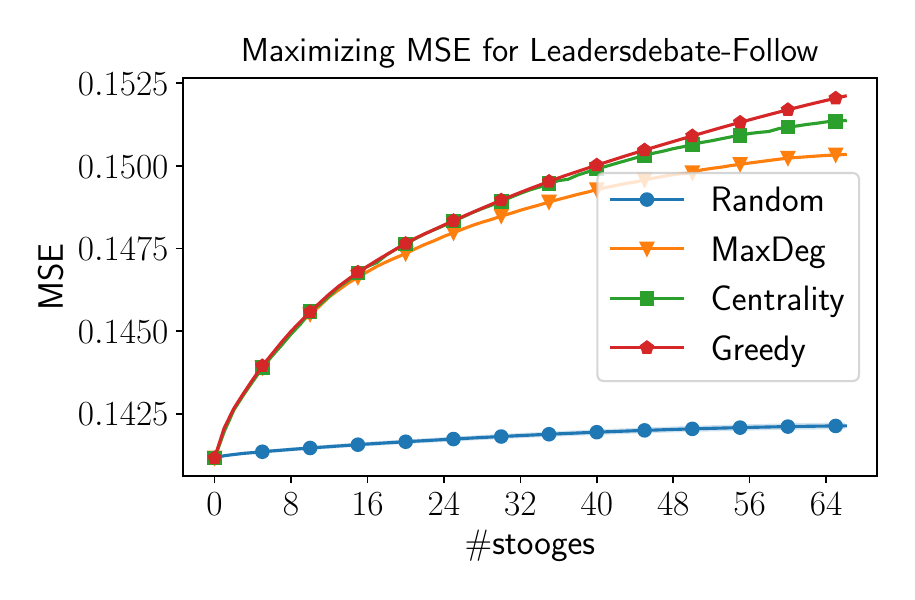}~
    \includegraphics[width=\figwidth]{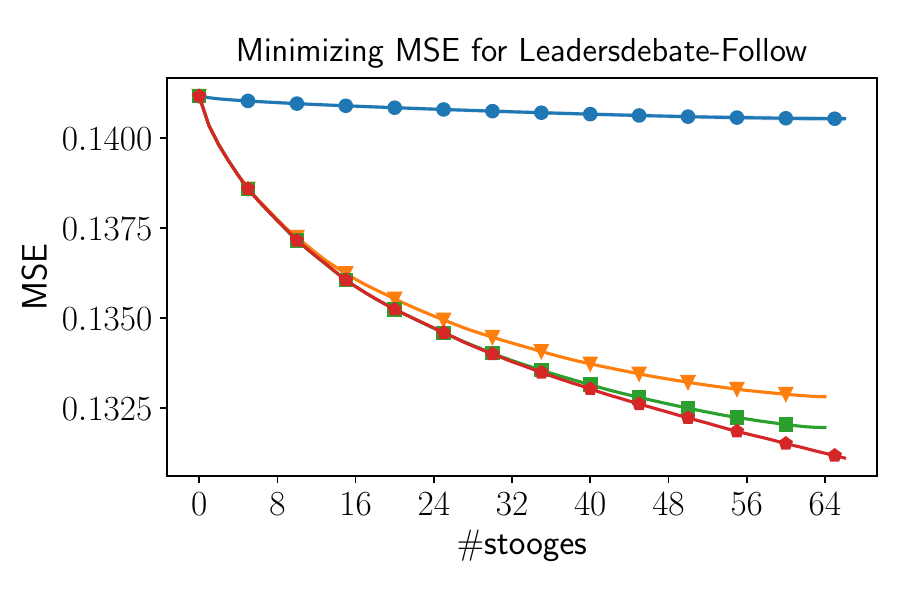}
    
    \includegraphics[width=\figwidth]{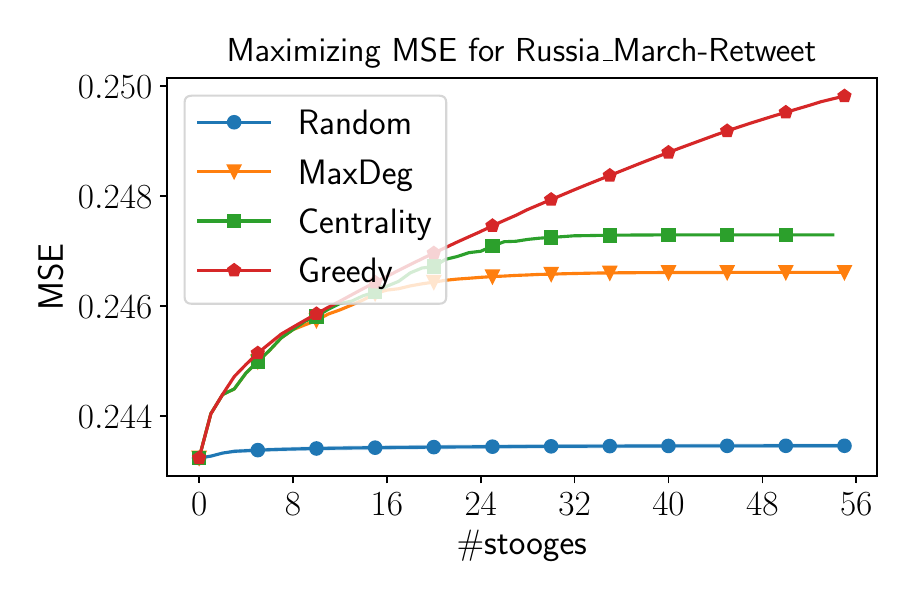}~
    \includegraphics[width=\figwidth]{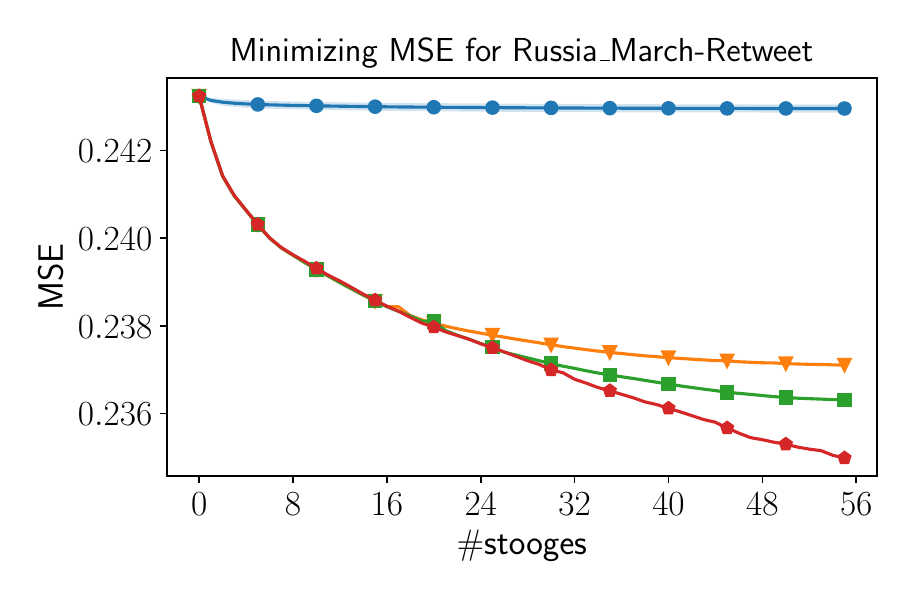}

    \includegraphics[width=\figwidth]{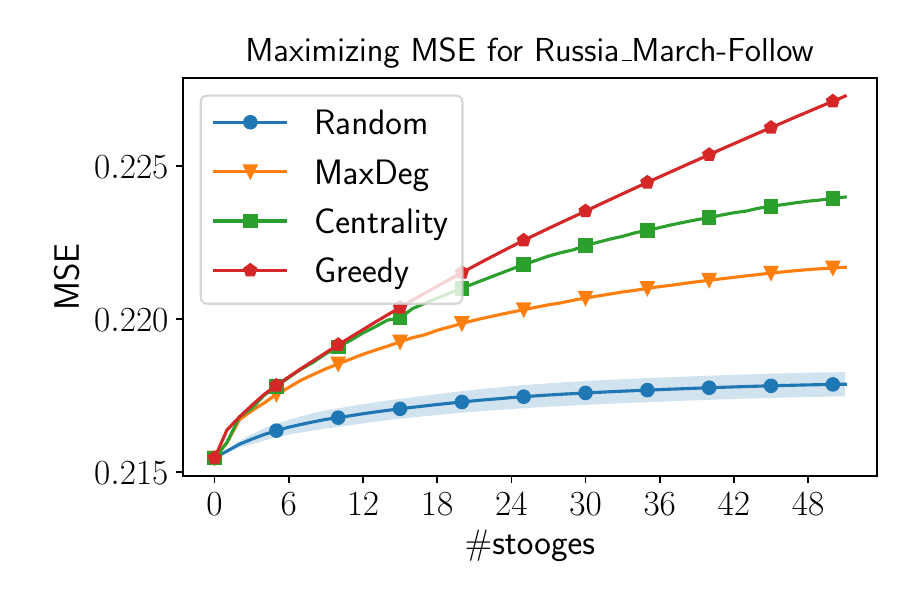}~
    \includegraphics[width=\figwidth]{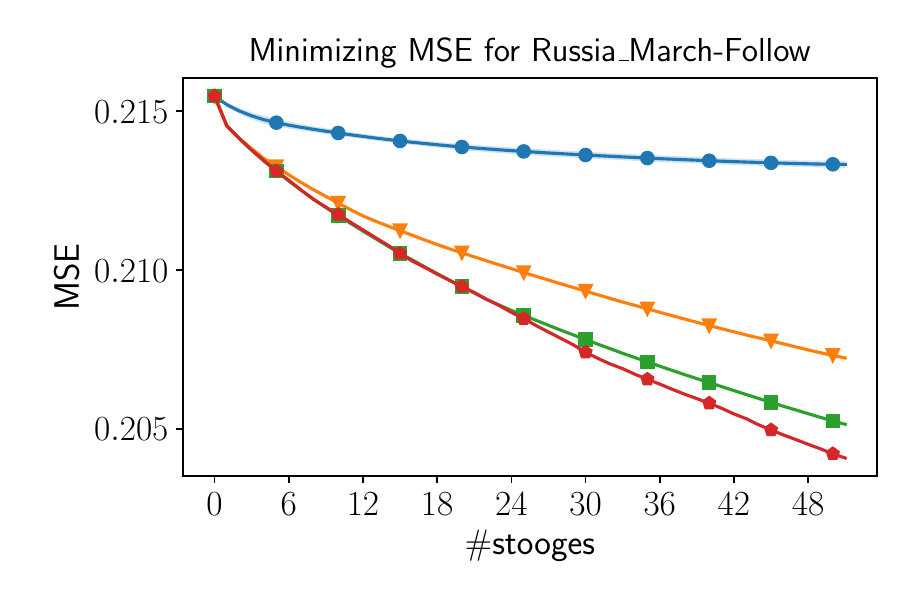}
    
    \caption{Maximization (left) and minimization (right) of the MSE on various
    real-world datasets, analogous to Figure~\ref{fig:synthetic-mse}}
    \label{fig:real-world-3-mse}
\end{figure}

\begin{figure}
    \centering
    
    \includegraphics[width=\figwidth]{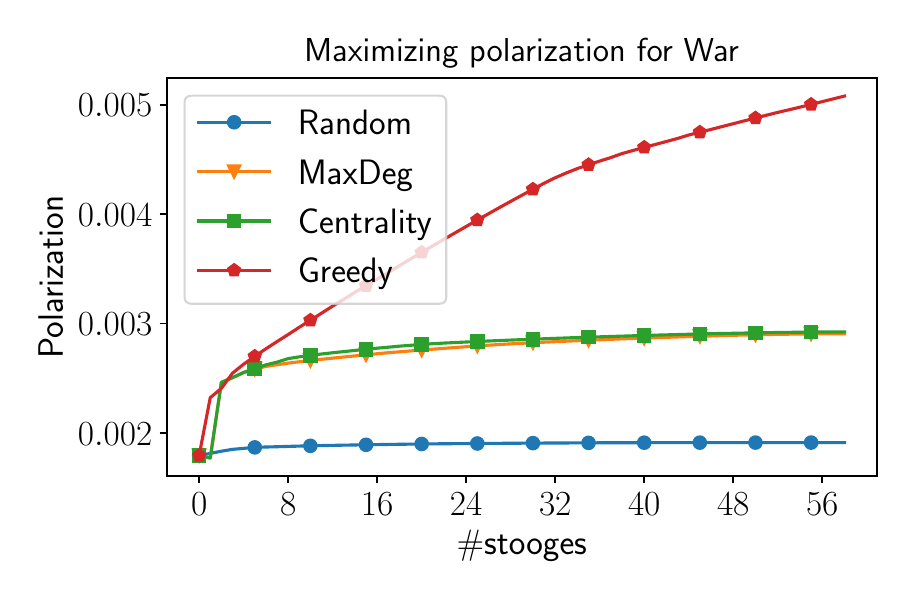}~
    \includegraphics[width=\figwidth]{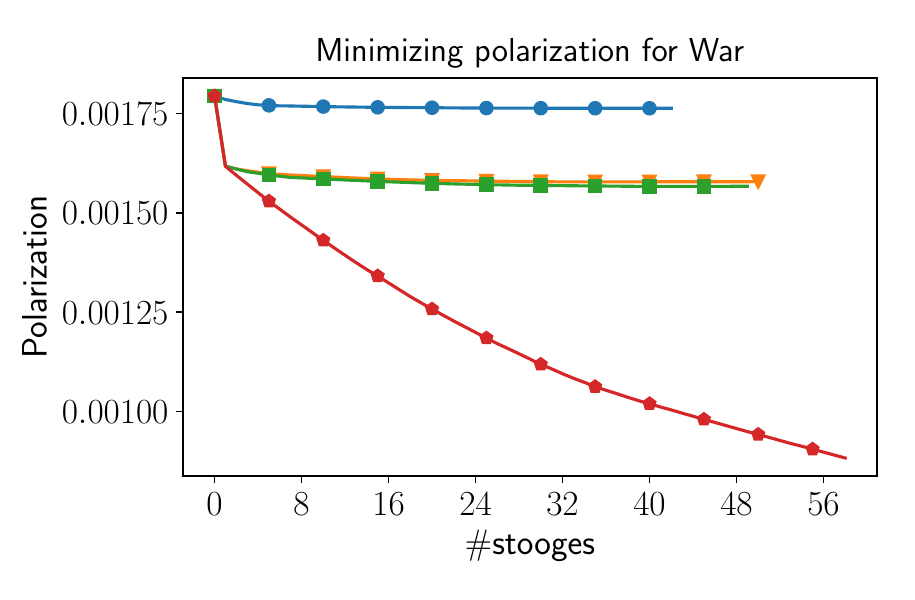}
    
    \includegraphics[width=\figwidth]{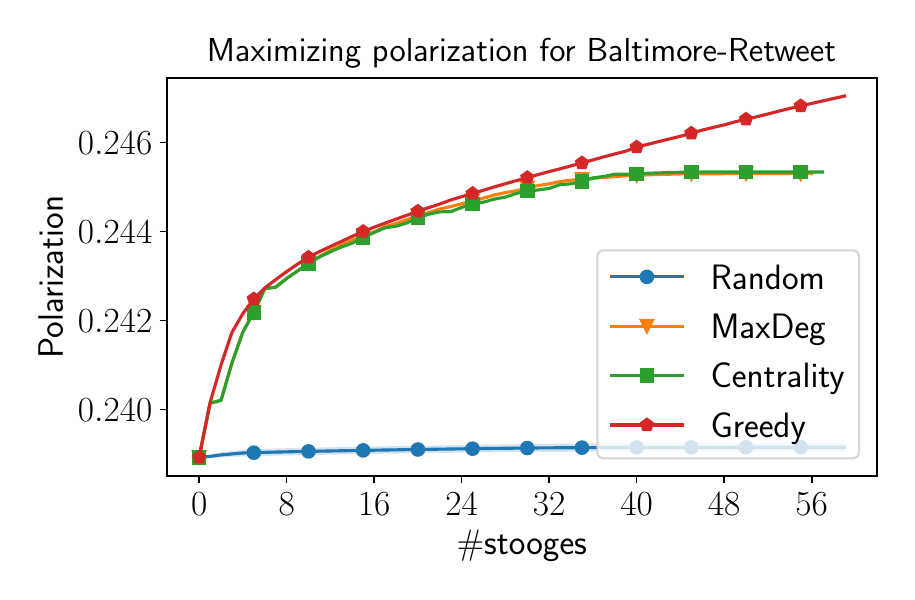}~
    \includegraphics[width=\figwidth]{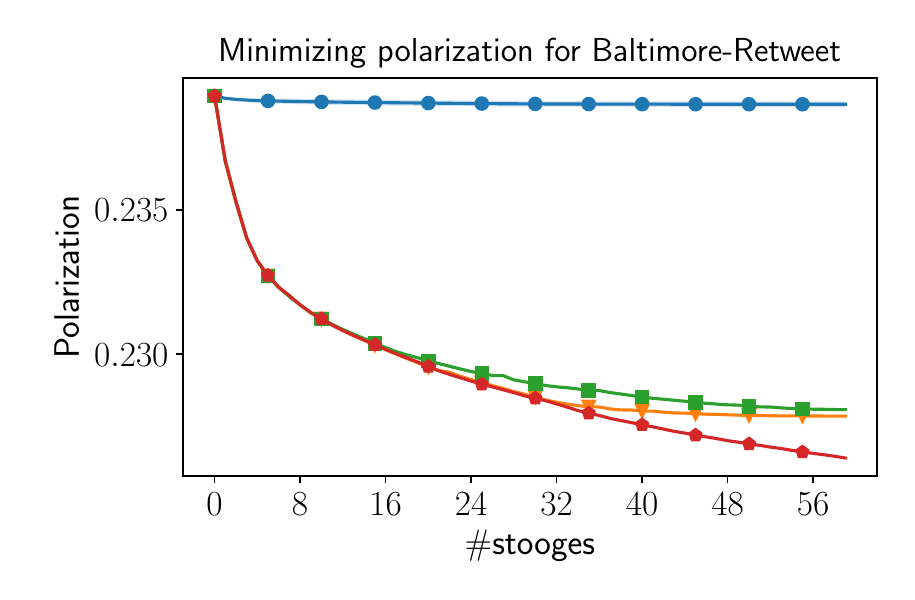}
    
    \includegraphics[width=\figwidth]{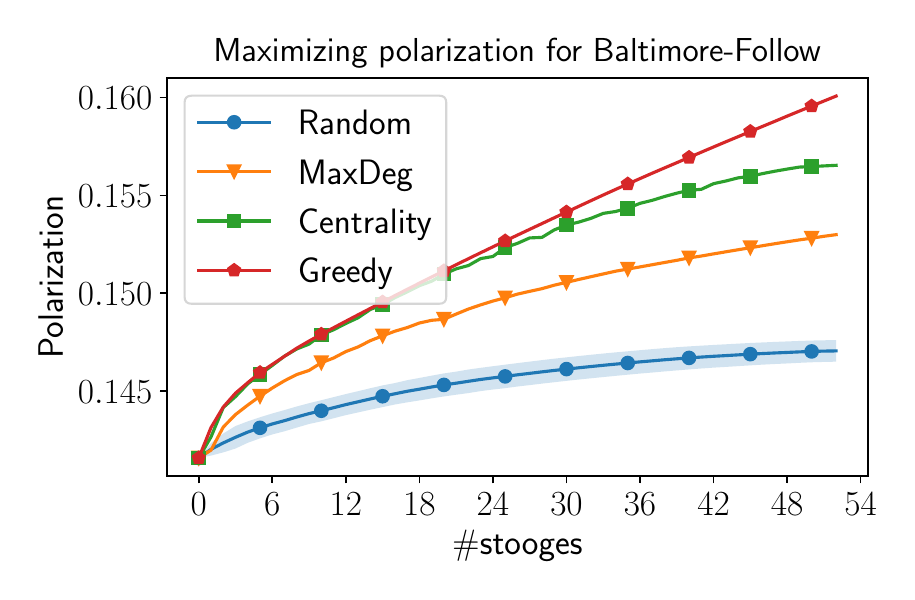}~
    \includegraphics[width=\figwidth]{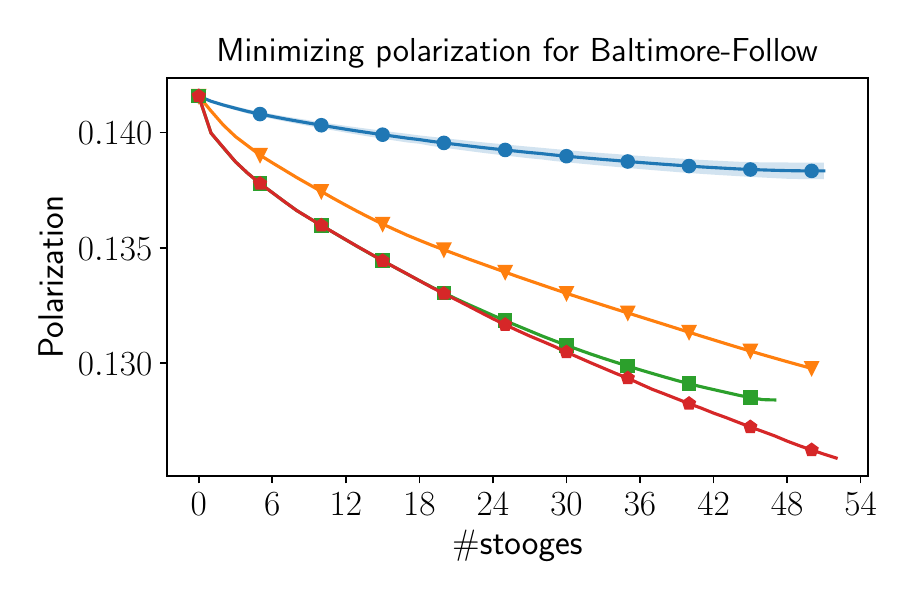}
    
    \includegraphics[width=\figwidth]{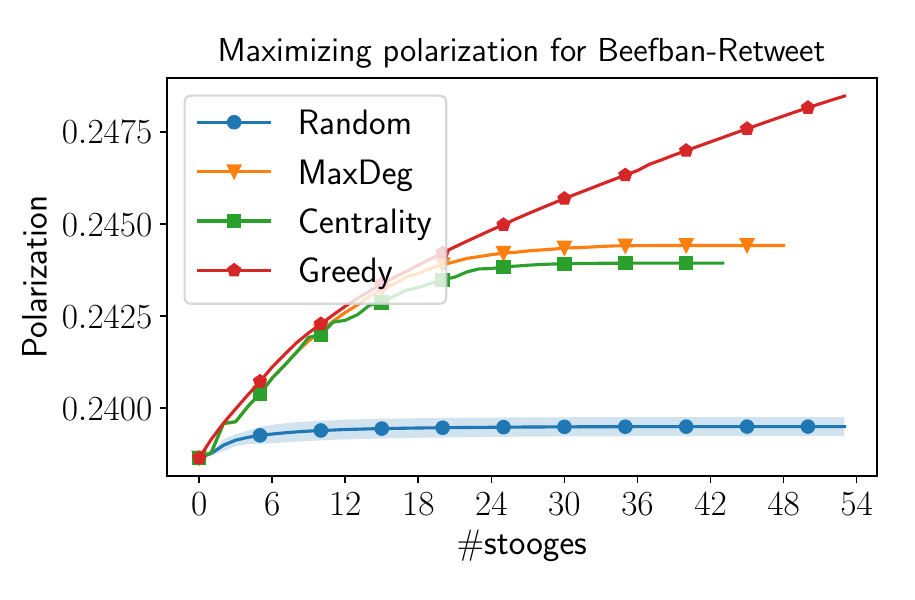}~
    \includegraphics[width=\figwidth]{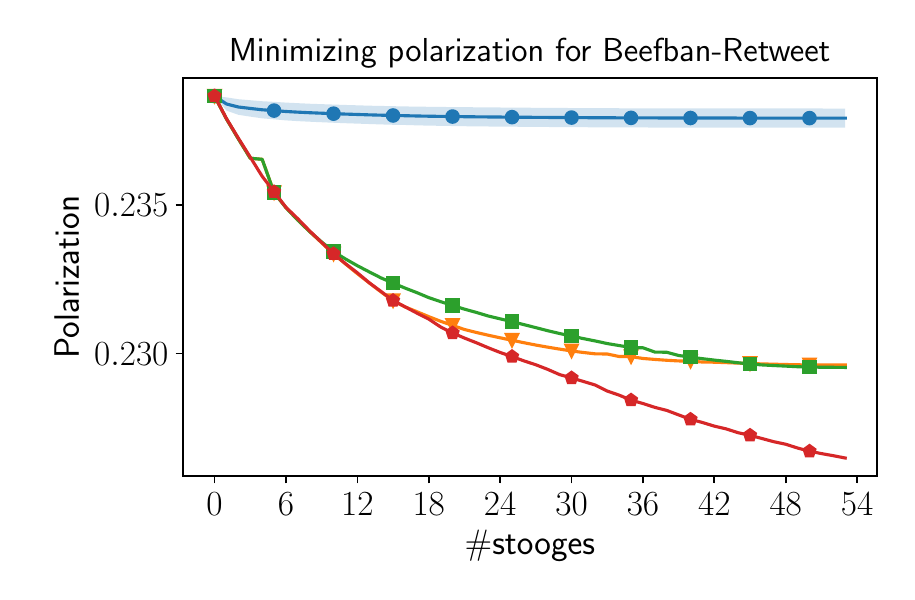}
    
    \caption{Maximization (left) and minimization (right) of the polarization on various
    real-world datasets, analogous to Figure~\ref{fig:synthetic-mse}}
    \label{fig:real-world-1-pol}
\end{figure}

\begin{figure}
    \centering
    
    \includegraphics[width=\figwidth]{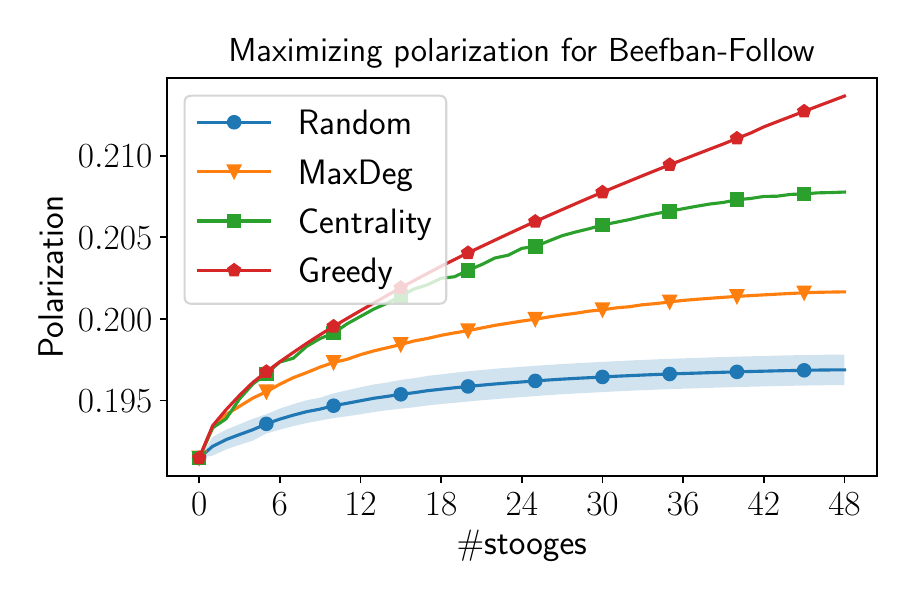}~
    \includegraphics[width=\figwidth]{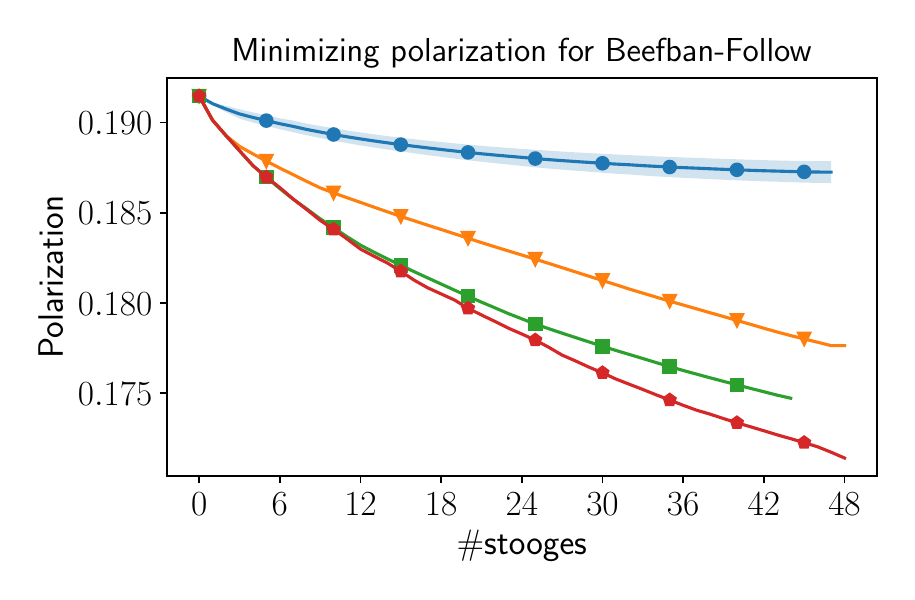}
    
    \includegraphics[width=\figwidth]{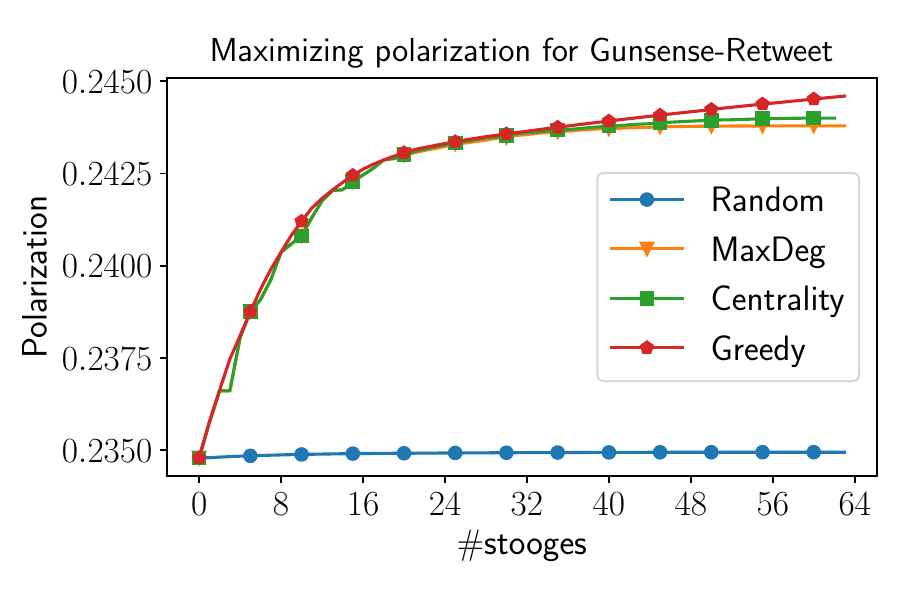}~
    \includegraphics[width=\figwidth]{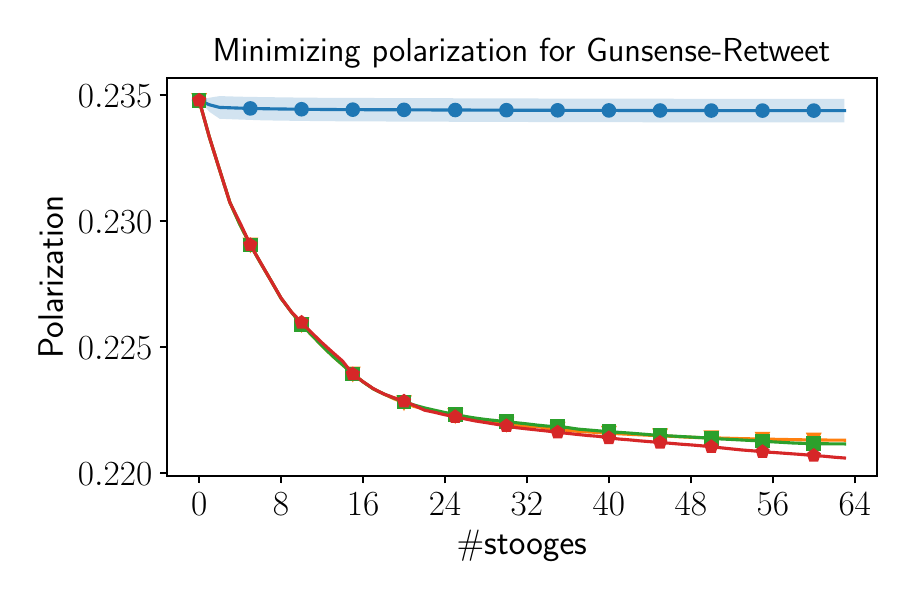}

    \includegraphics[width=\figwidth]{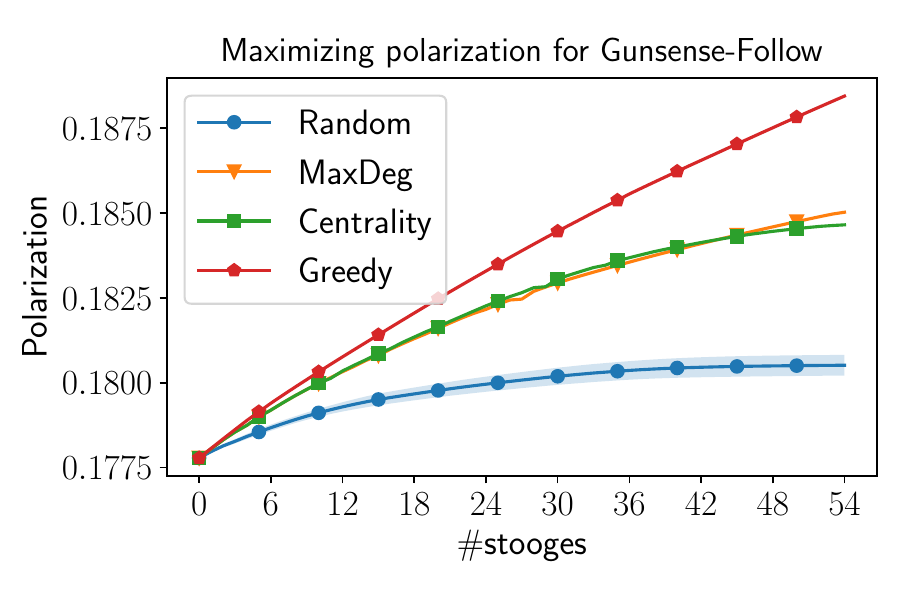}~
    \includegraphics[width=\figwidth]{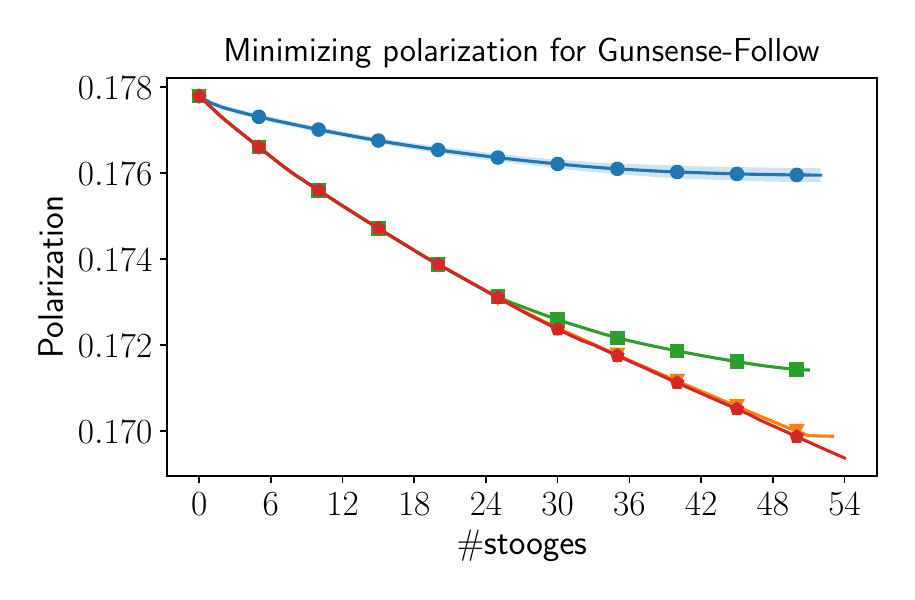}

    \includegraphics[width=\figwidth]{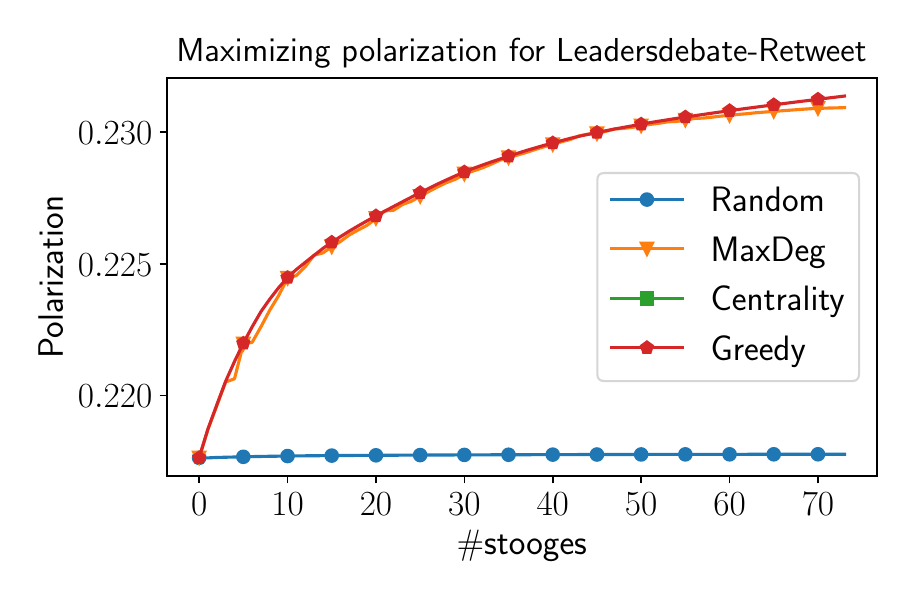}~
    \includegraphics[width=\figwidth]{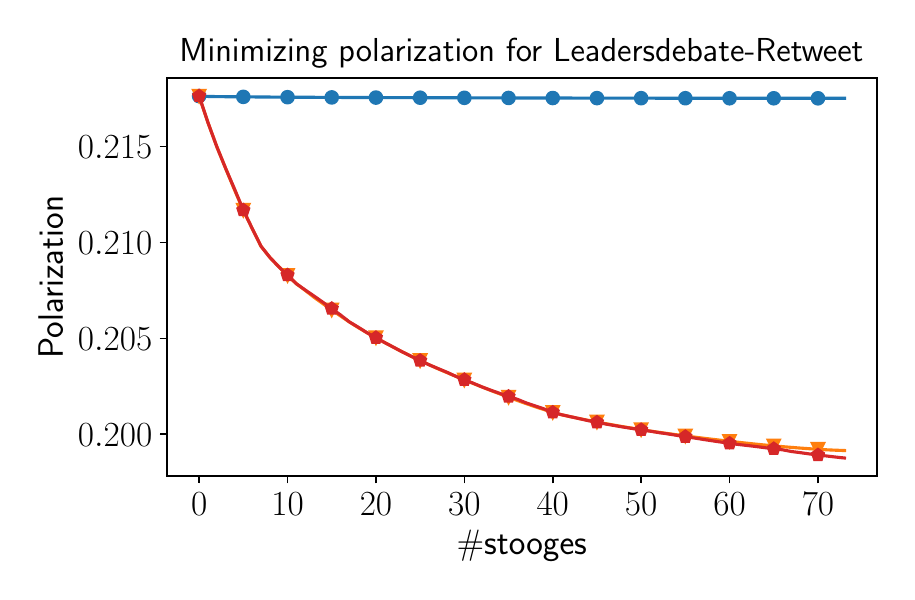}
    
    \caption{Maximization (left) and minimization (right) of the polarization on various
    real-world datasets, analogous to Figure~\ref{fig:synthetic-mse}}
    \label{fig:real-world-2-pol}
\end{figure}

\begin{figure}
    \centering
    
    \includegraphics[width=\figwidth]{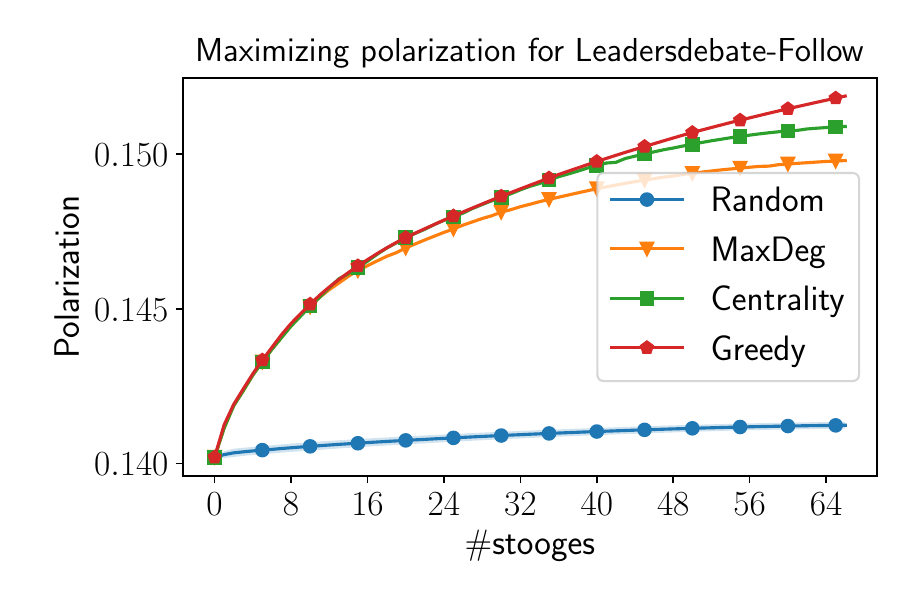}~
    \includegraphics[width=\figwidth]{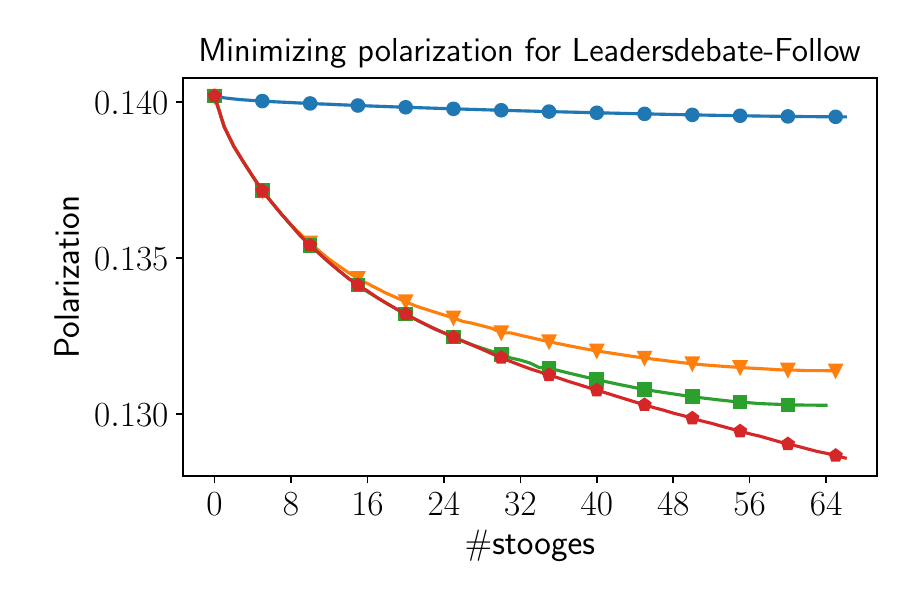}
    
    \includegraphics[width=\figwidth]{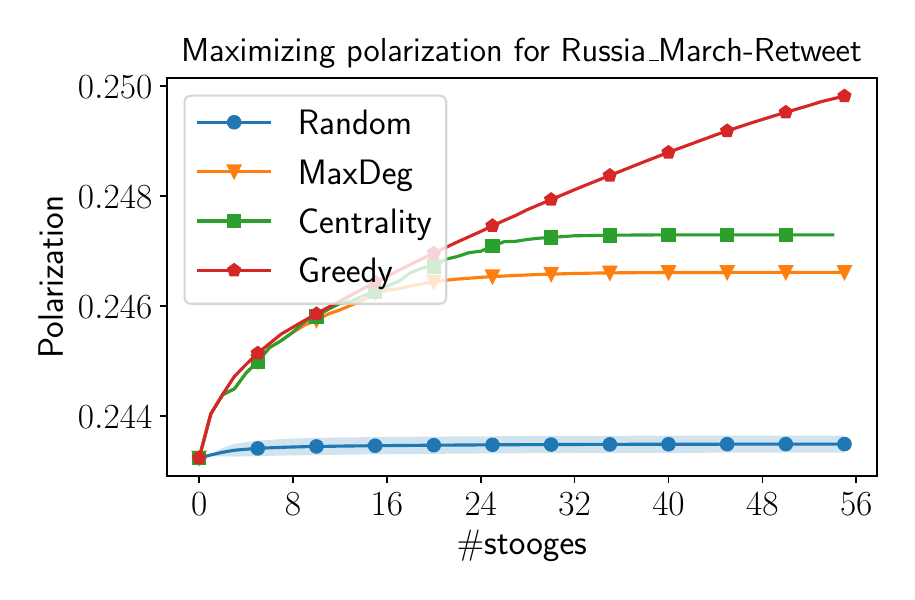}~
    \includegraphics[width=\figwidth]{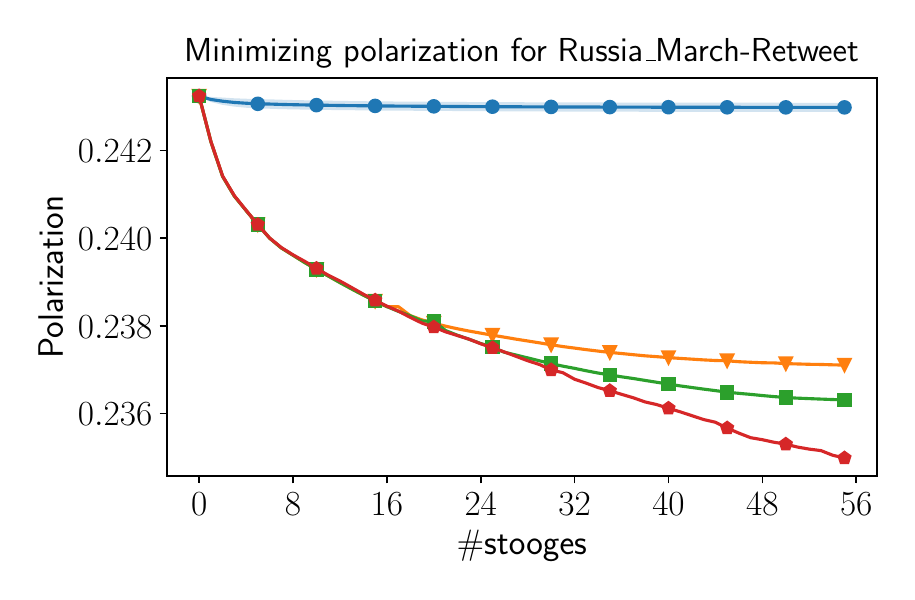}

    \includegraphics[width=\figwidth]{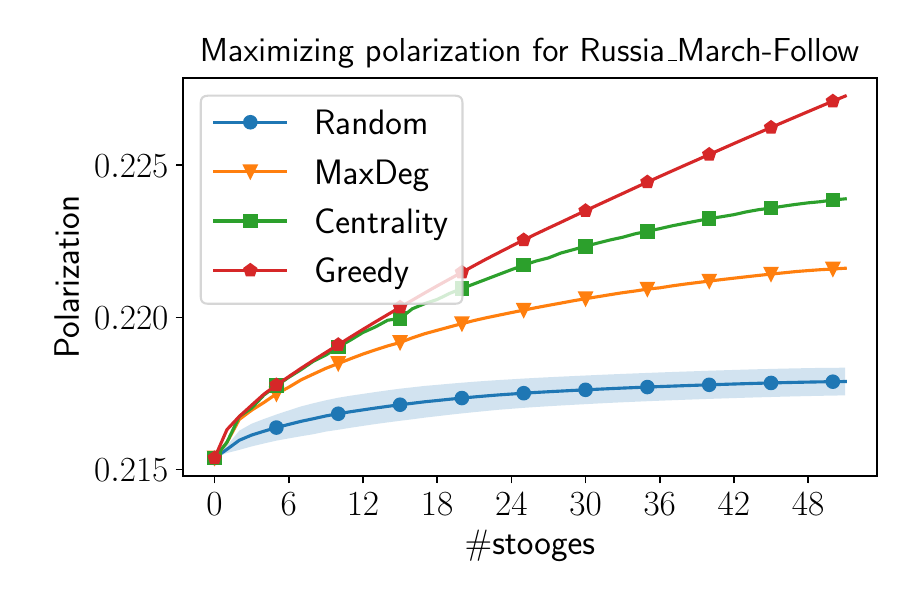}~
    \includegraphics[width=\figwidth]{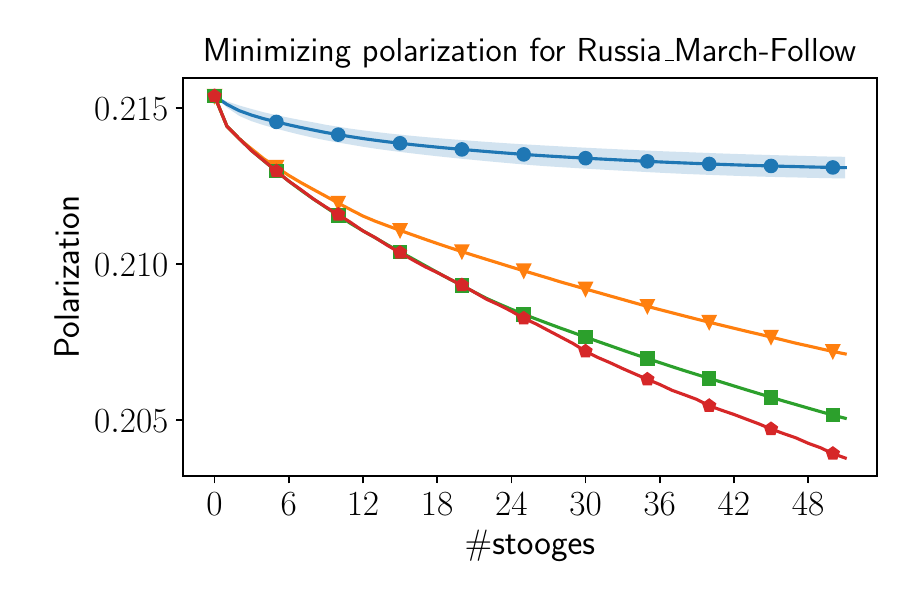}
    
    \caption{Maximization (left) and minimization (right) of the polarization on various
    real-world datasets, analogous to Figure~\ref{fig:synthetic-pol}}
    \label{fig:real-world-3-pol}
\end{figure}

\end{document}